\begin{document}

%%%%%%%  parameters to be filled in by copy-editor  %%%%%%%%%%

\setcounter{page}{63}
\publyear{24}
\papernumber{2167}
\volume{190}
\issue{2-4}

\finalVersionForARXIV
%%%\finalVersionForIOS

%%%%%%%%%%%%%%%%%%%%%%%%%%%%%%%%%%%%%%

\title{Waiting Nets: State Classes and Taxonomy}

\author{Lo\"ic H\'elou\"et\thanks{Address for correspondence:  University Rennes, Inria, CNRS, IRISA, France.}
 \\
University Rennes, Inria, CNRS, IRISA, France\\
 loic.helouet@inria.fr
 \and Pranay Agrawal\\
ENS-Paris-Saclay, France\\
pranay.agrawal@ens-paris-saclay.fr
}

\maketitle

\runninghead{L. H\'elou\"et and P. Agrawal}{Waiting Nets: State Classes and Taxonomy} %TODO optional, please use if title is longer than one line

\begin{abstract}
In time Petri nets (TPNs), time and control are tightly connected: time measurement for a transition starts only when all resources needed to fire it are available. Further, upper bounds on duration of enabledness can force transitions to fire (this is called {\em urgency}). For many systems, one wants to decouple control and time, i.e. start measuring time as soon as a part of the preset of a transition is filled, and fire it after some delay \underline{and} when all needed resources are available. This paper considers an extension of TPN called {\em waiting nets} that dissociates time measurement and control.
%Waiting Nets allows firing rules of the form "fire $t$ within $[\alpha, \beta]$ as soon as places $p_1,p_2$ are filled".
Their semantics allows time measurement to start with incomplete presets, and can ignore urgency when upper bounds of intervals are reached but all resources needed to fire are not yet available. Firing of a transition is then allowed as soon as missing resources are available. It is known that extending bounded TPNs with stopwatches leads to undecidability. Our extension is weaker, and we show how to compute a finite state class graph for bounded waiting nets, yielding decidability of reachability and coverability. We then compare expressiveness of waiting nets with that of other models w.r.t. timed language equivalence,
and show that they are strictly more expressive than TPNs.
\end{abstract}

\begin{keywords}
Time, Petri nets, State Classes.
\end{keywords}

\section{Introduction }
\label{sec:intoduction}

Time Petri nets (TPNs)~\cite{Merlin74} are an interesting model to specify cyber-physical systems. They allow for the specification of concurrent or sequential events (modeled as transitions occurrences), resources, time measurement, and urgency.
In TPNs, time constraints are modeled by attaching an interval $[\alpha_t,\beta_t]$ to every transition $t$. If $t$ has been enabled for at least $\alpha_t$ time units it {\em can} fire. If $t$ has been enabled for $\beta_t$ time units, it is {\em urgent}: time cannot elapse, and $t$ {\em must} either fire or be disabled. Urgency is an important feature of TPNs, as it allows for the modeling of strict deadlines, but gives them a huge expressive power. In their full generality, TPNs are Turing powerful. A consequence is that most properties that are decidable for Petri Nets~\cite{Esparza96} (coverability~\cite{Rackoff78}, reachability~\cite{Mayr81}, boundedness~\cite{Rackoff78}...) are undecidable for TPNs. Yet, for the class of bounded TPNs, reachability~\cite{Popova-Zeugmann91} and coverability are decidable. The decision procedure relies on a symbolic representation of states with {\em state classes} and then on the definition of abstract runs as paths in a so-called state class graph~\cite{BerthomieuD91,LimeR06}.

There are many variants of Petri nets with time. An example is {\em timed Petri nets} (TaPN)~\cite{Ramchandani74}, where tokens have an age, and time constraints are attached to arcs of the net. In TaPNs, a token whose age reaches the upper bound of constraints becomes useless. The semantics of TaPNs enjoys some monotonicity, and well-quasi-ordering techniques allow to solve coverability or boundedness problems~\cite{AN01,ReynierS09}. However, reachability of a given marking remains undecidable for TaPNs~\cite{Ruiz99}. Without any notion of urgency, TaPN cannot model delay expiration. In~\cite{AkshayGH16}, a model mixing TaPN and urgency is proposed, with decidable coverability, even for unbounded nets. We refer readers to~\cite{sofsem11} for a survey on TaPNs and their verification, and to~\cite{Bowden96} for a survey on different ways to introduce time in Petri nets.

Working with bounded models is enough for many cyber-physical systems. However, bounded TPNs suffer another drawback: time measurement and control are too tightly connected. In TPNs, time is measured by starting a new clock for every transition that becomes enabled. By doing so, measuring a duration for a transition $t$ starts only when all resources needed to fire $t$ are available. Hence, one cannot stop and restart a clock, nor start measuring  time while waiting for missing resources. To solve this problem,~\cite{BerthomieuLRV07} equips bounded TPNs with stopwatches. Nets are extended with read arcs, and the understanding of a read arc from a place $p$ to a transition $t$ is that when $p$ is filled, the clock attached to $t$ is frozen.
Extending bounded TPNs with stopwatches leads to undecidability of coverability, boundedness and reachability. This is not a surprise, as timed automata with stopwatches are already a highly undecidable model~\cite{CassezL00}. For similar reasons, time Petri nets with preemptable resources~\cite{BucciFSV04}, where time progress depends on the availability of resources cannot be formally verified.

This paper considers {\em waiting nets}, a new extension of TPN that decouples time measurement and control. Waiting nets distinguish between enabling of a transition and enabling of its firing, which allows rules of the form "start measuring time for $t$ as soon as $p$ is filled, and fire $t$ within $[\alpha, \beta]$ time units when $p$ and $q$ are filled". Syntactically, this model is a simple extension of TPNs: the model is built by adding a particular type of {\em control place} to TPNs. Waiting nets allow clocks of enabled transitions to reach their upper bounds, and wait for missing control to fire.
A first contribution of this paper is a formal definition of the semantics of this new model. An immediate result is that waiting nets are more expressive than TPNs, as TPNs are a simple syntactic restriction of waiting nets.
A former attempt called Timed Petri nets with Resets (TPNR) distinguishes some delayable transitions that can fire later than their upper bounds~\cite{ParrotBR21}. For bounded TPNR, reachability and TCTL model checking are decidable. However, delayable transitions are never urgent, and once delayed can only fire during a maximal step with another transition fired on time. Further, delayable transitions start measuring time as soon as their preset is filled, and hence do not allow decoupling of time and control as in waiting nets.

As a second contribution of this paper, we define state class graphs for waiting nets. We build on the work of~\cite{BerthomieuD91} to define domains, i.e. symbolic representations for the values of clocks that measure time elapsed since enabling of transitions. We then define a successor relation via transition firing among domains. This allows for the construction of a state class graph depicting reachable markings and domains of a waiting net. We show that the state class graphs of bounded waiting nets are finite, yielding decidability of reachability and coverability (which are PSPACE-complete). This is a particularly interesting result, as these properties are undecidable for {\em stopwatch Petri nets},  {\em even in the bounded case}. The table~\ref{complexity_table} summarizes known decidability results for reachability, coverability and boundedness problems for time variants of Petri nets, including the new results for waiting nets proved in this paper. Another interesting result shown in this paper is that a state class may have more that one successor via a transition. This shows that state class graphs of waiting nets are not deterministic, and that TPNs and waiting nets are hence different models.

\begin{table}[h]
\begin{center}
\caption{Decidability and complexity results for time(d) variants of Petri nets.}
\label{complexity_table}%\vspace*{-1mm}
\scalebox{0.8}{
\begin{tabular}{|l|c|c|c|}
\hline
                         & Reachability & coverability & Boundedness \\
\hline
%(untimed) Petri Nets     &      Decidable~\cite{Mayr81}        &  Decidable~\cite{Rackoff78}           &  Decidable~\cite{Rackoff78}           \\
\hline
Time Petri Nets          &  Undecidable~\cite{Jone77}            &   Undecidable~\cite{Jone77}           &              Undecidable~\cite{Jone77} \\
(bounded)                & Decidable             &  Decidable            &  --            \\
\hline
Timed Petri nets          & Undecidable~\cite{Ruiz99}             &  Decidable~\cite{Frutos-EscrigRA00,AN01} &     Decidable~\cite{Frutos-EscrigRA00}         \\
(bounded)         &     Decidable         &     Decidable         &         --     \\
\hline
Restricted Urgency         &    Undecidable~\cite{AkshayGH16}          &  Decidable~\cite{AkshayGH16}            & Decidable~\cite{AkshayGH16}             \\
(bounded)         &          Decidable    &  Decidable            &       --       \\
\hline
Stopwatch Petri nets          &  Undecidable~\cite{BerthomieuLRV07}  &   Undecidable~\cite{BerthomieuLRV07} &   Undecidable~\cite{BerthomieuLRV07}           \\
(bounded)         &    Undecidable~\cite{BerthomieuLRV07}         &     Undecidable~\cite{BerthomieuLRV07}         &      --        \\
\hline
TPNR          &  Undecidable~\cite{ParrotBR21}  &   Undecidable~\cite{ParrotBR21} &   Undecidable~\cite{ParrotBR21}           \\
(bounded)         &    Decidable~\cite{ParrotBR21}         &    Decidable~\cite{ParrotBR21}         &      --        \\
\hline
Waiting Nets & Undecidable (Rmk.~\ref{remark_inheritance})& Undecidable (Rmk.~\ref{remark_inheritance})& Undecidable (Rmk.~\ref{remark_inheritance})\\
(bounded)   & PSPACE-Complete (Thm.~\ref{Cor_reachability_coverability}) & PSPACE-Complete (Thm.~\ref{Cor_reachability_coverability}) & --\\
\hline
\end{tabular}
}
\end{center}
\end{table}

Our last contribution is a study of the expressiveness of waiting nets w.r.t. timed language equivalence. We compare the timed languages generated by timed automata, TPNs, waiting nets. We consider variants of these models with injective or non-injective labelings, and with $\epsilon-$transitions. Interestingly, the expressiveness of bounded waiting nets lays between that of bounded TPNs and timed automata.

This paper is an extended version of an article originally published in~\cite{HelouetA22}.
The rest of the paper is organized as follows: we recall the formal background needed later in the paper in Section~\ref{sec:preliminaries}. We recall in particular definitions of clock constraints, timed automata and timed languages.
We define the syntax and semantics of waiting nets in Section~\ref{sec:waiting}.
Section~\ref{sec:reachability} shows a state class graph construction for waiting nets and proves finiteness of the set of domains of waiting nets. Finiteness of the obtained state class graph is then used to show PSPACE completeness of reachability and coverability for bounded waiting nets. Section~\ref{sec:expressiveness} compares the expressive power of waiting nets with that of other models and with other timed variants of Petri nets before conclusion. For readability, several technical proofs of Section~\ref{sec:reachability} are only sketched, but are provided in Appendix.

\section{Preliminaries}
\label{sec:preliminaries}

%Before defining our models, let us introduce the notations needed to define firing domains of transitions in a net, timed languages, that are used later to compare expressiveness of models, and timed automata, the standard model to recognize timed languages~\cite{AlurD94}.
We denote by $\mathbb R^{\geq 0}$ the set of non-negative real values, and by $\mathbb Q$ the set of rational numbers.
A { \em rational interval} $[ \alpha,\beta ]$ is the set of values between a lower bound $\alpha\in \mathbb Q$ and an upper bound $\beta\in \mathbb Q$. We also consider intervals without upper bounds of the form $[\alpha,\infty)$, to define
%sets of
values that are greater than or equal to $\alpha$.

\medskip
A {\em clock} is a variable $x$ taking values in $\mathbb R^{\geq 0}$.
A variable $x_t$ will be used to measure the time elapsed since transition $t$ of a net was last newly enabled.
Let $X$ be a set of clocks. A {\em valuation}
for $X$ is a map $v:X\to \mathbb R^{\geq 0}$ that associates a positive or zero real value $v(x)$ to every variable $x\in X$.
Intervals alone are not sufficient to define the domains of clock valuations met with TPNs and timed automata.
An {\em atomic constraint} on $X$ is an inequality of the form $a \leq x$, $x\leq b$, $a\leq x-y$ or $x-y \leq b$ where $a,b\in \mathbb Q$ and $x,y\in X$. A {\em constraint} is a conjunction of atomic constraints.
We denote by $Cons(X)$ the set of constraints over clocks in $X$.
We will say that a valuation $v$ satisfies a constraint $\phi$, and write $v\models \phi$ iff replacing $x$ by $v(x)$ in $\phi$ yields a tautology.
A constraint $\phi$ is {\em satisfiable} iff there exists a valuation $v$ for $X$ such that $v\models \phi$. Constraints over real-valued variables can be efficiently encoded with Difference Bound Matrices (DBMs) and their satisfiability checked
%in PTIME~\ref{Kachian} and even
in $O(n^3)$~\cite{Dill89}. For completeness, we give a formal definition of DBMs in Appendix~\ref{DBM}. We also show in Appendix~\ref{appendix_Floyd_Warshall} that checking satisfiability of a constraint amounts to detecting negative cycles in a weighted graph, which can be checked using a variant of the Floyd-Warshall algorithm.
The {\em domain} specified by a constraint $\phi$ is the (possibly infinite) set of valuations that satisfy $\phi$.

Given an alphabet $\Sigma$, a {\em timed word} is an element of $(\Sigma\times \mathbb R^+)^*$ of the form $w=(\sigma_1,d_1)(\sigma_2,d_2)\dots $ such that $d_i\leq d_{i+1}$. A timed language is a set of timed words. Timed automata~\cite{AlurD94} are frequently used to recognize timed languages. In the last section of this paper, we compare expressiveness of several variants of time(d) Petri nets and timed automata. For completeness, we recall the definition and semantics of timed automata.

\begin{definition}[Timed Automaton]
A {\em Timed Automaton} $\ta$ is a tuple $\ta=(L, \ell_0, X,\Sigma, Inv, E, F)$, where $L$ is a set of locations, $\ell_0\in L$ is the initial location, $X$ is a set of clocks, $\Sigma$ is an alphabet, $Inv : L \to Cons(X)$ is a map associating an invariant to every location. The set of states $F\subseteq L$ is a set of final locations, and $E$ is a set of edges. Every edge is of the form $(\ell,g,\sigma,R, \ell') \in L\times Cons(X) \times \Sigma \times 2^X\times L$.
\end{definition}

Intuitively, the semantics of a timed automaton allows elapsing time in a location $\ell$ (in which case clocks valuations grow uniformly) if this does not result in a valuation that violates $Inv(\ell)$, or firing a discrete transition $(\ell,g,\sigma,R, \ell')$ from location $\ell$ with clock valuation $v$ if $v$ satisfies guard $g$, and the valuation $v'$ obtained by resetting all clocks in $R$ to $0$ satisfies $Inv(\ell')$. One can notice that invariants can prevent firing a transition. Every run of a timed automaton starts from $(\ell_0,v_0)$, where $v_0$ is the valuation that assigns value $0$ to every clock in $X$.

\medskip
Let $\ta=(L, \ell_0, X,\Sigma, Inv, E, F)$ be a timed automaton.
A configuration of $\ta$ is a pair $(\ell,v)$ where $\ell\in L$ is a location, and $v$ a valuation of clocks in $X$.
Let $v_{R:=0}$ denote a valuation $v'$ such that $v'(x)=0$ if $x\in R$ and $v'(x)=v(x)$ otherwise.
A {\em discrete move} via transition $e=(\ell,g_e,\sigma_e,R, \ell')$ is allowed from $(\ell,v)$ iff $v\models g_e$ (the guard of the transition is satisfied) and $v_{R:=0}\models Inv(\ell')$. We will denote such moves by $(\ell,v)\overset{e}\longrightarrow (\ell',v')$.
Let $d\in \mathbb R^+$, and let $v+d$ denote a valuation such that $v+d(x) = v(x)+d$ for every $x\in X$.
A {\em timed move} of $d$ time units is allowed if $v+d'\models Inv(\ell)$ for every real value $d'\leq d$. We will denote such moves by $(\ell,v)\overset{d}\longrightarrow (\ell',v+d)$.

A {\em run} of $\ta$ is a sequence of discrete and timed moves that starts from configuration  $(\ell_0,v_0)$, where $v_0$ is the valuation that associates value $0$ to every clock in $X$. A run is accepting if it ends in a final location.
Without loss of generality, we will assume that runs of timed automata alternate timed and discrete moves (possibly of duration $0$). A run is hence a sequence of the form $\rho = (\ell_0,v_0) \overset{d_0}\longrightarrow (\ell,v_0+d_0) \overset{e_0}\longrightarrow (\ell_1,v_1)\dots$, where each $d_i\in \mathbb R$ is a duration, and each $e_i$ is a transition. The {\em timed word} associated with run $\rho$ is the word $w_\rho=(\sigma_{e_0},dt_0) \dots (\sigma_{e_i},dt_i)\dots$ where $dt_i =\sum_{k=0}^{k=i-1} d_k$. The {\em (timed) language} recognized by $\ta$ is denoted $\lang(\ta)$, and is the set of timed words associated with accepting runs of $\ta$.
The definition of timed automata can be easily extended to include silent transitions. Timed automata then contain transitions of the form $e=(\ell, g_e,\epsilon,R_e,\ell')$ labeled by an unobservable letter denoted $\epsilon$, and are hence automata over an alphabet $\Sigma \cup \epsilon$. In this case, the word recognized along a run $\rho$ is the projection of $w_\rho$ on pairs $(\sigma_{e_i},dt_i)$ such that $\sigma_{e_i}\neq \epsilon$.

In the rest of the paper, we will denote by $TA$ the class of timed automata. We will be in particular interested by the subclass $TA(\leq,\geq)$ in which guards are conjunctions of atomic constraints of the form $x \geq c$ and invariants are conjunctions of atomic constraints of the form $x \leq c$. Several translations from TPNs to TAs have been proposed (see for instance~\cite{CassezR06,LimeR06}). In particular, the solution of~\cite{LimeR06} uses the state class graph of a TPN to build a time-bisimilar timed automaton in class $TA(\leq,\geq)$. This shows that one needs not the whole expressive power of timed automata to encode timed languages recognized by TPNs.

\section{Waiting nets}
\label{sec:waiting}

TPNs are a powerful model: they can be used to encode a two-counter machine (see for instance~\cite{ReynierS09} for an encoding of counter machines), and can hence simulate the semantics of many other formal models. A counterpart to this expressiveness is that most problems (reachability, coverability, boundedness, verification of temporal logics...) are undecidable.
Decidability is easily recovered when considering the class of bounded TPNs. Indeed, for bounded TPNs, one can compute a finite symbolic model called a state class graph~\cite{BerthomieuD91}, in which timing information is symbolically represented by firing domains. For many applications, working with bounded resources is sufficient. However, TPNs do not distinguish between places that represent control (the "state" of a system), and those that represent resources: transitions are enabled when \underline{all} places in their preset are filled. A consequence is that one cannot measure time spent in a control state, when some resources are missing.

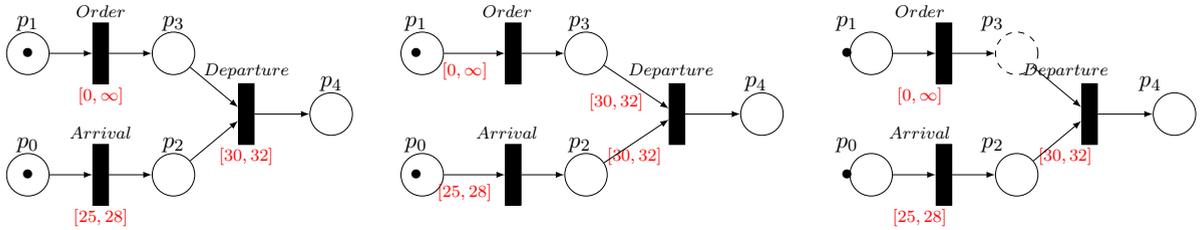
\begin{figure}[h]
\scalebox{0.8}{
\begin{tikzpicture}
%%%%%%%%%%%%%%%%%%%%%%%%%%%%%%%%%%%%%%%%%%%%%%%%%%%%%%%%%%%%%%%%%%%%%%%%
%
\begin{scope}
\node [draw,circle,minimum size=0.7cm] (p0) at (0.3,0) {$\bullet$};
\node [] (lp0) at (0.3,0.5) {$p_0$};

\node [draw,circle,minimum size=0.7cm] (p1) at (0.3,2) {$\bullet$};
\node [] (lp1) at (0.3,2.5) {$p_1$};

\node [draw,circle,minimum size=0.7cm] (p2) at (2.7,0) {};
\node [] (lp2) at (2.7,0.5) {$p_2$};

\node [draw,circle,minimum size=0.7cm] (p3) at (2.7,2) {};
\node [] (lp3) at (2.7,2.5) {$p_3$};

\node [draw,circle,minimum size=0.7cm] (p4) at (5.3,1) {};
\node [] (lp4) at (5.3,1.5) {$p_4$};

\node [draw,rectangle,fill=black,minimum height=1cm] (t0) at (1.5,0){};
\node [] (lt0) at (1.5,0.7) {\scriptsize$Arrival$};
\node [color=red] (it0) at (1.5,-0.7) {\scriptsize$[25,28]$};

\node [draw,rectangle,fill=black,minimum height=1cm] (t1) at (1.5,2){};
\node [] (lt1) at (1.5,2.7) {\scriptsize$Order$};
\node [color=red] (it1) at (1.5,1.3) {\scriptsize$[0,\infty]$};

\node [draw,rectangle,fill=black,minimum height=1cm] (t2) at (3.9,1){};
\node [] (lt2) at (3.9,1.7) {\scriptsize$Departure$};
\node [color=red] (it1) at (3.9,0.3) {\scriptsize$[30,32]$};

\draw [-latex] (p0) edge  (t0){};
\draw [-latex] (p1) edge  (t1){};
\draw [-latex] (t0) edge  (p2){};
\draw [-latex] (t1) edge  (p3){};
\draw [-latex] (p2) edge  (t2){};
\draw [-latex] (p3) edge  (t2){};

\draw [-latex] (t2) edge  (p4){};
%\node [] (laba) at (1.5,-1.5) {$a)$};
\end{scope}\hspace*{-1mm}
\begin{scope}[xshift=6.8cm]
\node [draw,circle,minimum size=0.7cm] (p0) at (0,0) {$\bullet$};
\node [] (lp0) at (0,0.5) {$p_0$};

\node [draw,circle,minimum size=0.7cm] (p1) at (0,2) {$\bullet$};
\node [] (lp1) at (0,2.5) {$p_1$};

\node [draw,circle,minimum size=0.7cm] (p2) at (2.7,0) {};
\node [] (lp2) at (2.7,0.5) {$p_2$};

\node [draw,circle,minimum size=0.7cm] (p3) at (2.7,2) {};
\node [] (lp3) at (2.7,2.5) {$p_3$};

\node [draw,circle,minimum size=0.7cm] (p4) at (5.6,1) {};
\node [] (lp4) at (5.6,1.5) {$p_4$};

\node [draw,rectangle,fill=black,minimum height=1cm] (t0) at (1.5,0){};
\node [] (lt0) at (1.5,0.7) {\scriptsize$Arrival$};
\node [color=red] (it0) at (0.8,-0.3) {\scriptsize$[25,28]$};

\node [draw,rectangle,fill=black,minimum height=1cm] (t1) at (1.5,2){};
\node [] (lt1) at (1.5,2.7) {\scriptsize$Order$};
\node [color=red] (it1) at (0.8,1.7) {\scriptsize$[0,\infty]$};

\node [draw,rectangle,fill=black,minimum height=1cm] (t2) at (4.2,1){};
\node [] (lt2) at (4.2,1.7) {\scriptsize$Departure$};
\node [color=red] (it2) at (3.3,1.2) {\scriptsize$[30,32]$};
\node [color=red] (it3) at (3.6,0.3) {\scriptsize$[30,32]$};

\draw [-latex] (p0) edge  (t0){};
\draw [-latex] (p1) edge  (t1){};
\draw [-latex] (t0) edge  (p2){};
\draw [-latex] (t1) edge  (p3){};
\draw [-latex] (p2) edge  (t2){};
\draw [-latex] (p3) edge  (t2){};

\draw [-latex] (t2) edge  (p4){};
%\node [] (laba) at (1.5,-1.5) {$a)$};
\end{scope}\hspace*{-3mm}
\begin{scope}[xshift=13.9cm]

\node [draw,circle,minimum size=0.7cm] (p0) at (0.3,0) {$\bullet$};
\node [] (lp0) at (0.3,0.5) {$p_0$};

\node [draw,circle,minimum size=0.7cm] (p1) at (0.3,2) {$\bullet$};
\node [] (lp1) at (0.3,2.5) {$p_1$};

\node [draw,circle,minimum size=0.7cm] (p2) at (2.7,0) {};
\node [] (lp2) at (2.7,0.5) {$p_2$};

\node [draw,dashed, circle,minimum size=0.7cm] (p3) at (2.7,2) {};
\node [] (lp3) at (2.7,2.5) {$p_3$};

\node [draw,circle,minimum size=0.7cm] (p4) at (5.3,1) {};
\node [] (lp4) at (5.3,1.5) {$p_4$};

\node [draw,rectangle,fill=black,minimum height=1cm] (t0) at (1.5,0){};
\node [] (lt0) at (1.5,0.7) {\scriptsize$Arrival$};
\node [color=red] (it0) at (1.5,-0.7) {\scriptsize$[25,28]$};

\node [draw,rectangle,fill=black,minimum height=1cm] (t1) at (1.5,2){};
\node [] (lt1) at (1.5,2.7) {\scriptsize$Order$};
\node [color=red] (it1) at (1.5,1.3) {\scriptsize$[0,\infty]$};

\node [draw,rectangle,fill=black,minimum height=1cm] (t2) at (3.9,1){};
\node [] (lt2) at (3.9,1.7) {\scriptsize$Departure$};
\node [color=red] (it1) at (3.9,0.3) {\scriptsize$[30,32]$};

\draw [-latex] (p0) edge  (t0){};
\draw [-latex] (p1) edge  (t1){};
\draw [-latex] (t0) edge  (p2){};
\draw [-latex] (t1) edge  (p3){};
\draw [-latex] (p2) edge  (t2){};
\draw [-latex] (p3) edge  (t2){};

\draw [-latex] (t2) edge  (p4){};
%\node [] (laba) at (1.5,-1.5) {$a)$};

\end{scope}
\end{tikzpicture}
}\vspace*{-3mm}
\caption{A TPN $a)$ a Timed Petri net $b)$, and a Waiting net $c)$}
\label{Example_Train}\vspace*{-1mm}
\end{figure}

Consider the examples of Figure~\ref{Example_Train}, that are designed to represent an arrival of a train followed by a departure. The arrival in a station is modeled by transition $Arrival$, that should occur between $25$ and $28$ minutes after the beginning of a run. The station is modeled by place $p_2$, and the departure of the train by transition $Departure$. A train can leave a station only if a departure order has been sent, which is modeled by transition $Order$. Assume that one wants to implement a scenario of the form "the train leaves the station between 30 and 32 minutes after its arrival if it has received a departure order".

\medskip
Consider the TPN of Figure~\ref{Example_Train}-a). The time constraint attached to $Departure$ is an interval of the form $[30,32]$, which is the only way to impose a precise timing in TPNs. This TPN does not implement our scenario, but rather behaviors in which the train leaves the station between 30 and 32 minutes after the instant when it is in station {\bf and} a departure order is received. This means that a train may spend $32 + d$ minutes in station, if the order is released $d$ minutes after its arrival.
Now, consider the second net in Figure~\ref{Example_Train}-b). It is a timed Petri net, i.e. a variant of nets where tokens are given an age (the durations that have elapsed since they were created in a place) and flow relations constrain the age of consumed tokens. In a nutshell, the semantics of timed Petri nets allow consumption of tokens which age satisfy the timing constraints of flow relations. Tokens that become too old to satisfy a constraint are never consumed and can be ignored. In the net of Figure~\ref{Example_Train}-b), a token can move from place $p_1$ to place $p_3$ at any time (i.e. a departure order can be issued at any instant), and a second token can move from place $p_0$ to place $p_2$ when its age is between $25$ and $28$ minutes, representing arrival of a train within a delay ranging from 25 to 28 minutes.  Following the semantics of timed Petri nets, one can simply "forget" the token in place $p_2$ if the delay to produce a token in place $p_3$ exceeds $28+32$. In other words, in this model, one can forget a train arrived in station if the departure order arrives more than 32 minutes after its $Arrival$. This is not the scenario we want to model. Further, there is no timed Petri net that can specify this example, because timed Petri nets do not have a notion of urgency, and hence cannot enforce occurrence of an action after a predetermined delay (in our case a departure of a train at latest after 32 minutes).

\medskip
We propose an extension of TPNs called {\em waiting nets} (WTPNs for short), that decouples control and resources during time measurement. We consider two types of places: {\em standard} places, and {\em control} places, with the following functions:
Time measurement for a transition $t$ starts as soon as $t$ has enough tokens in the standard places of its preset. Then, $t$ can fire if its clock value lays in its timing interval, {\bf and} if it has enough tokens in the control places of its preset.
To illustrate this extension, consider the example of Figure~\ref{Example_Train}-c). This example is similar to the net of Figure~\ref{Example_Train}-a), but with $p_3$ replaced by a control place (symbolized by a dashed circle).
Let us detail the changes brought by control places.
Let transition $Arrival$ fire at a date $d$ between $25$ and $28$ minutes, filling place $p_2$. This immediately enables transition $Departure$, that is forced to fire at latest $32$ time units later if transition $Order$ is fired at a date smaller than $d+32$, or exactly at the date of firing of transition $Order$ if this transition is fired more than $32$ minutes after date $d$. This net models a situation where a train can arrive at date $d \in [25, 28]$ minutes, and leave as soon as at least 30 minutes have elapsed since arrival, and a departure order was given. If the departure order is received at a date smaller than $d+32$, then the train can leave at latest at date $d+32$. If the departure order is given at a date $d'>d+32$, then the train leaves at date $d'$, i.e. when the departure order is given. This is exactly the scenario described above.
We can now define waiting nets formally as follows :

\begin{definition}
\label{def:WTPN}
A {\em waiting net} is a tuple $\mathcal{W} \!=\!\! \big( P,C, T, {^\bullet}\!(), ()^\bullet\! , \alpha, \beta,\lambda, (M_0.N_0) \big)$, where
  \begin{itemize}
      \item $P$ is a finite set of standard places, $C$ is finite set of control places, such that $P \cup C \neq \emptyset$ and $P \cap C = \emptyset$. A {\em marking} $M.N$ is a pair of maps $M: P\to \mathbb N$, $N:C\to \mathbb N$ that associate an integral number of tokens respectively to standard and control places.

      \item $T$ is a finite set of transitions. Every $t\in T$ has a label $\lab(t)\in \Sigma \cup\epsilon$,
      \item $\preset{} \in  (\mathbb{N}^{P \cup C})^T$ is the backward incidence function,
			$\postset{} \in  (\mathbb{N}^{P \cup C})^T$ is the forward incidence function,
			%which also associates a marking to every transition.
      \item $(M_0.N_0) \in \mathbb{N}^{P \cup C}$ is the initial marking of the net,
      \item $\alpha: T\to \mathbb{Q}^+$ and $\beta:  T \to \mathbb{Q}^+ \cup {\infty}$ are functions giving for each transition respectively its earliest and latest firing times. For every transition $t$, we have  $\alpha(t) \leq \beta(t)$.\smallskip
  \end{itemize}
\end{definition}

The labeling map $\lab$ is {\em injective} if $\lab(t)=\lab(t') \Leftrightarrow t=t'$, and may allow labeling transitions with a silent action label denoted $\epsilon$. To differentiate standard and control places in the preset of a transition, we will denote by $\spreset{t}$ the restriction of $\preset{t}$ to standard places, and by $\cpreset{}$ the restriction of $\preset{}$ to control places. We will write $M(p) = k$ (resp. $N(c)=k$) to denote the fact that standard place $p\in P$ (resp. control place $c\in C$) contains $k$ tokens.
Given two markings $M.N$ and $M'.N'$ we will say that $M.N$ is greater than $M'.N'$ and write $M.N \geq M'.N'$ iff
$\forall p\in P, M(p)\geq M'(p)$ and $\forall c\in C, N(c) \geq N'(C)$.

\begin{figure}[!ht]
\vspace*{-1mm}
\begin{center}
\scalebox{0.82}{
\begin{tikzpicture}
%%%%%%%%%%%%%%%%%%%%%%%%%%%%%%%%%%%%%%%%%%%%%%%%%%%%%%%%%%%%%%%%%%%%%%%%
\begin{scope}
% Stating place
\node [draw,circle,minimum size=0.7cm] (p0) at (0,-0.5) {$\bullet$};
\node [] (lp0) at (-0.7,-0.5) {$p_0$};

\node [draw,rectangle,fill=black,minimum width=1cm] (t0) at (0,-1.5){};
\node [] (lt0) at (-0.8,-1.5) {\scriptsize$Ad$};
\node [color=red] (it0) at (0.9,-1.5) {\scriptsize$[0,\infty]$};

\node [draw,circle,minimum size=0.7cm] (p1) at (-1,-3) {};
\node [] (lp1) at (-1.6,-3) {$p_1$};

\node [draw,circle,minimum size=0.7cm] (p2) at (3,-2) {};
\node [] (lp2) at (3.6,-1.8) {$p_2$};

\draw [-latex] (p0) edge (t0){};
\draw [-latex] (t0) edge (p1){};
\draw [-latex, bend right] (t0) edge (p2){};

\node [draw,rectangle,fill=black,minimum width=1cm] (t1) at (3,-3.2){};
\node [] (lt0) at (2.2,-3.2) {\scriptsize$Cp$};
\node [color=red] (it0) at (3.8,-3.2) {\scriptsize$[1,4]$};

\node [draw,dashed, circle,minimum size=0.7cm] (p3) at (1,-3) {};
\node [] (lp3) at (1.6,-3) {$p_3$};

\draw [-latex] (p2) edge (t1){};
\draw [-latex, dashed, bend left] (t1) edge (p3){};

\node [draw,rectangle,fill=black,minimum width=1cm] (t2) at (0,-4){};
\node [] (lt0) at (-0.7,-4) {\scriptsize$So$};
\node [color=red] (it0) at (0.9,-4) {\scriptsize$[0,3]$};

\node [draw,rectangle,fill=black,minimum width=1cm] (t3) at (-2.5,-4){};
\node [] (lt0) at (-3.3,-4) {\scriptsize$No$};
\node [color=red] (it0) at (-1.6,-4) {\scriptsize$[0,8]$};

\draw [-latex] (p1) edge (t3){};
\draw [-latex] (p1) edge (t2){};

\draw [-latex, dashed] (p3) edge (t2){};

\node [draw,circle,minimum size=0.7cm] (p4) at (0,-5) {};
\node [] (lp2) at (-0.6,-5) {$p_4$};

\node [draw,circle,minimum size=0.7cm] (p5) at (-2.5,-5) {};
\node [] (lp2) at (-3.1,-5) {$p_5$};

\draw [-latex] (t2) edge (p4){};
\draw [-latex] (t3) edge (p5){};

%\node [] (lbgen) at (0,-5.7) {$a)$};

\end{scope}

%%%%%%%%%%%%%%%%%%%%%%%%%%%%%%%%%%%%%%%%%%%%%%%%%%%%%%%%%%
\begin{scope}[xshift = 8cm]
% Stating place
\node [draw,circle,minimum size=0.7cm] (p0) at (0,-0.5) {$\bullet$};
\node [] (lp0) at (-0.8,-0.5) {$p_0$};

\node [draw,rectangle,fill=black,minimum width=1cm] (t0) at (0,-1.5){};
\node [] (lt0) at (-0.8,-1.5) {\scriptsize$Ad$};
\node [color=red] (it0) at (0.9,-1.5) {\scriptsize$[0,\infty]$};

\node [draw,circle,minimum size=0.7cm] (p1) at (-1,-3) {};
\node [] (lp1) at (-1.6,-3) {$p_1$};

\node [draw,circle,minimum size=0.7cm] (p2) at (3,-2) {};
\node [] (lp2) at (3.6,-1.8) {$p_2$};

\draw [-latex] (p0) edge (t0){};
\draw [-latex] (t0) edge (p1){};
\draw [-latex, bend right] (t0) edge (p2){};

\node [draw,rectangle,fill=black,minimum width=1cm] (t1) at (3,-3.2){};
\node [] (lt0) at (2.2,-3.2) {\scriptsize$Cp$};
\node [color=red] (it0) at (3.8,-3.2) {\scriptsize$[1,4]$};

\node [draw,dashed, circle,minimum size=0.7cm] (p3) at (1,-3) {};
\node [] (lp3) at (1.6,-3) {$p_3$};

\node [draw, circle,minimum size=0.7cm] (p6) at (0,-3) {};
\node [] (lp6) at (0,-2.4) {$p_6$};

\draw [-latex] (p2) edge (t1){};
\draw [-latex, dashed, bend left] (t1) edge (p3){};

\node [draw,rectangle,fill=black,minimum width=1cm] (t2) at (-0.3,-4){};
\node [] (lt0) at (-1.1,-4) {\scriptsize$So$};
\node [color=red] (it0) at (0.6,-4) {\scriptsize$[0,3]$};

\node [draw,rectangle,fill=black,minimum width=1cm] (t3) at (-2.5,-4){};
\node [] (lt0) at (-3.3,-4) {\scriptsize$No$};
\node [color=red] (it0) at (-1.6,-4) {\scriptsize$[0,8]$};

\node [draw,rectangle,fill=black,minimum width=1cm] (t4) at (2,-4){};
\node [] (lt0) at (1.3,-4) {\scriptsize$To$};
\node [color=red] (it0) at (2.9,-4) {\scriptsize$[3,3]$};

\draw [-latex] (p1) edge (t3){};
\draw [-latex] (p1) edge (t2){};
\draw [-latex] (p6) edge (t2){};
\draw [-latex] (p6) edge (t4){};
\draw [-latex,bend left] (t0) edge (p6){};

\draw [-latex, dashed] (p3) edge (t2){};

\node [draw,circle,minimum size=0.7cm] (p4) at (0,-5) {};
\node [] (lp2) at (-0.6,-5) {$p_4$};

\node [draw,circle,minimum size=0.7cm] (p5) at (-2.5,-5) {};
\node [] (lp2) at (-3.1,-5) {$p_5$};

\node [draw,circle,minimum size=0.7cm] (p7) at (2,-5) {};
\node [] (lp2) at (1.4,-5) {$p_7$};

\draw [-latex] (t2) edge (p4){};
\draw [-latex] (t3) edge (p5){};
\draw [-latex] (t4) edge (p7){};

%\node [] (lbgen) at (0,-5.7) {$b)$};

\end{scope}
\end{tikzpicture}
}\vspace*{-1mm}
\caption{$a)$ Decoupled time and control in a waiting net. $b)$ ... with a timeout transition.}
\label{Example_wn}
\end{center}\vspace*{-7mm}
\end{figure}
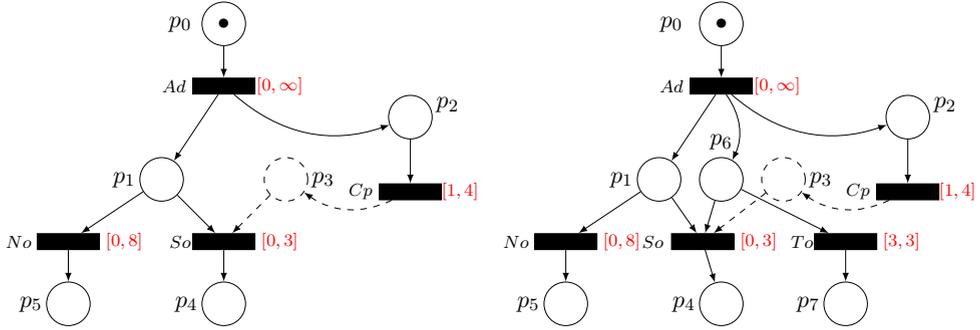

Figure~\ref{Example_wn}-a) is a waiting net modelling an online sale offer, with limited duration. Control places are represented with dashed lines. A client receives an ad, and can then buy a product up to 8 days after reception of the offer, or wait to receive a coupon offered to frequent buyers to benefit from a special offer at reduced price. However, this special offer is valid only for 3 days.
In this model, a token in control place $p_3$ represents a coupon allowing the special offer. However, time measure for the deal at special price starts as soon as the ad is sent. Hence, if the coupon is sent 2 days after the ad, the customer still has 1 day to benefit from this offer. If the coupon arrives more than 3 days after the ad, he has to use it immediately. Figure~\ref{Example_wn}-b) enhances this example to model expiration of the coupon after 3 days with a transition. Transition $T_O$ consumes urgently a token from place $p_6$ exactly $3$ time units after firing of transition $Ad$ if it is still enabled, which means that the special offer expires within 3 days, and coupon arriving later than $3$ days after the add cannot be used.

The semantics of waiting nets associates clocks to transitions, and lets time elapse if their standard preset is filled. It allows firing of a transition $t$ if the standard \underline{and} the control preset of $t$ is  filled.

\begin{definition}[Enabled, Fully Enabled, Waiting transitions]
\label{def:enabled-newenabled}
\begin{itemize}
\itemsep=0.98pt
    \item[$\bullet$] A transition $t$ is {\em enabled} in marking $M.N$ iff $M \geq \spreset{t}$ (for every standard place $p$ in the  preset of $t$, $M(p) \geq \spreset{t}(p)$). We denote by $\mathsf{Enabled}(M)$ the set of transitions which are enabled from marking $M$, i.e. $\mathsf{Enabled}(M) := \{ t \, | \, M \geq \spreset{t} \}$.

    \item[$\bullet$] A transition $t$ is {\em fully enabled} in $M.N$ iff, for every place in the  preset of $t$, $M.N(p) \geq \preset{t}(p)$. $\mathsf{FullyEnabled}(M.N)$ is the set of transitions which are fully enabled in marking $M.N$, i.e. $\mathsf{FullyEnabled}(M.N) := \{ t \, | \, M.N \geq \preset{t} \}$.
		
		\item[$\bullet$] A transition $t$ is {\em waiting} in $M.N$ iff $t \in \mathsf{Enabled}(M) \setminus \mathsf{FullyEnabled}(M.N)$ ($t$ is enabled, but is still waiting for the control part of its preset). We denote by $\mathsf{Waiting}(M.N)$ the set of waiting transitions.
\end{itemize}
\end{definition}

Obviously, $\mathsf{FullyEnabled}(M.N) \subseteq \mathsf{Enabled}(M)$. For every enabled transition $t$, there is a clock $x_t$ that measures for how long $t$ has been enabled. For every fully enabled transition $t$, $t$ can fire when $x_t\in [\alpha(t),\beta(t)]$. We adopt an {\em urgent} semantics, i.e. when a transition is fully enabled and $x_t=\beta(t)$, then this transition, or another one enabled at this precise instant {\em has to} fire without letting time elapse.
Firing of a transition $t$ from marking $M.N$ consumes tokens from all places in $\preset{t}$ and produces tokens in all places of $\postset{t}$. A consequence of this token movement is that some transitions are disabled, and some
other transitions become enabled after firing of $t$.
%% Harmonize the use of \preset, \postset

\begin{definition}[Transition Firing]
\label{def_trans_firing}
Firing of a transition $t$ from marking $M.N$ is done in two steps. The first step computes an intermediate marking
$M''.N'' = M.N -\preset{t}$ obtained by removing tokens consumed by the transition from its preset. Then, a new marking $M'.N' = M''.N'' + \postset{t}$ is computed. We will write $M.N\overset{t}\longrightarrow M'.N'$ whenever
firing of $t$ from $M.N$ produces marking $M'.N'$.  A transition $t_i$ is {\em newly enabled} after firing of $t$ from
$M.N$ iff it is enabled in $M'.N'$, and either it is not enabled in $M''.N''$, or $t_i$ is a new
%%\eject
\noindent occurrence of $\,t$. We denote by  $\,\newenab{M.N, t}\,$ the set of transitions newly enabled after firing~$\,t\,$
   from marking $M.N$.
    $$ \newenab{M.N, t} := \{ t_i \in T \mid \preset{t_i} \leq M.N - \preset{t} +\postset{t}
		\land
		\big( (t_i = t) \lor ( \preset{t_i} \geq  M.N - \preset{t}) \big ) \} $$			
\end{definition}

%HERE
As explained informally with the examples of Figure~\ref{Example_wn}, the semantics of waiting nets allows transitions firing when some time constraints on the duration of enabling are met. Hence, a proper notion of state for a waiting net has to consider both place contents and time elapsed. This is captured by the notion of {\em configuration}. In configurations, time is measured by attaching a clock to every enabled transition. To simplify  notations, we define valuations of clocks on a set $X_T=\{x_t \mid t\in T\}$ and write $x_t=\bot$ if $t\not\in enabled(M)$. To be consistent, for every value $r\in \mathbb R$, we set $\bot + r := \bot$.

\begin{definition}[Configuration]
A {\em Configuration} of a waiting net is a pair $(M.N,v)$ where $M.N$ is a marking and $v$ is a valuation of clocks in $X_T$.
The {\em initial configuration} of a waiting net is a pair $(M_0.N_0,v_0)$, where
    $v_0(x_t)= 0 $ if $t \in \mathsf{enabled(M_0)}$ and $v_0(x_t)=\bot$ otherwise.
		A transition $t$ is {\em firable} from configuration $(M.N,v)$ iff it is fully enabled, and $v(x_t)\in [ \alpha(t),\beta(t)]$.
\end{definition}

 Slightly abusing our notation, we will write $v(t)$ instead of $v(x_t)$. The semantics of waiting nets is defined in terms of {\em timed} or {\em discrete} moves from one configuration to the next one. Timed moves increase the value of clocks attached to enabled transitions (when time elapsing is allowed) while discrete moves are transitions firings that reset clocks of newly enabled transitions. We give the formal operation definition of timed and discrete moves in Figure~\ref{fig_operational_semantics} and then detail these operational~rules.

%%%%%%%%%%%%%%%%%%%%%%%%%%%%%%%%%%%%%%%%%%%
\begin{figure}[htbp]
\vspace*{-2mm}
\centering
\hspace*{-5mm}
\begin{minipage}{0.4\linewidth}
\scalebox{0.93}{
$\begin{array}{l}
\\
%d\in \mathbb R\\
\forall t\in \mathsf{Waiting}(M.N),\\
\qquad v'(t) = \min(\beta(t),v(t)+d)\\
\forall t\in \mathsf{FullyEnabled}(M.N),\\
\qquad v(t)+d \leq \beta(t) \\
\hspace{1cm}\mbox{~and~} v'(t)= v(t)+d\\
\forall t\in T\setminus \mathsf{Enabled}(M), v'(t)= \bot\\
\hline
\qquad (M.N,v) \overset{d}\longrightarrow (M.N,v')
\end{array}$
}
\end{minipage}
\hspace{0.15cm}
\begin{minipage}{0.4\linewidth}
\scalebox{0.93}{
$\begin{array}{c}
 M. N \geq \preset{t} \\
 M'. N' = M. N - \preset{t} + \postset{t}\\
\alpha(t) \leq v(t) \leq \beta(t)\\
\forall t_i \in T, v'(t_i) = \left\{
	\begin{array}{l}
	  0 \mbox{~if~} t_i \in \newenab{M.N, t}\\
		\bot \mbox{~if~} t_i \not\in \mathsf{enabled}(M)\\
		v(t_i) \mbox{~otherwise}
	\end{array}
	\right.\\
\hline
(M. N,v) \overset{t}\longrightarrow (M'. N',v')
\end{array}$
}
\end{minipage}%\vspace*{-1mm}
\caption{Operational semantics of Waiting Nets}
\label{fig_operational_semantics}\vspace*{-3mm}
\end{figure}

\vspace{\medskipamount}
A {\bf timed move} from a configuration $(M.N,v)$ lets $d\in \mathbb R^{\geq 0}$ time units elapse, but leaves the marking unchanged. We adopt an {\em urgent semantics} that considers differently {\em fully enabled} transitions and {\em waiting} transitions. First of all, elapsing of $d$ time units from $(M.N,v)$ is allowed only if, for every fully enabled transition $t$, $v(t)+d \leq \beta(t)$. If $v(t)+d > \beta(t)$ then $t$ becomes urgent before $d$ time units, and letting a duration $d$ elapse from $(M.N,v)$ is forbidden.
If we already have $v(t)=\beta(t)$, then no timed move is allowed, and firing of $t$ is {\em urgent}: $t$ {\em has to} be fired or disabled by the firing of another transition before elapsing time.
On the other hand, if elapsing $d$ time units does not violate urgency of any fully enabled transition, the new valuation for the clock $x_t$ attached to a fully enabled transition after elapsing $d$ time units is $v'(t)=v(t)+d$.
Urgency does not apply to waiting transitions, which can let an arbitrary amount of time elapse when at least one control place in their preset is not filled. Now, %%%
\noindent as we model the fact that an event has been enabled for a sufficient duration, we let the value of clocks attached to waiting transitions increase up to the upper bound allowed by their time interval, and then freeze these clocks.
So, for a waiting transition, we have $v'(t) = \min (\beta(t),v(t)+d)$. We will write $v \oplus d$ to denote the valuation of clocks reached after elapsing $d$ time units from valuation $v$.
A timed move of duration $d$ from configuration $(M.N,v)$ to $(M.N,v')$ is denoted $(M. N,v) \overset{d}\longrightarrow (M. N,v')$. As one can expect, waiting nets enjoy time additivity (i.e. $(M.N,v) \overset{d_1}\longrightarrow  (M.N,v_1) \overset{d_2}\longrightarrow (M.N,v_2)$ implies that $(M.N,v) \overset{d_1+d_2}\longrightarrow  (M.N,v_2)$, and continuity, i.e. if $(M.N,v) \overset{d}\longrightarrow  (M.N,v')$, then for every $d'<d$ $(M.N,v) \overset{d'}\longrightarrow  (M.N,v'')$.	

\vspace{\medskipamount}
A {\bf discrete move} fires a fully enabled transition $t$ whose clock $x_t$ meets the time constraint of $t$ (i.e., $\alpha(t) \leq x_t \leq \beta[t]$), and resets clocks attached to transitions newly enabled
by token moves.  A discrete move relation from configuration $(M.N,v)$ to $(M'.N',v')$ via transition $t_i \in \mathsf{FullyEnabled}(M.N)$ is denoted $(M. N,v) \overset{t_i}\longrightarrow (M'. N',v')$. Overall, the semantics of a waiting net $\mathcal{W}$ is a timed transition system (TTS) with initial state $q_0 = (M_0. N_0, v_0)$ and which transition relation follows the time and discrete move semantics rules.

\begin{definition}
A {\em run} of a waiting net $\mathcal{W}$ from a configuration $(M.N,v)$ is a sequence
$$\rho=(M.N,v) \overset{e_1}\longrightarrow (M_1.N_1,v_1)\overset{e_2}\longrightarrow (M_2.N_2,v_2) \cdots \overset{e_k}\longrightarrow (M_k.N_k,v_k)$$ where every $e_i$ is either a duration $d_i \in \mathbb R^{\geq 0}$, or a  transition $t_i\in T$, and
every $(M_{i-1}.N_{i-1},v_{i-1})\overset{e_i}\longrightarrow (M_{i}.N_{i},v_{i})$ is a legal move of $\mathcal{W}$.
\end{definition}

We denote by $\mathsf{Runs(\mathcal W)}$ the set of runs of $\mathcal W$ starting from $(M_0.N_0,v_0)$. A marking $M.N$ is {\em reachable} iff there exists a run from $(M_0.N_0,v_0)$ to a configuration $(M.N,v)$ for some $v$. $M.N$ is {\em coverable} iff there exists a reachable marking $M'.N' \geq M.N$.
We will say that a waiting net is {\em bounded} iff there exists an integer $K$ such that, for every reachable marking $M.N$ and every place $p\in P$ and $p'\in C$, we have $M(p)\leq K$ and $N(p')\leq K$.
Given two markings $M_0.N_0$ and $M.N$ the {\em reachability} problem asks whether $M.N$ is reachable from $(M_0.N_0,v_0)$, and the
{\em coverability} problem whether there exists a marking $M'.N'\geq M.N$ reachable from $(M_0.N_0,v_0)$.

\begin{remark}
\label{remark_inheritance}
A waiting net with an empty set of control places is a TPN. Hence, waiting nets inherit all undecidability results of TPNs: reachability, coverability and boundeness~\cite{Jone77} are undecidable in general for unbounded waiting nets.
\end{remark}

Given a run $\rho = (M_0. N_0,v_0) \overset{e_1}\longrightarrow (M_1.N_1,v_1)\overset{e_2}\longrightarrow (M_2.N_2,v_2) \cdots$, the timed word associated with $\rho$ is the word $w_\rho = (t_1,d_1)\cdot(t_2,d_2) \cdots$
where the sequence $t_1\cdot t_2 \dots $ is the projection of $e_1\cdot e_2 \cdots $ on $T$, and for every $(t_i,d_i)$ such that $t_i$ appears on move $(M_{k-1}.N_{k-1},v_{k-1})\overset{e_k}\longrightarrow (M_k.N_k,v_k)$, $d_i$ is the sum of all durations in $e_1 \dots e_{k-1}$. The sequence $t_1.t_2\dots$ is called the {\em untiming} of $w_\rho$. The {\em timed language} of a waiting net is the set of timed words $\lang(\mathcal{W}) =\{ w_\rho  \mid \rho \in \mathsf{Runs(\mathcal W)}\}$. Notice that unlike in timed automata and unlike in the models proposed in~\cite{BCHLR13}, we do not define accepting conditions for runs or timed words, and hence consider that the timed language of a net is prefix closed.
The {\em untimed language} of a waiting net $\mathcal W$ is the language
$\lang^U(\mathcal W) = \{ w\! \in T^* \mid  \exists w_\rho \in \lang(\mathcal W), \linebreak w~\mbox{~is the untiming  of~}  w_\rho\}$.
These definitions can be adapted the obvious way to consider labels of transitions instead of transition themselves (i.e. timed words are sequences of pairs of the form $(\lambda(t_i),d_i)$, and labeling of transitions by $\epsilon$.
To simplify notations, we will consider runs alternating timed and discrete moves. This results in no loss of generality, since durations of consecutive timed moves can be summed up, and a sequence of two discrete move can be seen as a sequence of transitions with 0 delays between discrete moves. In the rest of the paper, we will write $(M.N,v) \overset{(d,t)}\longrightarrow (M'.N',v')$ to denote the sequence of moves $(M.N,v) \overset{d}\longrightarrow (M.N,v\oplus d) \overset{t}\longrightarrow (M'.N',v')$.

\medskip
Let us illustrate the semantics of waiting nets with the example in Figure~\ref{Example_wn}-a). In this net, we have $P =  \{ p_0,p_1, p_2,p_4,p_5\}$, $C = \{p_3\}$,
    $T = \{Ad,No,So,Cp\}$, $\alpha(Ad)=\alpha(No)=\alpha(So)=0$, $\alpha(Cp)=1$, $\beta(Ad)=\infty$, $\beta(No)=8$, $\beta(So)=3$, $\beta(Cp)=4$.
    We also have $\spreset{So}=p_1$ and $\cpreset{So}=p_3$, $\postset{So}=p_4$ (we let the reader infer $\preset{}$ and $\postset{}$ for other transitions). For conciseness, we will denote by $\{ p_i, p_j,..\}.\{p_k,p_r,...\}$ a marking where standard places $p_i, p_j,...$ and control places $p_k,p_r,...$ are marked.
		The net starts in an initial configuration $(M_0.N_0,v_0)$ where $M_0.N_0=\{p_0\}.\{\}$,  $v_0(Ad)=0$ and $v_0(t)=\bot$ for all other transitions in $T$.
	
\medskip			
A possible run of the net of Figure~\ref{Example_wn}-a) is :
%$$(M_0.N_0,v_0) \overset{(d_0,Ad)}\longrightarrow (M_1.N_0,v_1) \overset{(2,Cp)} (M_2.N_2,v_1) \overset{(1,No)} (M_2.N_2,v_1)$$		
$$(\{p_0\}.\{\},v_0) \overset{(d_0,Ad)}\longrightarrow (\{p_1,p_2\}.\{\},v_1) \overset{(2,Cp)}\longrightarrow (\{p_1\}.\{p_3\},v_2) \overset{(1,No)}\longrightarrow (\{p_5\}.\{p_3\},v_3)$$		
where $d_0$ is any arbitrary positive real value $v_1(ad)=\bot$ and $v_1(No)=v_1(So)=v_1(Cp)=0$, $v_2(ad)=v_2(Cp)=\bot$ and $v_2(No)=v_2(So)=2$.
This run represents a situation where a advertisement is sent at some date $d$, a coupon 2 days later, and the customer decides to take the normal offer. Another possible run is :
$$(\{p_0\}.\{\},v_0) \overset{(d_0,Ad)}\longrightarrow (\{p_1,p_2\}.\{\},v_1) \overset{(3,Cp)}\longrightarrow (\{p_1\}.\{p_3\},v'_2) \overset{(1,So)}\longrightarrow (\{p_4\}.\{\},v'_3)$$		
with $v'_2(ad)=v'_2(Cp)=\bot$ and $v'_2(No)=v'_2(So)=3$. This run represents a scenario where a customer decides to use the coupon sent. Notice that from $(\{p_1\}.\{p_3\},v'_2)$, time cannot elapse, because transition $So$ is fully enabled, $v'_2(So)=3$, and $\beta(So)=3$. Notice also that firing of $So$ can only occur after firing of $Cp$, but yet time measurement starts for $So$ as soon as $\spreset{So}=\{p_1\}$ is filled, i.e. immediately after firing of $Ad$. This example is rather simple: the net is acyclic, and each transition is enabled/disabled only once.  One can rapidly see that the only markings reachable are $M_0.N_0=\{p_0\}.\{\}$, $M_1.N_0=\{p_1,p_2\}.\{\}$, $M_2.N_2=\{p_1\}.\{p_3\}$, $M_3.N_3=\{p_5\}.\{p_3\}$ and $M'_3.N'_3=\{p_4\}.\{\}$ visited in the runs above, and two additional markings $M_4.N_4=\{p_2,p_5\}.\{\}$ (reached when transition $No$ is fired before transition $Cp$ and $M_5.N_5 =\{p_5\}.\{p_3\}$ reachable immediately from $M_4.N_4$ by firing $Cp$. A normal order can be sent at most 8 time units after advertising, a special order must be sent at most 3 time units after advertising if a coupon was received, etc.

\section{Reachability}
\label{sec:reachability}

%In section~\ref{sec:waiting}, we have provided a semantics for Waiting Nets.
In a configuration $(M.N,v)$ of a waiting net $\wnet$, $v$ assigns real values to clocks. The timed transition system giving the semantics of a waiting net is hence in general infinite, even when $\wnet$ is bounded. In timed automata, clock valuations can be partitioned into equivalence classes called regions~\cite{AlurD94} to get
%\eject
\noindent  a finite abstraction. For TPNs, these equivalence classes are called {\em domains}, and can be used to define an untimed transition systems called a {\em state class graph}~\cite{BerthomieuD91}, that recognizes firable sequences of transitions. In this section, we build similar graphs for waiting nets. We also prove that the set of domains in these graphs is always finite, and use this result to show that reachability and coverability are decidable for bounded waiting nets.
The type of domain built for waiting nets is identical to that of TPNs, and the construction of successors is based on the same operations, namely verification of the emptiness of domains, variable substitution, variable   elimination, and calculus of a normal form. The main change is that one has to consider time elapsed when computing a successor domain to take into account waiting transitions that have reached their upper bound. A consequence is that a domain can have several successors via the same transition firing. Yet, verification of coverability and reachability remain similar to that of TPNs, i.e. yield an exponential time complexity w.r.t. the number of transitions.

\medskip
Let $t_i$ be a fully enabled transition with $\alpha(t_i)=3$ and $\beta(t_i)= 12$, and assume that $t_i$ has been enabled for $1.6$ time units. According to the semantics of WPNs, $v(t_i)=1.6$, and $t_i$ cannot fire yet, as $x_{t_i} < \alpha(t_i)$. Transition $t_i$ can fire only after a certain duration $\theta_i$ such that $1.4 \leq \theta_i \leq 10.4$. Similar constraints hold for all enabled transitions. We will show later that they are not only upper and lower bounds on $\theta_i$'s of the form $a_i \leq \theta_i \leq b_i$, but can  include dependencies of the form $\theta_i - \theta_j \leq c_{ij}$ on pairs of variables.
These constraints form a {\em domain} of legal real values defining the time that can elapse before firing an enabled transition or  reaching the upper bound of the interval attached to it.

\begin{definition}[State Class, Domain]
\label{def-stateclass}
A {\em state class} of a waiting net $\wnet$ is a pair $(M.N, D)$, where $M.N$ is a marking of $\wnet$ and $D$ is a set of inequalities called {\em firing domain}. The inequalities in $D$ are of two types:

\vspace{-\medskipamount}
$$\begin{cases}
a_i \leq \theta_i \leq b_i, &\text{ where } a_i,b_i \in \mathbb{Q}^+ \text{ and } t_i \in \mathsf{Enabled}(M)\\
\theta_j - \theta_k \leq c_{jk}. &\text{ where } c_{jk} \in \mathbb{Q} \text{ and } t_j,t_k \in \mathsf{Enabled}(M).
\end{cases}$$

\end{definition}

A variable $\theta_i$ in a firing domain $D$ over variables $\theta_1,\dots,\theta_m$  represents the time that can elapse before firing  transition $t_i$ if $t_i$ is fully enabled, and the time that can elapse before the clock attached to $t_i$ reaches the upper bound $\beta(t_i)$ if $t_i$ is waiting. Hence, if a transition is fully enabled, and $a_i \leq \theta_i \leq b_i$, then $t_i$ cannot fire before $a_i$ time units, and cannot let more than $b_i$ time units elapse, because it becomes urgent and has to fire or be disabled before $b_i$ time units. Inequalities in $D$ define an infinite set of possible values for $\theta_i's$. We denote by $\llbracket D \rrbracket$ the set of solutions for a firing domain $D$. Domains give upper and lower bounds on $\theta_i's$ with inequalities of the form $a_i \leq \theta_i \leq b_i$, but also contain diagonal constraint of the form $\theta'_j - \theta'_k \leq c_{jk}$. These constraints appear when two transitions $t_i,t_j$ are enabled in a state class, and one has to add inequalities of the form $\theta_i \leq \theta_j$ to check firability of $t_i$. We will say that $t_i$ is {\em firable} from $(M.N,D)$ if there exists $v\in\llbracket D \rrbracket$ such that $(M.N,v)\overset{t_i}\longrightarrow (M'.N',v')$.
We will say that $(M'.N',D')$ is a {\em successor} of $(M.N,D)$ via transition $t_i$ iff $t_i$ is firable from $(M.N,D)$ and $D'$ is a domain describing the set of values for $\theta_k's$ that are reachable when elapsing time and firing $t_i$ from a valuation in $D$.
The initial state class for a waiting net is the state class $(M_0.N_0,D_0)$ where $M_0.N_0$ is the usual initial marking of a waiting net, and $D_0$ is the domain
$$D_0 = \{ \alpha(t_i) \leq \theta_i \leq \beta(t_i) \mid t_i \in \mathsf{enabled}(M_0) \}$$
\eject

The main reason to work with domains is to abstract time and bring back decidability issues for reachability or coverability on an infinite timed transition system to a similar problem on a finite untimed structure. This calls for three properties of the abstraction: first, one has to be able to compute effectively a successor relation $\mathsf{Succ()}$ among state classes.
Second, the closure $\mathsf{Succ^*(M_0.N_0,D_0)}$ of the initial state class through this successor relation has to be finite. Last, the construction has to be sound and complete. These three properties hold for time Petri nets~\cite{MenascheB83,BerthomieuD91}. In the rest of this paper, we will show that a sound, complete and effective construction of state class graphs for waiting nets exists, but that it slightly differs from existing constructions defined for TPNs.

\medskip
The way to define a set of solutions $\llbracket D \rrbracket$ is not unique. We will say that $D_1$, $D_2$ are {\em equivalent}, denoted $D_1 \equiv D_2$ iff  $\llbracket D_1 \rrbracket = \llbracket D_2 \rrbracket$. A set of solutions $\llbracket D \rrbracket$ is hence not uniquely defined, but fortunately, a unique representation called a {\em canonical form} exists~\cite{Menasche82}.

\begin{definition}[Canonical Form]
\label{def-canonical-form}
The canonical form of a firing domain $D$ is the unique domain
$$D^*=\left \{ \begin{array}{l}
a_i^* \leq \theta_i \leq b_i^*\\
\theta_j - \theta_k \leq c_{jk}^*
\end{array}\right.$$
where $a^*_i = Inf(\theta_i)$, $b^*_i = Sup(\theta_i)$, and $c^*_{jk} = Sup(\theta_j - \theta_k)$
\end{definition}

The  canonical form $D^*$ is the minimal set of constraints defining the set of solutions
$\llbracket D \rrbracket$. If two sets of constraints are equivalent then they have the same canonical form. The constraints we consider are of the form
$a_i \leq \theta_i \leq b_i$ and  $ \theta_i -\theta_j \leq c_{ij}$, where $a_i, b_i, c_{ij}$ are rational values. This type of constraint can be easily encoded by \textit{Difference Bound Matrices}~\cite{Dill89}, or by constraint graphs (i.e., graphs where vertices represent variables or the value 0, and an edge from $\theta_i$ to $\theta_j$ of weight $w_{x,y}$ represents the fact that $\theta_i-\theta_j \leq w_{x,y}$).
The canonical form for a domain is obtained as a closure operation, that amounts to computing the shortest paths for each pair $\theta_i,\theta_j$ in the constraint graph. This can be achieved using the Floyd-Warshall algorithm, in $O(n^3)$, where $n$ is the number of variables in the domain. We refer interested readers to~\cite{BengtssonY03} for a survey on operations on constraints, specified using DBMs.
Note that the set of constraints in a domain $D$ may not be satisfiable, i.e. $\llbracket D \rrbracket$ can be empty. In particular, for TPNs and waiting nets, unsatisfiable domains may appear when checking whether a particular transition can be the first one to fire in a domain $D$. Satisfiability of a domain is decidable, and amounts to checking existence of negative cycles in the constraint graph of $D$, wich can again be achieved in polynomial time with Floyd-Warshall.

\begin{figure}[h]
\vspace*{-3mm}
\begin{center}
\scalebox{0.78}{
\begin{tikzpicture}
%%%%%%%%%%%%%%%%%%%%%%%%%%%%%%%%%%%%%%%%%%%%%%%%%%%%%%%%%%%%%%%%%%%%%%%%
%
\begin{scope}
\node [draw,circle,minimum size=0.7cm] (p0) at (0.3,0) {$\bullet$};
\node [] (lp0) at (0.3,0.5) {$p_0$};
\node [draw,circle,minimum size=0.7cm, dashed] (c0) at (0.3,-1) {$\bullet$};
\node [] (lp0) at (0.3,-1.5) {$c_0$};

\node [draw,circle,minimum size=0.7cm] (p1) at (0.3,2) {$\bullet$};
\node [] (lp1) at (0.3,2.5) {$p_1$};

\node [draw,circle,minimum size=0.7cm] (p2) at (2.7,0) {};
\node [] (lp2) at (2.7,0.5) {$p_2$};

\node [draw,circle,minimum size=0.7cm] (p3) at (2.7,2) {};
\node [] (lp3) at (2.7,2.5) {$p_3$};

\node [draw,rectangle,fill=black,minimum height=1cm] (t0) at (1.5,0){};
\node [] (lt0) at (1.5,0.7) {\scriptsize$t_0$};
\node [color=red] (it0) at (1.5,-0.7) {\scriptsize$[2,3]$};

\node [draw,rectangle,fill=black,minimum height=1cm] (t1) at (1.5,2){};
\node [] (lt1) at (1.5,2.7) {\scriptsize$t_1$};
\node [color=red] (it1) at (1.5,1.3) {\scriptsize$[4,5]$};

\draw [-latex] (p0) edge  (t0){};
\draw [-latex] (c0) edge  (t0){};
\draw [-latex] (p1) edge  (t1){};
\draw [-latex] (t0) edge  (p2){};
\draw [-latex] (t1) edge  (p3){};

\node [] (dom) at (7,2.5) {$D_0 = \left \{
\begin{array}{l}
2 \leq \theta_0 \leq 3\\
4 \leq \theta_1 \leq 5\\
\end{array}
\right.$};

\node [] (mat) at (7,0) {
$\begin{array}{l|ccc}
&0 & \theta_0 & \theta_1 \\
\hline
0 & (\leq,0) & (\leq,-2) & (\leq,-4)\\
\theta_0 & (\leq,3) & (\leq,0) & (\leq,\infty)\\
\theta_1 & (\leq,5) & (\leq,\infty) & (\leq,0) \\
\end{array}
$};

\end{scope}

\begin{scope}[xshift = 12cm]

\node [circle,draw] (v0) at (2,0){$0$};
\node [circle,draw] (t0) at (0,0){$\theta_0$};
\node [circle,draw] (t1) at (4,0){$\theta_1$};

\draw [-latex, bend right] (v0) edge node[above] {$-2$} (t0){};
\draw [-latex, bend right] (t0) edge node[above]{$3$} (v0){};
\draw [-latex, bend right] (v0) edge node [above] {$-4$} (t1){};
\draw [-latex, bend right] (t1) edge node[above]{$5$} (v0){};

\draw [-latex, bend left=40, dashed] (t1) edge node[below]{$0$} (t0){};
\end{scope}
\end{tikzpicture}
}\vspace*{-2mm}
\caption{A Waiting net, its initial domain, an equivalent DBM representation, and its constraint graph.}
\label{fig_example_dom_mat_graph}
\end{center}\vspace*{-7mm}
\end{figure}
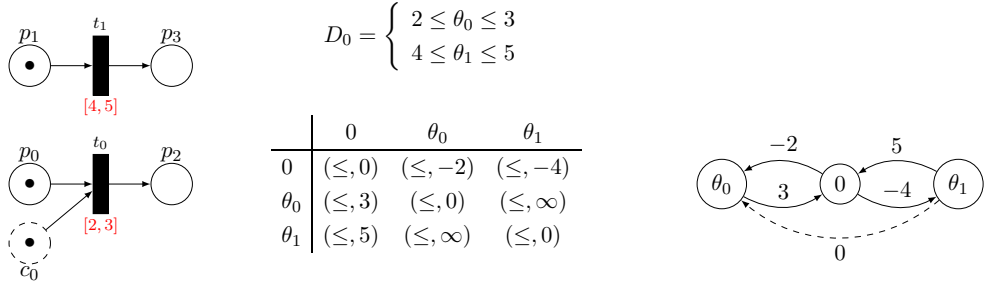

\medskip
Consider for instance the example of Figure~\ref{fig_example_dom_mat_graph}. This simple waiting net has two enabled transitions in its initial marking $M_0.N_0=\{p_0,p_1\}.\{c_0\}$, and its initial domain is the pair of inequalities $D_0$ represented in the Figure. These constraint can alternatively be represented by a DBM (also shown in the Figure), or a constraint graph ( restricted to plain edges) represented on the left of the Figure. Clearly, $D_0$ has a solution. For instance $\theta_0 = 2.2, \theta_1=4.5$ is a solution for the system of inequalities in $D_0$. Let us test if transition $t_1$ can fire before $t_0$. This amounts to adding the constraint $\theta_1 \leq \theta_0$ to domain $D_0$, or equivalently the dashed edge with weight 0 in the constraint graph. As expected, $D_0 \cup \{\theta_1 \leq \theta_0 \}$ is not satisfiable, and one can verify that adding the dashed edge creates a negative cycle in the constraint graph.

Syntactically, state classes, domains and their canonical forms are identical for TPNs and waiting nets. Indeed, the fact that a transition is waiting or fully enabled does not affect the representation of constraints, as each $\theta_i$ represents a time remaining until the upper bound $\beta(t_i)$ is reached by clock $x_{t_i}$. However, there is a major difference in the construction of successors for state classes of TPNs and those of waiting nets. In TPNs, the maximal duration that can elapse in a state class is constrained by {\em all} enabled transitions.
For waiting nets, the maximal duration that can elapse in a state class is constrained by fully enabled transitions only: elapsing $\delta$ time units before firing some transition $t_i$ can be allowed even if a waiting transition $t_j$ reaches its upper bound after $\delta' < \delta$ time units, because variable $\theta_j$  represents a {\em time to upper bound of interval $[\alpha(t_j),\beta(t_j)]$} rather than a {\em time to fire}.
On the other hand, when elapsing time, one has to consider which clocks attached to waiting transitions have reached their upper bound. A consequence is that a state class has several successors via the same transition, and hence state class graphs of waiting nets are not deterministic.

In state classes of TPNs, verifying that $t_i$ can fire from $(M.N,D)$ is checked by verifying that $D^{+\theta_i} :: = D \cup \{ \theta_i \leq \theta_j \mid t_j \in \mathsf{Enabled}(M.N)\}$ is satisfiable. Intuitively, satisfiability of $D^{+\theta_i}$ means that it is possible to fire $t_i$ witout violating urgency of other transitions.
Following the semantics of section~\ref{sec:waiting},
a transition $t_i$ can fire from a configuration $(M.N,v)$ iff it is fully enabled and $v(t_i)\in [\alpha(t_i), \beta(t_i) ]$. Hence, from configuration $(M.N,v)$, firing of $t_i$ is one of the next discrete moves iff there exists a duration $\theta_i$ such that $t_i$ can fire from $(M.N,v+\theta_i)$, i.e., after letting duration $\theta_i$ elapse, and no other transition becomes urgent before $\theta_i$ time units.
One can check firability of a transition $t_i$ as in TPNs through satisfiability of $D^{+\theta_i}$, but considering only fully enabled transitions. Note however that if there is a  waiting transition $t_j$, $D$ contains a pair of constraints of the form $\theta_j \leq b_j$, and $\theta_i-\theta_j \leq 0$, then the constraint on $\theta_j$ may over-constrain the possible values for $\theta_i$. To avoid any conflict between fully enabled transitions and waiting transitions, we define a notion of {\em projection} of a domain on its fully enabled transitions, that erases all upper bounds of waiting transitions, and diagonal constraints involving variables associated with waiting transitions.

\begin{definition}[Projection]
Let $D$ be a firing domain with variables $a_i,b_i,c_{jk}$ set as in def.~\ref{def-stateclass}.
The {\em projection} of $D$ on its fully enabled transitions is a domain
\vspace{-\medskipamount}
$$\begin{array}{lcll}
	 &  &\{ a_i \leq \theta_i \leq b_i \mid t_i \in \mathsf{FullyEnabled}(M.N) \}\\
	D_{|full} = & \cup & \{ a_i \leq \theta_i \leq \infty, \mid t_i \in \mathsf{Waiting}(M.N) \}\\
	& \cup & \{ \theta_j - \theta_k \leq c_{jk} \in D \mid t_j,t_k \in \mathsf{FullyEnabled}(M.N)\}.
\end{array}$$
\end{definition}

%Let us give an intuition behind projection: a waiting transition $t_i$ cannot fire before it clock reaches the lower bound of its interval, which means that the time to fire of such a clock has to be greater than $a_i$. However, as $t_i$ can wait an arbitrary amount of time before being fired when a control places of its preset is empty, there is no upper bound on $\theta_i$. The time to upper bound of simply enabled transitions should not constrain other fully enabled transitions, so the only variable differences that matter for urgent firing are differences among TTFs of fully enabled transitions.
%
%
In waiting nets, $t_i$ is {\em firable} from a state class $(M.N,D)$ iff $M.N \geq \preset{t_i}$ and $D_{|full} \cup \{ \theta_i \leq \theta_j \mid t_j \in \mathsf{FullyEnabled}(M.N)\}$ is satisfiable, i.e. there exists a value for $\theta_i$ and for every $\theta_j$ attached to a fully enabled transition $t_j$ that does not violate the constraint that $\theta_j$ is greater than or equal to $\theta_i$ and does not exceed the upper bound $b_j$ appearing in $D$.

The construction of the set of reachable state classes of a waiting net is an inductive procedure.
Originally, a waiting net starts in a configuration $(M_0.N_0,v_0)$, so the initial state class of our system is $(M_0,D_0)$, where
$D_0=\{ \alpha(t_i) \leq \theta_i \leq \beta(t_i) \mid t_i\in \mathsf{Enabled}(M_0.N_0)\}$. Then, for every state class $(M.N,D)$, and every transition $t$ firable from $(M.N,D)$, we compute all possible successors $(M'.N',D')$ reachable after firing of $t$.
Note that we only need to consider $t \in \mathsf{FullyEnabled}(M.N)$, as $t$ can fire only when $N > \cpreset{t}$.
Computing $M'.N'$ follows the usual firing rule of a Petri net: $M'.N' =M.N -\preset{t} +\postset{t}$ and we can hence also compute $\newenab{M.N, t}$, $\mathsf{enabled}(M'.N')$ and $\mathsf{FullyEnabled}(M'.N')$.
It remains to show the effect of transitions firing on domains to compute all possible successors of a class. Firing a transition $t$ from $(M.N,D)$ propagates constraints of the firing domain $D$ on variables attached to transitions that remain enabled. Variables associated to newly enabled transitions only have to meet lower and upper bounds on their firing times. We can now show that for waiting nets, the set of successors of a state class is finite and can be effectively computed despite waiting transitions and non-determinism.

%%%%%%%%%%%%%%%%%%%%%%%%%%%%%%%%%%%%%%%%%%%%%%%%%%%%%%%%%%%%%%%%%%%%
% FORMER EXAMPLE : a simple waiting net with parallel transitions
% and common enabling
% Illustrates the influence of bounds when computing classes
%%%%%%%%%%%%%%%%%%%%%%%%%%%%%%%%%%%%%%%%%%%%%%%%%%%%%%%%%%%%%%%%%%%%
\begin{figure}[htbp]
\begin{center}
\begin{tikzpicture}
\begin{scope}[scale=0.88]
\node [draw,circle,minimum size=0.7cm] (p0) at (0,0) {$\bullet$};
\node [] (lp0) at (-0.6,0.3) {$p_0$};

\node [draw,circle,minimum size=0.7cm] (p1) at (2.5,1.2) {$\bullet$};
\node [] (lp1) at (1.9,1.5) {$p_1$};

\node [draw,dashed,circle,minimum size=0.7cm] (p2) at (2,0) {};
\node [] (lp2) at (1.4,0.3) {$p_2$};

\node [draw,circle,minimum size=0.7cm] (p3) at (2.5,-1.2) {$\bullet$};
\node [] (lp3) at (1.9,-0.9) {$p_3$};

\node [draw,circle,minimum size=0.7cm] (p4) at (4.5,1.2) {};
\node [] (lp4) at (5,1.5) {$p_4$};

\node [draw,circle,minimum size=0.7cm] (p5) at (4.5,-1.2) {};
\node [] (lp4) at (5,-0.9) {$p_5$};

\node [draw,rectangle,fill=black,minimum height=1cm] (t0) at (1,0){};
\node [] (lt0) at (1,0.8) {\scriptsize$t_0$};
\node [color=red] (it0) at (0.9,-0.7) {\scriptsize$[0,\infty)$};

\node [draw,rectangle,fill=black,minimum height=1cm] (t1) at (3.5,1.2){};
\node [] (lt1) at (3.5,1.9) {\scriptsize$t_1$};
\node [color=red] (it1) at (3.5,0.5) {\scriptsize$[0,3]$};

\node [draw,rectangle,fill=black,minimum height=1cm] (t2) at (3.5,-1.2){};
\node [] (lt2) at (3.5,-0.5) {\scriptsize$t_2$};
\node [color=red] (it1) at (3.5,-1.9) {\scriptsize$[5,6]$};

\draw [-latex] (p0) edge  (t0){};
\draw [-latex] (t0) edge  (p2){};
\draw [-latex] (p1) edge  (t1){};
\draw [-latex] (p3) edge  (t2){};
\draw [-latex] (p2) edge  (t1){};
\draw [-latex] (p2) edge  (t2){};
\draw [-latex] (t1) edge  (p4){};
\draw [-latex] (t2) edge  (p5){};
\end{scope}
\end{tikzpicture}
\vspace{-\medskipamount}
\caption{Upper bounds of transitions influence state classes of waiting Nets}
\label{FigureInfluenceBounds}
\vspace{-2\medskipamount}
\end{center}
\end{figure}
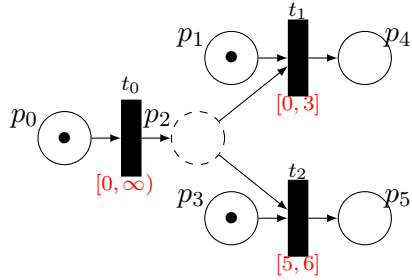

Let us illustrate the influence of control places on the construction of state classes with the waiting net of Figure~\ref{FigureInfluenceBounds}.
This net starts in a configuration $C_0=(M_0.N_0,v_0)$ with $M_0.N_0= \{p_0,p_1,p_3\}.\{ \}$. Transitions $t_0,t_1,t_2$ are enabled in this marking, but $t_0$ is the only fully enabled transition. From configuration $C_0$, one can let an arbitrary amount of time $\delta \in \mathbb R^{\geq 0}$ elapse, and then fire transition $t_0$. If $t_0$ fires from $C_0'=(M_0.N_0,v_0+\delta)$, the net reaches a new configuration $(M'.N', v')$ where $M'.N'=\{p_1,p_3\}.\{p_2\}$, and $v'(t_0)=\bot, v'(t_1)=v'(t_2)=\delta$.

\medskip
Let $C_1=(M'.N',v_1)$ be a configuration reached when $0 \leq \delta < 3$. The value of clock $x_{t_1}$ is still smaller than the upper bound $\beta(t_1)=3$. Then, from $C_1$, one can still wait before firing $t_1$, i.e., $t_1$ is not urgent and can fire immediately or within a duration $3-\delta$, and transition $t_2$ is not firable, and cannot become firable before the urgent firing of $t_1$.
Now, let $C_2=(M'.N',v_2)$ be a configuration reached after elapsing $\delta$ with $3 \leq \delta < 5$. In $C_2$, we have $v_2(t_2)<5$ and $v_2(t_1)=3$, as the value of clock $x_{t_1}$ was frozen when reaching the upper bound $\beta(t_3)=3$. Hence, in $C_2$, transition $t_1$ is urgent and must fire, and transition $t_2$ still has to wait before firing. Hence, choosing $3 \leq \delta < 5$ forces to fire $t_1$ immediately after $t_0$. As this firing consumes the token in control place $p_2$, it will prevent $t_2$ from firing.
In a similar way, let us call $C_3=(M'.N',v_3)$ a configuration reached after elapsing $\delta \geq 5$ time units and then firing $t_0$. In this configuration, we have $v_3(t_1)=3$ and $v_3(t_2) \in [5,6]$, forcing $t_1$ or $t_2$ to fire immediately without elapsing time.
This example shows that the time elapsed in a configuration has to be considered when computing successors of a state class. We have to consider whether the upper bound of a waiting transition has been reached or not, and hence to differentiate several cases when firing a single transition $t$. Fortunately, these cases are finite, and depend only on upper bounds attached to waiting transitions in domain $D$.

%%%%%%%%%%%%%%%%%%%%%%%%%%%%%%%%%%%%%%%%%%%%%%%%%%%%%%%%%%%%%%%%%%%%%%%%%%%%%%%%%%%%

\begin{definition}[Upper Bounds Ordering]
Let $M.N$ be a marking, $D$ be a firing domain with constraints of the form $a_i \leq \theta_i \leq b_i$. Let $B_{M.N,D}= \{ b_i \mid t_i \in \mathsf{enabled}(M)\}$. We can order bounds in $B_{M.N,D}$, and define
$\bound{i}$ as the $i^{th}$ bound in $B_{M.N,D}$. We also define $\bound{0} = 0 $ and $\bound{|B_{M.N,D}|+1} = \infty$.
\end{definition}

In the example of Figure~\ref{FigureInfluenceBounds}, the initial domain $D_0$ is
$$ D_0=\left \{ \begin{array}{l} 0 \leq \theta_0 \leq \infty\\
 0 \leq \theta_1 \leq 3\\
5 \leq \theta_2 \leq 6\\
\end{array}
\right.$$

The bounds appearing in $D_0$ are hence $0,3,6,\infty$.
Consider a transition $t_f$ firable from $C=(M.N,D)$. This means that there is a way to choose a delay $\theta_f$ that does not violate urgency of all other transitions.
We use $B_{M.N,D}$ to partition the set of possible values for delay $\theta_f$ in a finite set of intervals, and find which transitions reach their upper bound when $\theta_f$ belongs to an interval.
Recall that $t_f$ can fire only if adding constraint $\theta_f \leq \theta_j$ for every fully enabled transition $t_j$ to domain $D$ still yields a satisfiable set of constraints.
This means that when considering that $t_f$ fires after a delay $\theta_f$ such that $\bound{i}\leq  \theta_f \leq \bound{i+1}$,
as $D$ also contains a constraint of  the form $a_f \leq \theta_f \leq b_f$, considering an interval such that  $\bound{i}$ is greater than $\min \{ b_j \in B_{M.N,D} \mid t_j \in \mathsf{FullyEnabled}(M.N)\}$ or smaller than $a_f$ immediately leads to inconsistency of constraint
$D_{|full} \cup \underset{t_j\in \mathsf{FullEnabled}(M.N)} \bigwedge \theta_f \leq \theta_j \wedge \bound{i} \leq \theta_f \leq \bound{i+1}$. We denote by $B^{t_f}_{M.N,D}$ the set of bounds $B_{M.N,D}$ pruned out from these inconsistent bound values. Now, choosing a particular interval $[\bound{i},\bound{i+1}]$ for the possible values in $\theta_f$ indicates for which waiting transitions $t_1, \dots t_k$ the clocks $x_{t_1}, \dots x_{t_k}$ measuring time elapsed since enabling has reached upper bounds $\beta(t_1), \dots \beta(t_k)$. The values of these clocks become irrelevant, and hence the corresponding $\theta_i$'s have to be eliminated from the domains.

%HERE

\begin{definition}[Time progress (to the next bound)]
Let $M.N$ be a marking, $D$ be a firing domain, and $b =\min B_{M.N,D}$ be the smallest upper bound for enabled transitions.
The domain reached after progressing time to bound $b$ is the domain $D'$ obtained by:
\vspace{-\medskipamount}
\begin{itemize}
\item replacing every variable $\theta_i$ by expression $\theta_i'-b$
\item eliminating every $\theta'_k$ whose upper bound is $b$,
\item computing the normal form for the result and renaming all $\theta_i'$ to $\theta_i$
\end{itemize}
\end{definition}

Progressing time to the next upper bound allows to remove variables related to waiting transitions whose clocks have reached their upper bounds from a firing domain. We call these transitions {\em timed-out transitions}. For a  transition $t_k \in waiting(M.N)$ if $v(x_{t_k}) = \beta(t_k)$, variable $\theta_k$, that represents the time needed to reach the upper bound of the interval is not meaningful any more: either $t_k$ gets disabled in the future, or is fired urgently as soon as it becomes fully enabled, because $\theta_k=0$. So the only information to remember is that $t_k$ will be urgent as soon as it becomes fully enabled.

\begin{definition}[Successors]
\label{def-successors}
A {\em successor} of a class $C=(M.N,D)$ after firing of a transition $t_f$ is a class $C'=(M'.N',D')$ such that $M'.N'$ is the marking obtained after firing $t_f$ from $M.N$, and $D'$
is a firing domain reached after firing $t_f$ in some interval $[b_r,b_{r+1}]$ where $b_r,b_{r+1}$ are consecutive values in $B^{t_f}_{M.N,D}$.
\end{definition}

Given $C$ and a firable transition $t_f$, we can compute the set $\mathsf{Post}(C, t_f)$ of successors of $C$, i.e. $\mathsf{Post}(C, t_f) := \{(M'.N',\mathsf{next}_r(D, t_f)) \,|\, b_r \in B^{t_f}_{M.N,D} \cup \{0\} \}$. The next marking is the same for every successor and is $M'.N' = M.N - {^\bullet}\!t_f + t_f^\bullet$. We then compute $\mathsf{next}_r(D, t_f)$ as follows:

\paragraph{1) Time progress:} We successively progress time from $D$ to bounds $b_1 < b_2 <\dots <b_r$ to eliminate variables of all enabled transitions reaching their upper bounds, up to bound $b_r$. We call $D^r$ the domain obtained this way. Every transition $t_k$ in $Enabled(M.N)$ that has no variable $\theta_k$ in $D^r$ is hence a waiting transition whose upper bound has been reached.
		
\paragraph{2) Firing condition:}
					We add to $D^r$ the following constraints: we add the inequality $(b_r \leq \theta_f \leq b_{r+1})$, and for every transition $t_j \in \mathsf{FullyEnabled}(M) \setminus \{ t_f\}$, we add to $D^r$ the inequality $\theta_f \leq \theta_j$. This means that no other transition was urgent when $t_f$ has been fired. Let $D^u$ be the new firing domain obtained this way. If any fully enabled transition $t_j$ has to fire before $t_f$, then we have a constraint of the form $a_j \leq \theta_j \leq b_j$ with $b_j < a_f$, and $D^u$ is not satisfiable. As we know that $t_f$ is firable, this cannot be the case, and $D^u$ has a solution, but yet, we have to include in the computation of the next firing domains reached after firing of $t_f$ the constraints on
%possible value of
$\theta_f$ due to urgency of other transitions.
					
\paragraph{3) Substitution of variables:}
					As $t_f$ fires after elapsing $\theta_f$ time units, the time to fire of other transitions whose clocks did not yet exceed their upper bounds decreases by the same amount of time. Variables of timed-out transitions have already been eliminated in $D^u$. So for every $t_j\neq t_f$ that has an associated constraint $a_j \leq \theta_j \leq b_j$ we do a variable substitution reflecting the fact that the new time to fire $\theta'_j$ decreases w.r.t. the former time to fire $\theta_j$. We set $\theta_j := \theta_f + \theta'_j$.
          When this is done, we obtain a domain $D'^{u,b_r}$ over a set of variables $\theta'_{i_1},\dots \theta'_{i_k}$, reflecting constraints on the possible remaining times to upper bounds of all enabled transitions that did not timeout yet.
%If this domain is not empty, then there firing transition $t_f$ after letting a delay $\theta_f \in [b_r,b_{r+1}]$ elapse is allowed.

\paragraph{4) Variable Elimination:}
					As $t_f$ fired at time $\theta_f$, it introduced new relationships between remaining firing times of other transitions, i.e other $\theta_i'\neq \theta_f$, that must be preserved in the next state class. However, as $t_f$ is fired and changed the marking of the waiting net, in the next class, either $t_f$  is a newly enabled transition, or it is not enabled. In the first case, the new domain will contain a fresh variable attached to the new occurrence of $t_f$ enabled by the firing. In the second case, variable $\theta_f$ should not appear in the new domain.
We hence need to remove the "old" variable $\theta_f$ from inequalities, while preserving an equivalent set of constraints, before introduction of constraints associated with newly enabled transitions. This is achieved by elimination of variable $\theta_f$ from $D'^{u,b_r}$, for instance with the well known Fourier-Motzkin technique (see Appendix~\ref{appendix-Fourier} for details).
					We proceed similarly with variable $\theta'_i$ for every transition $t_i$ that is enabled in marking $M.N$ but not in $M.N - \preset{t_f}$. After elimination, we obtain a domain $D'^{E,b_r}$ over remaining variables.

					%\item {\color{red} \textbf{Adding constraints transitions becoming fully enabled:} For every transition $t_k$ that is fully enabled in $M'.N'$, but not in $M.N$, and that has already reached its upper bound, we add a constraint of the form $\theta_k=0$. This transition will then be urgent for the next steps.}
					
\paragraph{5) Addition of new constraints :}
          The last step to compute the next state classes is to introduce fresh constraints for firing times of newly enabled transitions. For every
          $t_i \in \uparrow \mathsf{enabled}(M.N, t_f)$ we add to $D'^{E,b_r}$ the constraint $\alpha(t_i) \leq \theta'_i \leq \beta(t_i)$.
					For every timed-out transition $t_k$ that becomes fully enabled, we add to $D'^{E,b_r}$ the constraint $\theta_k=0$. Timed-out transitions that become fully enabled are hence urgent in the next class.
					After adding all constraints associated to newly enabled transitions, we obtain a domain, in which we can rename every $\theta'_i$ to
$\theta_i$ to get a domain $D'^{F,b_r}$. Notice that this domain needs not be minimal, so we do a last normalization step (see Definition \ref{def-canonical-form}) to obtain a final canonical domain $\mathsf{next}_r(D, t_f) = D'^{F,b_r\:*}$.

\medskip
Let us compare the state class construction in TPNs and in waiting nets. In TPNs (see~\cite{BerthomieuD91,LimeR06}), for a domain $D$ and a transition $t_f$, there is a single successor domain in $\mathsf{Post}(D,t_f)$, and hence a state class can have up to $|T|$ successors.
In waiting nets, for a given firable transition $t_f$, $\mathsf{Post}(D,t_f)$ contains one domain per bound in $B^{t_f}_{M.N,D}$, i.e. $\mathsf{Post}(D,t_f)$ is not a singleton anymore.
%It is hence clear that a state class can have more than one successor, with different markings and domains.
Now, if a waiting net has no control place, there is no bound on waiting transitions to consider, as enabled transitions are fully enabled in every configuration. Step 1 of successor construction leaves the starting domain $D$ unchanged, and consequently the state class construction shown above is exactly the standard construction for TPNs. The number of successors for a state class in TPNs is exactly the number of enabled transitions.
	Let $\mathsf{Post}(C)$ be the set of successors of a class $C$. Then $|\mathsf{Post}(C)| \leq |\mathsf{enabled}(M.N)|^2$, as the number of bounds to consider for a transition cannot exceed the number of enabled transitions.
Computing successors can be repeated from each class in $Post(C)$.
For a given net $\mathcal W$, and a given marking $M_0.N_0$, we denote by $\creach{W}$ the set of classes that can be built inductively. This set need not be finite, but we show next that this is due to markings, and that the set of domains appearing in state classes is finite.

\begin{definition}{(State Class Graph)}
\label{def-stateclassgraph}
The {\em State Class Graph} of a waiting net $\mathcal{W}$ is  a graph $SCG(\mathcal{W})= (\creach{W}, C_0, \longrightarrow)$ where %$C = \mathbb{N}^{P \cup C}\times \mathbb{R}^T$,
$C_0 = (M_0.N_0, D_0)$, and $C \overset{t}\longrightarrow C'$ iff $C' \in Post(C,t)$.
\end{definition}

Let $\rho=(M_0.N_0,v_0) \overset{d_1}\longrightarrow (M_0.N_0,v_0\oplus d_1) \overset{t_1}\longrightarrow (M_1.N_1,v_1)\dots (M_k.N_k,v_k)$ be a run of $\mathcal W$ and
$\pi = (M'_0.N'_0,D_0).(M'_1.N'_1,D_1) \dots (M'_k.N'_k,D_k)$ be a path in $SCG(\mathcal{W})$. We will say that $\rho$ and $\pi$ {\em coincide} iff $\forall i\in 1..k, M_i.N_i=M'_i.N'_i$, and for every step  $(M_i.N_i,v_i) \overset{d_i}\longrightarrow (M_i.N_i,v_i \oplus d_i) \overset{t_i}\longrightarrow (M_{i+1}.N_{i+1},v_{i+1})$, there exists an interval $[b_r,b_{r+1}]$ such that $d_i\in [b_r,b_{r+1}]$ and
$D_{i+1} = \mathsf{next}_r(D_i, t_i)$.
%% in class $(M_{i+1}.N_{i+1},D_{i+1})$

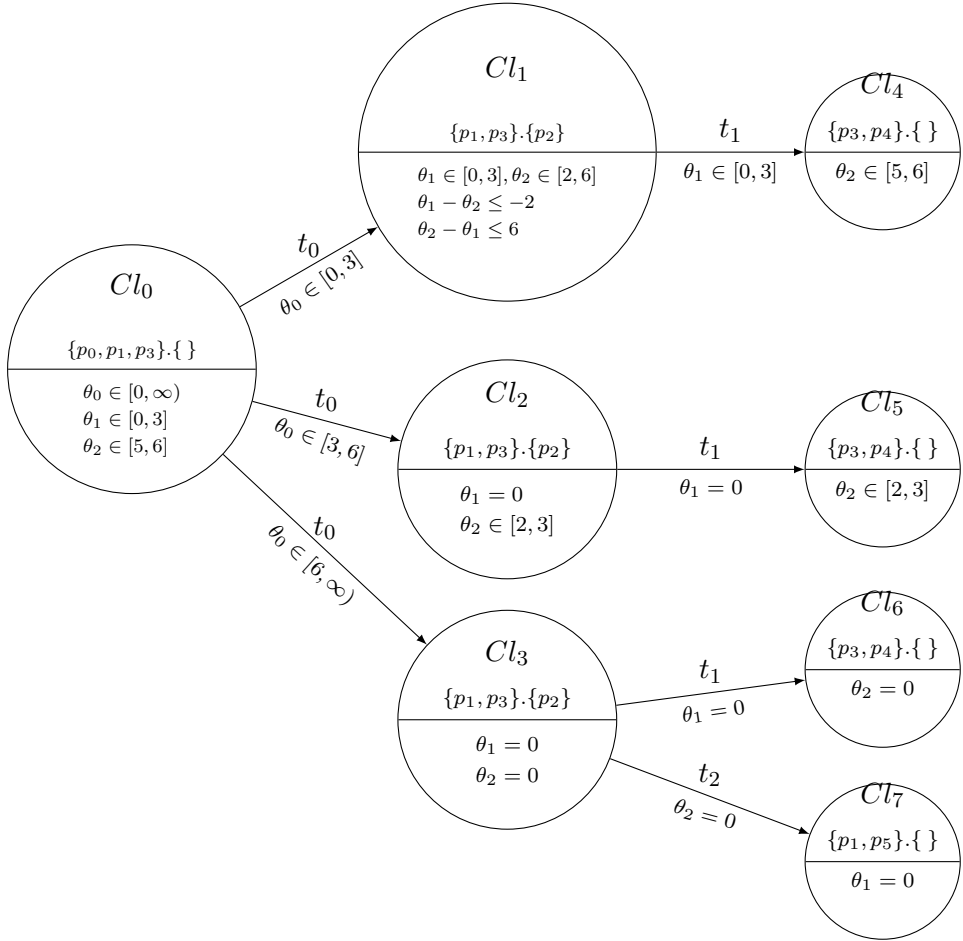
\begin{figure}[ht]
\vspace*{3mm}
\begin{center}
\begin{tikzpicture}
\begin{scope}[scale=1.1]
\node [draw,circle split,minimum size=0.7cm,scale=0.9] (C0) at (0,0)
{\scriptsize$\{ p_0,p_1,p_3\}.\{~\}$
\nodepart{lower} \scriptsize $\begin{array}{l}\theta_0\in[0,\infty)\\ \theta_1\in[0,3]\\ \theta_2\in[5,6]\end{array}$};

\node [] (lp0) at (0,1) {$Cl_0$};

\node [draw,circle split,minimum size=2cm,scale=0.9] (C1) at (4.5,2.6)
{\scriptsize$\{p_1,p_3\}.\{p_2\}$
\nodepart{lower}
\scriptsize$\begin{array}{l}
\theta_1\in[0,3], \theta_2\in[2,6]\\
\theta_1 - \theta_2 \leq -2\\
\theta_2 - \theta_1 \leq 6
\end{array}$
 };
\node [] (lc1) at (4.5,3.6) {$Cl_1$};

\node [draw,circle split,minimum size=0.7cm] (C2) at (4.5,-1.2)
{\scriptsize$\{p_1,p_3\}.\{p_2\}$
\nodepart{lower}\scriptsize$\begin{array}{l}\theta_1=0 \\ \theta_2\in [2,3]\end{array}$};

\node [] (lp2) at (4.5,-0.3) {$Cl_2$};

\node [draw,circle split,minimum size=0.7cm] (C3) at (4.5,-4.2) {\scriptsize$\{p_1,p_3\}.\{p_2\}$
\nodepart{lower}\scriptsize$\begin{array}{l}\theta_1=0 \\ \theta_2=0\end{array}$};
\node [] (lp3) at (4.5,-3.4) {$Cl_3$};

\node [draw,circle split,minimum size=0.7cm] (C4) at (9,2.6)
{\scriptsize$\{p_3,p_4\}.\{~\}$
\nodepart{lower}
\scriptsize$\theta_2\in[5,6]$};
\node [] (lc4) at (9,3.4) {$Cl_4$};

\node [draw,circle split,minimum size=0.7cm] (C5) at (9,-1.2)
{\scriptsize$\{p_3,p_4\}.\{~\}$
\nodepart{lower}
\scriptsize$\theta_2\in[2,3]$};
\node [] (lc5) at (9,-0.4) {$Cl_5$};

\node [draw,circle split,minimum size=0.7cm] (C6) at (9,-3.6)
{\scriptsize$\{p_3,p_4\}.\{~\}$
\nodepart{lower}
\scriptsize$\theta_2=0$};
\node [] (lc6) at (9,-2.8) {$Cl_6$};

\node [draw,circle split,minimum size=0.7cm] (C7) at (9,-5.9)
{\scriptsize$\{p_1,p_5\}.\{~\}$
\nodepart{lower}
\scriptsize$\theta_1=0$};
\node [] (lc5) at (9,-5.1) {$Cl_7$};

\draw [-latex] (C0) edge node [above] {$t_0$} node [below, rotate=35]{\scriptsize$\theta_0\in[0,3]$} (C1){};
\draw [-latex] (C0) edge node [above] {$t_0$} node [below, rotate=-20]{\scriptsize$\theta_0\in[3,6]$} (C2){};
\draw [-latex] (C0) edge node [above] {$t_0$} node [below, rotate=-42]{\scriptsize$\theta_0\in[6,\infty)$} (C3){};
\draw [-latex] (C1) edge node [above] {$t_1$} node [below]{\scriptsize$\theta_1\in[0,3]$}(C4){};
\draw [-latex] (C2) edge node [above] {$t_1$} node [below]{\scriptsize$\theta_1=0$}(C5){};
\draw [-latex] (C3) edge node [above] {$t_1$} node [below,rotate=8]{\scriptsize$\theta_1=0$}(C6){};
\draw [-latex] (C3) edge node [above] {$t_2$} node [below,rotate=-15]{\scriptsize$\theta_2=0$}(C7){};

\end{scope}
\end{tikzpicture}\vspace*{1mm}
\caption{A non-deterministic state class graph for the waiting net of Figure~\ref{FigureInfluenceBounds}.}
\label{FigureNonDetStateClass}
\end{center}\vspace*{-4mm}
\end{figure}

\vspace{\medskipamount}
Let us illustrate the construction of state classes with the example of Figure~\ref{FigureInfluenceBounds}.
We start from an inital state class $Cl_0= (M_0.N_0, D_0)$ with $M_0.N_0=\{p_0,p_1,p_3\}\{\}$ and $D_0$ is given by the set of constraints $0 \leq \theta_0, 0 \leq \theta_1\leq 3, 5 \leq \theta_2 \leq 6$. From the initial configuration $(M_0.N_0,v_0)$ one can only fire transition $t_0$, that is, wait for some duration $\theta_0$ and then perform a discrete move with transition $t_0$. So $t_0$ is also the only firable transition from state class $Cl_0$. Here, we have two important bounds, $3$ and $6$ which are the upper bounds attached to $t_1$ and $t_2$ respectively.
Let us assume that $t_0$ fires after a duration in $[0,3]$. We hence fire from a domain that satisfies the following constraints:

$$D'_0=\left \{
\begin{array}{l}
0 \leq \theta_0 \leq 3\\
0 \leq \theta_1 \leq 3 \\
5 \leq \theta_2 \leq 6 \\
\end{array}
\right.
$$

One can easily find values for $\theta_0,\theta_1,\theta_2$ satisfying $D'_0$ and such that $\theta_0 \leq \theta_1$ and $\theta_0 \leq \theta_2$. So $t_0$ is firable from $Cl_0$.
Let $\theta'_1,\theta'_2$ be variables describing the remaining times to fire after firing $t_0$. As $t_0$ is the fired transition, all remaining (positive) times are decreased by $\theta_0$, i.e. $\theta_1'=\theta_1 -\theta_0$ and $\theta_2'=\theta_2 -\theta_0$. This gives a variable change allowing to obtain the following domain:
$$
\left \{ \begin{array}{l}
0 \leq \theta_0 \leq 3\\
0 \leq \theta_1', 0 \leq \theta_2'\\
0 \leq \theta_1' + \theta_0 \leq 3 \\
5 \leq \theta_2' + \theta_0 \leq 6 \\
\end{array}
\right.
\mbox{~or equivalently~}
\left\{
\begin{array}{l}
0 \leq \theta_0 \leq 3\\
0 \leq \theta_1', 0 \leq \theta_2'\\
0 - \theta_1' \leq   \theta_0 \leq 3 - \theta_1' \\
5 - \theta_2' \leq   \theta_0 \leq 6 - \theta_2' \\
\end{array}
\right.
$$

The variable to eliminate is now $\theta_0$, we can hence reorganize lines containing $\theta_0$ in this domain as
$$ \max (0, 0-\theta_1',5-\theta_2') \leq \theta_0 \leq \min(3, 3 - \theta_1', 6 - \theta_2')$$

We hence obtain, after elimination of $\theta_0$
$$ \left\{
\begin{array}{lcl}
0 \leq 3 & (tautology)\\
0 \leq \theta_1', 0 \leq \theta_2'\\
0 \leq 3 - \theta_1' &\Longleftrightarrow &\theta_1' \leq 3 \\
0 \leq 6 - \theta_2' &\Longleftrightarrow &\theta_2' \leq 6 \\
0-\theta_1' \leq 3 &\Longleftrightarrow   & -3 \leq \theta_1'\\
0-\theta_1' \leq 3 - \theta_1'  &(tautology) \\
0-\theta_1' \leq 6 - \theta_2'\\
5-\theta_2'  \leq 3 &\Longleftrightarrow & 2 \leq \theta_2'\\
5-\theta_2' \leq  3 - \theta_1'\\
5-\theta_2' \leq  6 - \theta_2' &(tautology)\\
\end{array}
\right.
$$

After elimination of redundancies and tautologies, and renaming $\theta'_1,\theta'_2$ into $\theta_1,\theta_2$ we obtain a domain $D_1$ with the following constraints:
$$
D_1 = \left \{\begin{array}{l}
0 \leq \theta_1 \leq 3\\
2 \leq \theta_2 \leq 6\\
\theta_2 - \theta_1 \leq 6 \\
\theta_1-\theta_2 \leq  -2 \\
\end{array}\right.
$$

The state class reached by firing $t_0$ after elapsing $0$ to $3$ time units is $Cl_1= (M_1.N_1,D_1)$ with $M_1=\{p_1,p_3\}$ and $N_1= \{p_2\}$. One can notice that, as transitions $t_1,t_2$ remain enabled after firing of $t_0$, variable elimination adds diagonal constraints of the form
$\theta_2 - \theta_1 \leq 6 $ and $\theta_1 - \theta_2 \leq -2 $. This is the case in general for all pairs of transitions that remain enabled. Domains are hence not limited to interval contraints of the form $a_i \leq \theta_i \leq b_i$.
Similarly, starting from $Cl_0$ one can let between $3$ and $6$ time units elapse and reach ctate class $Cl_2=(M_2.N_2,D_2)$.
$D_2$ is easily obtained by progressing $D_0$ of $3$ time units (this is a variable change replacing $\theta_i$ by $\theta'_i+3$), and projecting away constraints containing $\theta'_1$. Then, firing $t_0$ gives a pair of constraints  $0 \leq \theta_0' \leq 3$ and $2 \leq \theta_2' + \theta_0' \leq 3$. We let the reader check that this reduces to  $2 \leq \theta_2 \leq 3$ after elimination of  $\theta_0$. Then by adding constraints on $\theta_1$ arising from the fact that after firing of $t_0$, $t_1$ becomes fully enabled, we obtain :
$$D_2 = \left\{ \begin{array}{l}
\theta_1=0\\
2 \leq \theta_2 \leq 3\\
\end{array}
\right. $$

One can immediately notice that from any configuration in class $Cl_2$, transition $t_1$ is urgent, and is the the only transition that can fire. We let the readers build the remaining classes, and give the complete state class graph for the waiting net of figure~\ref{FigureInfluenceBounds} in Figure~\ref{FigureNonDetStateClass}. One can observe that this state class graph {\em is not} deterministic.

\begin{proposition}[Completeness]
\label{prop-stateclass-wnet-Compl}
%The language of state class graph defined is same as the language of underlying Waiting net.
For every run $\rho=(M_0.N_0,v_0) \dots (M_k.N_k,v_k)$ of $\mathcal{W}$ there exists a path $\pi$ of $SCG(\mathcal{W})$
such that $\rho$ and $\pi$ coincide.
\end{proposition}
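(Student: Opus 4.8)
The plan is to prove completeness by induction on the length $k$ of the run $\rho$, showing that at each step the target valuation $v_i$ is a witness living inside the domain $D_i$ built along a suitable path $\pi$. The natural invariant to carry is: for the prefix of $\rho$ of length $i$, there is a path $\pi$ of $SCG(\mathcal{W})$ ending in a class $(M_i.N_i, D_i)$ such that the markings agree and, crucially, the vector of \emph{remaining times to upper bounds} determined by $v_i$ is a solution of $D_i$, i.e.\ it lies in $\llbracket D_i \rrbracket$. Concretely, for each transition $t_j$ still enabled in $M_i.N_i$ I associate the value $\theta_j := \beta(t_j) - v_i(t_j)$ (the time left before clock $x_{t_j}$ reaches its upper bound), and the invariant asserts that this assignment satisfies every inequality of $D_i$. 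This is the right quantity because Definition~\ref{def-stateclass} explicitly interprets $\theta_i$ as the time until $x_{t_i}$ hits $\beta(t_i)$, for both fully enabled and waiting transitions.

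\medskip

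The base case is immediate: for the empty run the valuation is $v_0$, and $\beta(t_j) - v_0(t_j) = \beta(t_j) - \alpha(t_j)$ lies in $[\alpha(t_j)-\alpha(t_j)+\alpha(t_j),\dots]$ — more precisely, $v_0(t_j)=0$ gives $\theta_j = \beta(t_j)$, and one checks directly that the vector $(\beta(t_j))_j$ satisfies $D_0 = \{\alpha(t_i)\le\theta_i\le\beta(t_i)\}$ since $\alpha(t_j)\le\beta(t_j)$. For the inductive step I consider one combined move $(M_i.N_i,v_i)\overset{d_i}\longrightarrow(M_i.N_i,v_i\oplus d_i)\overset{t_i}\longrightarrow(M_{i+1}.N_{i+1},v_{i+1})$. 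I first locate the interval $[b_r,b_{r+1}]$ of consecutive bounds in $B^{t_i}_{M_i.N_i,D_i}$ into which the chosen delay $d_i$ falls; this selects which successor $\mathsf{next}_r(D_i,t_i)$ the path $\pi$ should take, and simultaneously identifies exactly the waiting transitions whose clocks reached $\beta$ during the timed move (those timed-out between $b_r$ and the firing). I then have to verify that $t_i$ is genuinely firable from the class (so the successor exists) and that the new witness vector $(\beta(t_j)-v_{i+1}(t_j))_j$ satisfies $\mathsf{next}_r(D_i,t_i)$. The verification proceeds by tracking the witness through the five construction steps: time progress (step 1) shifts each surviving $\theta_k$ by the bounds and drops timed-out variables, matching exactly the operational effect of $v_i\oplus d_i$ freezing the waiting clocks at $\beta$; the firing-condition constraint $\theta_i\le\theta_j$ (step 2) holds because urgency of the operational semantics guaranteed no fully enabled $t_j$ had to fire before $t_i$; the substitution $\theta_j := \theta_f + \theta_j'$ (step 3) mirrors the decrease of remaining times by $d_i$; elimination (step 4) preserves satisfiability of the witness by standard Fourier--Motzkin soundness; and the fresh constraints for $\uparrow\mathsf{enabled}(M_i.N_i,t_i)$ (step 5) are satisfied because newly enabled transitions have clock $0$, so their $\theta$ equals $\beta\in[\alpha,\beta]$, while timed-out transitions that become fully enabled correctly get $\theta_k=0$ matching a frozen clock at $\beta$.

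\medskip

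\textbf{The hard part} will be the bookkeeping in the timed move, specifically reconciling the operational rule $v'(t)=\min(\beta(t),v(t)+d)$ for waiting transitions with the domain-level variable elimination. One must argue carefully that the particular bound interval $[b_r,b_{r+1}]$ selected from the real delay $d_i$ is consistent with the waiting transitions that the operational semantics froze: a waiting transition $t_k$ has $v_{i+1}(t_k)=\beta(t_k)$ (hence $\theta_k=0$, eliminated from the domain) exactly when $v_i(t_k)+d_i \ge \beta(t_k)$, i.e.\ when its bound $b_k \le b_r$ in the ordering, and these are precisely the variables dropped by the successive time-progress steps up to $b_r$. Making this correspondence airtight — that the \emph{set} of eliminated variables in $D^r$ coincides with the \emph{set} of operationally frozen waiting clocks — is the crux, and it relies on the fact that frozen clocks contribute no further urgency constraint, which is why they may be safely erased. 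I expect the marking-level reasoning (that $M_{i+1}.N_{i+1}$ and $\newenab{M_i.N_i,t_i}$ agree on both sides) to be routine; the genuine content is this alignment between the continuous freezing semantics and the discrete bound-partition used to define $\mathsf{next}_r$.
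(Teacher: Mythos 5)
Your induction skeleton matches the paper's (both arguments proceed by induction on the length of the run, and both select the bound interval $[b_r,b_{r+1}]$ containing the concrete delay $d_i$), but your carried invariant --- that the single corner vector $\theta_j := \beta(t_j)-v_i(t_j)$ lies in $\llbracket D_i\rrbracket$ --- is too weak to push the inductive step through, and this is a genuine gap rather than elided bookkeeping. When you instantiate $\theta_f$ with the actual delay $d_i$ in step 2, you must re-verify the constraints of $D_i$ itself at this modified point, namely the lower bound $a_f \leq d_i$ and every diagonal $\theta_k - \theta_f \leq c_{kf}$. Since $d_i$ may be strictly smaller than your witness value $\beta(t_f)-v_i(t_f)$, decreasing the $\theta_f$-coordinate only \emph{increases} the differences $\theta_k-\theta_f$, so the corner point yields inequalities in the wrong direction and nothing in your invariant excludes $d_i < a_f$ or a violated diagonal. (The same problem already infects your firability claim: satisfiability of $D_{|full}\cup\{\theta_f\leq\theta_j\}$ needs an exhibited solution with $\theta_f=d_i$, which the all-maximal corner does not provide.) What is actually needed is a relation between the domain's constants and \emph{all} legal firing delays from the concrete state --- e.g.\ a strengthened invariant that the whole box $\prod_j\bigl[\max\bigl(0,\alpha(t_j)-v_i(t_j)\bigr),\,\beta(t_j)-v_i(t_j)\bigr]$ is contained in $\llbracket D_i\rrbracket$ --- or, as the paper does instead, the separate Lemma~\ref{lem_durations_match_abstract_runs} showing $d_i\in[a_{i,f},b_{i,f}]$ by unfolding the recursive definition of the bounds along the run, combined with a negative-cycle analysis of the canonical constraint graph to show that adjoining $\theta_f=d$ and $\theta_f\leq\theta_j$ to $D_n$ preserves satisfiability, and finally Lemma~\ref{lemma_postset_not_empty} to get an actual successor class.

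The ``crux'' you identify in the last paragraph is real, but your proposed resolution of it is incorrect: freezing is a per-state phenomenon, not a per-class one. A waiting transition $t_k$ freezes iff $v_i(t_k)+d_i\geq\beta(t_k)$, and this can happen while its class-level bound $b_k$ (a supremum over the states aggregated in the class) still exceeds $b_r$, since the concrete state need not realize that supremum. In that case $\theta_k$ \emph{survives} the time-progress eliminations of step 1, and your linear witness update through the substitution of step 3 gives $\theta'_k = \theta^w_k - d_i < 0$; you must clip it to $0$ (matching $v_{i+1}(t_k)=\beta(t_k)$) and then argue that the clipped point still satisfies $\mathsf{next}_r(D_i,t_f)$, including its diagonals --- again out of reach of the single-corner invariant. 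So the claimed correspondence ``frozen exactly when $b_k\leq b_r$'' is false as stated, and the witness-tracking through steps 3--4 breaks precisely at the case you flagged as the hard part.
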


\begin{proof}
We only give a proof sketch, a complete proof of this proposition is provided in Appendix~\ref{Appendix_proof_reach}.
We proceed by induction on the length of runs. For the base case, we can easily prove
that any transition firing from the initial configuration after some delay $d$ gives a possible solution for $D_0$ and a successor class, as $D_0$ does not contain constraints of the form $\theta_i -  \theta_j \leq c_{ij}$ .
The induction step is similar, and slightly more involved, because domains contain constraints involving pairs of variables. However, we can show (Lemma~\ref{lem_durations_match_abstract_runs} in~Appendix~\ref{Appendix_proof_reach}) that along run $\rho$ for every pair of steps composed of a time elapsing of duration $d_i$ followed by the firing of a transition $t_f$, we have $d_i\in [a_{i,f},b_{i,f}]$, where $a_{i,f}$ is the lower and $b_{i,f}$ the upper bound on variable $\theta_f$ at step $i$ of the run. Hence, for every run of $\mathcal{W}$ there is a path that visits the same markings and maintains consistent constraints.
\end{proof}

\begin{proposition}[Soundness]
\label{prop_SCG_sound}
Let $\pi$ be a path of $SCG(\mathcal{W})$. Then there exists a run $\rho$ of $\wnet$ such that $\rho$ and $\pi$ coincide.
\end{proposition}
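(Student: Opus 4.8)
The plan is to prove soundness by induction on the length of the path $\pi = (M'_0.N'_0, D_0)\overset{t_1}\longrightarrow \dots \overset{t_k}\longrightarrow (M'_k.N'_k, D_k)$, building a coinciding run $\rho$ one move at a time. Rather than merely exhibiting a single run, I would maintain the stronger invariant dual to the completeness argument: the set of valuations $v$ for which $(M_0.N_0,v_0)$ reaches $(M'_i.N'_i, v)$ along a run coinciding with the prefix $\pi_{\leq i}$ coincides with the image of $\llbracket D_i \rrbracket$ under the remaining-time encoding, where a solution $\theta_j$ of $D_i$ is read as the time still available before $t_j$ must fire (if fully enabled) or before its clock freezes at $\beta(t_j)$ (if waiting). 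Proving this set equality is what makes the induction close, since it guarantees that at every class we retain exactly the freedom needed to realize the next symbolic step.

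For the base case the path reduces to $C_0$ and the run $(M_0.N_0, v_0)$ coincides trivially; since $v_0(t_j)=0$ for every enabled $t_j$, the admissible future delays are exactly $[\alpha(t_j),\beta(t_j)]$, which is $\llbracket D_0 \rrbracket$. For the inductive step I consider the last move $C_i \overset{t_f}\longrightarrow C_{i+1}$ with $C_{i+1} = (M'_{i+1}.N'_{i+1}, \mathsf{next}_r(D_i, t_f)) \in \mathsf{Post}(C_i, t_f)$, realized over a bound interval $[b_r, b_{r+1}]$ whose endpoints are consecutive values in $B^{t_f}_{M'_i.N'_i, D_i}$. Because $C_{i+1}$ is a successor, $t_f$ is firable and $D_{i+1}$ is satisfiable, so it admits at least one solution.

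The core of the step is to show that the five operations defining $\mathsf{next}_r$ are reversible at the level of solutions: given any solution of $D_{i+1}$, I reconstruct a delay $d_i \in [b_r, b_{r+1}]$ and a solution of $D_i$ from which elapsing $d_i$ and firing $t_f$ produces it. The firing-condition step records $\theta_f \leq \theta_j$ for every fully enabled $t_j$, which is exactly the semantic requirement that elapsing $d_i=\theta_f$ makes no fully enabled transition urgent too early; the substitution $\theta_j := \theta_f + \theta'_j$ inverts the relation between old and new times-to-fire; and the Fourier-Motzkin elimination of $\theta_f$ (and of the variables of transitions disabled by the firing) is a projection, so every solution of the projected domain lifts to a solution of the pre-elimination domain $D'^{u,b_r}$ with a concrete admissible value of $\theta_f$. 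Invoking the induction hypothesis I then pick $v_i \in \llbracket D_i \rrbracket$ matching the reconstructed old times, let $d_i=\theta_f$ elapse, and fire $t_f$. The time-progress step ensures that the clocks of waiting (timed-out) transitions were correctly frozen at their upper bounds and their variables eliminated up to $b_r$, while the new-constraints step assigns $\theta_k = 0$ precisely to timed-out transitions that become fully enabled; together these match the clock freezes, resets for newly enabled transitions, and urgency of the operational semantics, yielding $(M'_{i+1}.N'_{i+1}, v_{i+1})$ with $v_{i+1}$ encoding the chosen solution of $D_{i+1}$ and re-establishing the invariant. Here I would reuse Lemma~\ref{lem_durations_match_abstract_runs} to certify that the reconstructed delay lies in the interval $[a_{i,f},b_{i,f}]$ associated with $\theta_f$.

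The main obstacle is the interaction between variable elimination and the time-progress step: I must verify that projecting away $\theta_f$ together with the variables of timed-out or disabled transitions never loses solutions, and that the value of $\theta_f$ recovered by lifting lands in the \emph{specific} interval $[b_r, b_{r+1}]$ tagged by this successor. This is where the pruning of $B^{t_f}_{M.N,D}$ down to consistent bounds is essential, since it guarantees the interval is non-empty and that no fully enabled transition becomes urgent strictly before $b_r$. Equally delicate is the faithful abstraction of the frozen clocks of waiting transitions: I must check that the operational update $v'(t)=\min(\beta(t), v(t)+d)$ is exactly mirrored by the removal of the variable $\theta_k$ once its bound is passed, so that the reconstructed run respects both continuity and urgency. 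Once these facts are secured, the induction closes and produces the required run $\rho$ coinciding with $\pi$.
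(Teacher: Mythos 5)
Your proposal is sound, but it follows a genuinely different route from the paper's proof. The paper (Appendix~\ref{Appendix_proof_reach}) also inducts on path length, but it never manipulates solution sets: at each step it \emph{constructs one explicit witness delay}, choosing $d=a_{n,f}$ (the lower bound on $\theta_f$ in $D_n$) and assuming, w.l.o.g., that the coinciding run built so far used the minimal delays $d_q = a_{q,t_f^q}$; it then verifies by direct arithmetic on the explicit bound expressions (e.g.\ $a_{n,j}=\alpha_j - \sum_q b_{q,t_f^q}$, $b_{n,j}=\beta_j-\sum_q a_{q,t_f^q}$) that this delay respects urgency of all fully enabled transitions and hits the firing window of $t_f$, reducing everything to satisfiability of $D_n \cup \{\theta_f \leq \theta_i\}$, with Lemma~\ref{lemma_postset_not_empty} supplying a satisfiable successor domain. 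You instead prove the stronger Berthomieu--Diaz-style exactness invariant and obtain soundness by lifting an \emph{arbitrary} solution of $D_{i+1}$ backwards through the five successor operations of Definition~\ref{def-successors}, exploiting that Fourier--Motzkin elimination is an exact projection and that the substitutions $\theta_j := \theta_f + \theta'_j$ (and the time-progress shifts) are invertible affine maps. Your route buys a reusable exactness statement that localizes correctness at each symbolic operation, and it correctly identifies where the pruned bound set $B^{t_f}_{M.N,D}$ matters (the lifted $\theta_f$ lands in the tagged interval $[b_r,b_{r+1}]$ because that constraint was added before elimination); the paper's route is leaner, proving only what the statement needs. Two caveats on your formulation: first, the set \emph{equality} you posit is more than soundness requires --- the inclusion ``realizable $\subseteq$ image of $\llbracket D_i\rrbracket$'' is completeness, proved separately as Proposition~\ref{prop-stateclass-wnet-Compl}, so you should either weaken your invariant to the single inclusion your lifting establishes or cite completeness for the other direction. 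Second, your literal reading of $\theta_j$ for a waiting transition as ``time before the clock freezes at $\beta(t_j)$'' makes that quantity deterministic given the prefix delays, so the base-case equality with the interval $[\alpha_j,\beta_j]$ in $D_0$ fails under that reading; it must be read as the remaining firing window should control arrive. This is not fatal: the paper's notion of coinciding runs constrains only markings, delays, and successor domains, so only the fully-enabled components of the lifted solution need to be semantically realizable, and your lifting delivers exactly those --- but the invariant as you stated it would not survive refereeing without this repair.
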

\begin{proof}
As for Proposition~\ref{prop-stateclass-wnet-Compl}, we can prove this property by induction on the length of paths in $SCG(\mathcal{W})$. A complete proof of this proposition is provided in Appendix~\ref{Appendix_proof_reach}.
\end{proof}

Proposition~\ref{prop-stateclass-wnet-Compl} shows by induction on runs length that every marking reached by a run of a waiting net appears in its state class graph. The proof of Proposition~\ref{prop_SCG_sound} uses a similar induction on paths length, and shows that we do not introduce new markings. These propositions show that the state class graph is a sound and complete abstraction, even for unbounded nets. We can show a stronger property, which is that the set of domains appearing in a state class graph is finite.

\begin{proposition}
\label{prop-fin-domain}
The set of firing domains in $SCG(\mathcal{W})$ is finite.
\end{proposition}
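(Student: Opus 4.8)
The plan is to show that, although the reachable markings may be infinitely many, the coefficients occurring in the \emph{canonical form} of any reachable firing domain range over a fixed finite set. Since a canonical domain (Definition~\ref{def-canonical-form}) is completely determined by its set of variables --- a subset of the finite set $\{x_t \mid t\in T\}$ --- together with the values $a_i^*, b_i^*, c_{jk}^*$, and since the canonical form is unique per equivalence class, bounding these coefficients to a finite range immediately bounds the number of distinct domains in $SCG(\mathcal{W})$.

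First I would normalise the timing constants: letting $\kappa$ be the least common multiple of the denominators of all the rationals $\alpha(t),\beta(t)$, rescaling time by $\kappa$ turns every finite $\alpha(t),\beta(t)$ into an integer without changing the structure of $SCG(\mathcal{W})$. I would then verify that every operation used to build a successor in Definition~\ref{def-successors} --- time progress (which subtracts an existing bound $b_r$), the substitution $\theta_j := \theta_f + \theta_j'$, Fourier--Motzkin elimination, and the final normalisation (a shortest-path closure) --- only combines existing coefficients through integer additions and $\min/\max$. Consequently, if the coefficients of $D$ lie in $\mathbb{Z}\cup\{+\infty\}$, so do those of every $\mathsf{next}_r(D,t_f)$; as $D_0$ has integer coefficients, all reachable domains have coefficients in $\mathbb{Z}\cup\{+\infty\}$.

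The core of the argument is a uniform \emph{bound} on these integer coefficients, which I would establish as an invariant of the successor construction. For the upper bounds I claim $0\le b_i^*\le \beta(t_i)$: a variable $\theta_i$ enters a domain with the constraint $\alpha(t_i)\le\theta_i\le\beta(t_i)$ exactly when $t_i$ becomes newly enabled, and the only subsequent operations touching it --- the downward shift $\theta_i\mapsto\theta_i-b_r$ with $b_r\ge 0$, and the substitution replacing $\theta_i$ by $\theta_i'+\theta_f$ with $\theta_f\ge 0$ --- can only lower it, while canonicalisation only tightens bounds; hence its supremum never exceeds $\beta(t_i)$. Non-negativity of remaining times gives $a_i^*\ge 0$, and for diagonals $\theta_i-\theta_j\le\theta_i\le b_i^*$ (using $\theta_j\ge 0$) yields $c_{ij}^*\le b_i^*\le\beta(t_i)$. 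Writing $\beta_{\max}=\max\{\beta(t)\mid\beta(t)<\infty\}$ and $\alpha_{\max}=\max_t\alpha(t)$, every finite $b_i^*$ and every finite diagonal thus lie in $[-\beta_{\max},\beta_{\max}]$, an infinite value arising only for some $t_i$ with $\beta(t_i)=\infty$. The lower bounds satisfy $a_i^*\le b_i^*$, hence are bounded whenever $b_i^*$ is finite; when $b_i^*=+\infty$, $a_i^*$ equals a shortest-path weight in the constraint graph over at most $|T|$ edges whose finite weights are bounded by $\max(\alpha_{\max},\beta_{\max})$, so $a_i^*$ is bounded as well. Every coefficient of a reachable canonical domain is therefore either $+\infty$ or an integer in a fixed bounded interval, of which there are finitely many; ranging over the finitely many variable sets $\subseteq\{x_t\mid t\in T\}$ leaves only finitely many canonical domains, which proves the proposition.

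The main obstacle is the upper-bound invariant $b_i^*\le\beta(t_i)$ together with the treatment of transitions with $\beta(t_i)=\infty$. Unlike for TPNs, the successor construction of Section~\ref{sec:reachability} interposes a \emph{time-progress} phase that shifts all surviving variables and eliminates timed-out ones, and a single firing may branch into several successors (one per bound in $B^{t_f}_{M.N,D}$); I must check that neither the shift, nor the Fourier--Motzkin elimination, nor the reintroduction of the urgent constraint $\theta_k=0$ for a timed-out transition that becomes fully enabled can push a coefficient outside the claimed bounds. The delicate case is a variable with infinite upper bound coupled through diagonal constraints to others: here one argues that its lower bound is still controlled, so that the associated diagonals are either $+\infty$ or remain within $[-\beta_{\max},\beta_{\max}]$. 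Once this case analysis confirms that all five steps preserve the bounds, finiteness follows as above; since the construction specialises to the standard TPN one when $C=\emptyset$, the genuinely new content is precisely the robustness of these bounds under the time-progress step.
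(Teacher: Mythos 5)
Your proof is correct, but it replaces the paper's key lemma with a genuinely different discreteness argument. The paper factors the proposition into \emph{boundedness} (Lemma~\ref{lem-boundedness}: $0\leq a_i\leq\alpha(t_i)$, $0\leq b_i\leq\beta(t_i)$, $-\alpha(t_k)\leq c_{jk}\leq\beta(t_j)$) plus \emph{linearity} (Lemmas~\ref{lem-li-FME}--\ref{lem-linearity}, following~\cite{BerthomieuD91}): every constant is a linear combination of the $\alpha$'s and $\beta$'s with integer coefficients in a bounded range, and there are only finitely many such bounded combinations. You keep the boundedness half --- your invariant $b_i^*\leq\beta(t_i)$, the non-negativity of lower bounds, and the check that all five successor steps (in particular the time-progress phase and the urgency constraint $\theta_k=0$, which are the genuinely new waiting-net content) preserve the bounds, mirror Lemma~\ref{lem-boundedness} --- but you replace linearity by rescaling time by the lcm $\kappa$ of the denominators, after which every operation (shifting by a bound $b_r$, the substitution $\theta_j:=\theta_f+\theta_j'$, Fourier--Motzkin, shortest-path closure) manipulates integers, so bounded integers give finiteness directly. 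This is sound: the rescaling makes the counting step elementary and spares you tracking the coefficient bound $K_{\mathcal{W}}$, whereas the paper's linearity route stays in the original time scale and yields the explicit structural description of the constants that Corollary~\ref{Cor-stateclass-finite} exploits to bound the size of $SCG(\mathcal{W})$ by $O(k_M^{|P|}\cdot(2K_{\mathcal{W}}+1)^{|T+1|^2})$. Two small repairs: your claimed interval $[-\beta_{\max},\beta_{\max}]$ for finite diagonals is too tight --- a transition with interval $[\alpha_k,\infty)$ and large $\alpha_k$ can push a diagonal down to about $-\alpha_k$ (the paper's invariant is $-\alpha(t_k)\leq c_{jk}$, and a concrete run with intervals $[0,5]$, $[100,\infty)$ produces $\theta_j-\theta_k\leq-95$), so use $\max(\alpha_{\max},\beta_{\max})$ uniformly, as you in fact do in your shortest-path estimate; and your lower-bound treatment ($a_i^*\leq b_i^*$ plus a shortest-path bound over at most $|T|+1$ edges when $b_i^*=\infty$) is weaker than the paper's direct $a_i\leq\alpha(t_i)$, but any fixed finite bound suffices for the finiteness conclusion.
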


\begin{proof}
We only give a proof sketch, and provide a complete proof in Appendix~\ref{Appendix_proof_reach}.
Domains are of the form $\{ a_i \leq \theta_i \leq b_i\}_{t_i\subseteq T} \cup \{ \theta_i -\theta_j \leq c_{i,j}\}_{t_i,t_j\subseteq T}$. We can easily adapt proofs of~\cite{BerthomieuD91} (lemma 3 page 9) to show that every domain generated during the construction of the SCG has inequalities of the form $a_i \leq \theta_i \leq b_i$ and $\theta_i - \theta_j \leq c_{ij}$, where $0\leq a_i \leq \alpha(t_i)$, $0\leq b_i \leq \beta(t_i)$ and $-\alpha(t_i)\leq  c_{ij} \leq \beta(t_i)$ and $\alpha_i,\beta(t_i)$ are the constant bounds of interval attached to $t_i$.
This does not yet prove that the set of domains is finite. We then show that domains are {\em bounded}, i.e. that $\theta_i$'s and their differences have rational upper and lower bounds (see Lemma~\ref{lem-boundedness} in Appendix).
We then show that domains are also {\em linear}, i.e. that constants appearing in inequalities are linear combinations of a finite set of constant values. Domain $D_0$ is bounded and linear, and a series of technical lemmas (given in Appendix~\ref{Appendix_proof_reach}) show that variable elimination, reduction to a canonical form, etc. preserve bounds and linearity (a similar result was shown in~\cite{BerthomieuD91} for domains of TPNs). The set of bounded linear domains between fixed bounds is finite, so the set of domains of a waiting net is finite.
\end{proof}

Finiteness of the set of domains of waiting nets is an important property, that is not a straightforward consequence of the syntactic proximity with TPNs nor of the effectiveness of successors construction. Indeed, waiting nets allow to stop time measurement for waiting transitions reaching their upper bounds. Petri nets with stopwatches, where time measurement can also be stopped, do not have a finite state class representation, even for bounded nets. The reason is that clock differences in domains of stopwatch Petri nets can take any value. Waiting nets avoid this issue because clocks are always stopped at a predetermined instant (when they reach the upper bound of an interval).

\begin{corollary}
\label{Cor-stateclass-finite}
If $\wnet$ is a bounded waiting net then $SCG(\wnet)$ is finite.
\end{corollary}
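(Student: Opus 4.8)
The plan is to combine the finiteness of the domain set (Proposition~\ref{prop-fin-domain}) with the finiteness of the reachable marking set that follows directly from boundedness. A state class is a pair $(M.N,D)$, so if both the set of markings and the set of firing domains appearing in $SCG(\wnet)$ are finite, the vertex set $\creach{W}$ is finite; combined with a bound on the out-degree of each vertex, this gives finiteness of the whole graph.

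First I would argue that only finitely many markings label the state classes of $SCG(\wnet)$. By the soundness result (Proposition~\ref{prop_SCG_sound}), every path of $SCG(\wnet)$ coincides with a run of $\wnet$, so every marking $M.N$ occurring in a state class of $SCG(\wnet)$ is a reachable marking. Since $\wnet$ is bounded, there is a constant $K$ with $M(p)\leq K$ for every $p\in P$ and $N(c)\leq K$ for every $c\in C$ on any reachable marking. As $P\cup C$ is finite, each reachable marking is a map from the finite set $P\cup C$ into $\{0,\dots,K\}$, so there are at most $(K+1)^{|P|+|C|}$ of them, and in particular only finitely many markings appear in $SCG(\wnet)$.

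Next, Proposition~\ref{prop-fin-domain} already guarantees that the set of firing domains appearing in $SCG(\wnet)$ is finite (indeed this holds even for unbounded nets). Putting the two finiteness facts together, the number of distinct state classes $(M.N,D)$ is at most the product of the number of reachable markings and the number of domains, which is finite; hence $\creach{W}$ is finite. To conclude that the graph itself is finite, I would bound the number of edges: by the earlier observation that $|\mathsf{Post}(C)|\leq |\mathsf{enabled}(M.N)|^2\leq |T|^2$, every vertex has at most $|T|^2$ outgoing edges, and a graph with finitely many vertices and bounded out-degree has finitely many edges.

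I do not expect a genuine obstacle here, since the two key ingredients are supplied by earlier results. The only point that requires care is to invoke soundness (Proposition~\ref{prop_SCG_sound}) so as to ensure that the markings occurring in the construction are truly reachable, and hence $K$-bounded: boundedness is a property of the reachable markings of $\wnet$ rather than an a priori property of the symbolic successor construction, so one must pass through the correspondence between paths of $SCG(\wnet)$ and runs of $\wnet$ before applying the bound $K$.
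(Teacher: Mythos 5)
Your proof is correct and follows essentially the same route as the paper: markings are finite by boundedness, domains are finite by Proposition~\ref{prop-fin-domain}, and the product bounds the set of state classes. Your explicit appeal to soundness (Proposition~\ref{prop_SCG_sound}) to ensure that markings occurring in $SCG(\wnet)$ are genuinely reachable, and your bound on the out-degree, merely make explicit two details the paper leaves implicit.
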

\begin{proof}
States of $SCG(\mathcal{W})$ are of the form $(M.N,D)$ where $M.N$ is a marking and $D$ a domain for times to fire of enabled transitions. By definition of boundedness, the number of markings in $SCG(\mathcal{W})$ is finite. By Prop.~\ref{prop-fin-domain}, the set of domains appearing in $SCG(\wnet)$ is finite, so $SCG(\mathcal{W})$ is finite.
\end{proof}

\vspace{-\medskipamount}
More precisely, if a net is $k_M$-bounded, there are at most $k_M^{|P|}$ possible markings, and the number of possible domains is bounded by  $(2\cdot K_\wnet+1)^{|T+1|^2}$, where $K_\wnet = \max_{i,j} \lfloor \frac{\beta_i}{\alpha_j} \rfloor$ is an upper bound on the number of linear combinations of bounds appearing in domains. Hence the size of $SCG(\wnet)$ is in $O(k_M^{|P|}\cdot (2\cdot K_\wnet+1)^{|T+1|^2})$.
A direct consequence of Proposition~\ref{prop-stateclass-wnet-Compl}, Proposition~\ref{prop_SCG_sound}, and Corollary~\ref{Cor-stateclass-finite} is that many properties of bounded waiting nets are decidable.

\begin{corollary}[Reachability and Coverability]
\label{Cor_reachability_coverability}
The reachability and coverability problems for bounded waiting nets are decidable and PSPACE-complete.
\end{corollary}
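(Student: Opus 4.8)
The plan is to combine the soundness and completeness of the state class graph (Propositions~\ref{prop_SCG_sound} and~\ref{prop-stateclass-wnet-Compl}) with its finiteness for bounded nets (Corollary~\ref{Cor-stateclass-finite}) to get decidability, and then to refine the induced decision procedure into an on-the-fly polynomial-space algorithm, matching it with a hardness reduction. For decidability, observe that by Propositions~\ref{prop-stateclass-wnet-Compl} and~\ref{prop_SCG_sound} a marking $M.N$ is reachable in $\wnet$ if and only if some state class $(M.N, D)$ occurs in $SCG(\wnet)$, and $M.N$ is coverable if and only if some reachable class $(M'.N', D')$ satisfies $M'.N' \geq M.N$. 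As $\wnet$ is bounded, $SCG(\wnet)$ is finite and effectively constructible, so both questions reduce to a (finite) graph search and are decidable.

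For the PSPACE upper bound, the idea is to avoid materializing the whole, exponentially large graph, and instead perform a nondeterministic reachability search storing only the current class. The key point is that each individual state class $(M.N,D)$ admits a polynomial-size description: the marking is a vector of $k_M$-bounded integers, and by Proposition~\ref{prop-fin-domain} the domain is a DBM over $|T|+1$ indices whose constants are rationals bounded in magnitude and of polynomially bounded bit-length. Each successor computation $\mathsf{Post}(C,t)$ uses only the operations already described (time progress, variable substitution, Fourier--Motzkin elimination, and Floyd--Warshall canonicalization), all in polynomial time and hence polynomial space. The algorithm keeps only the current class together with a step counter; since the number of classes is at most $k_M^{|P|}\cdot(2K_\wnet+1)^{|T+1|^2}$, any reachable target is reached along a path no longer than this bound, so the counter needs only polynomially many bits. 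The procedure therefore runs in NPSPACE, and PSPACE membership follows by Savitch's theorem.

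For PSPACE-hardness, I would argue by restriction: a waiting net with $C=\emptyset$ and every interval equal to $[0,\infty)$ is a bounded TPN whose semantics imposes no urgency, so its reachable markings coincide with those of the underlying untimed bounded Petri net, and its timed reachability and coverability coincide with the untimed ones. Since reachability and coverability for $1$-safe (bounded) Petri nets are already PSPACE-hard, the same lower bounds hold for bounded waiting nets. Combining the two bounds yields PSPACE-completeness of both problems.

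The step I expect to be the main obstacle is the PSPACE upper bound, and within it the claim that domains admit a polynomial-size encoding. One must use the boundedness and linearity established for Proposition~\ref{prop-fin-domain} to argue that the numerators and denominators of the rational DBM entries stay of polynomially bounded magnitude, and that none of the successor operations blows up this encoding. Once this encoding bound is secured, the guess-and-check search together with Savitch's theorem is routine, and the hardness direction is immediate by restriction.
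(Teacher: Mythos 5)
Your proposal is correct and follows essentially the same route as the paper: a nondeterministic on-the-fly walk of length at most $|SCG(\wnet)|$ through the state class graph, yielding NPSPACE and hence PSPACE via Savitch, together with hardness by restriction from reachability/coverability in $1$-safe Petri nets (viewing a bounded Petri net as a bounded waiting net with $C=\emptyset$ and $[0,\infty)$ intervals). The only difference is one of explicitness: you spell out the polynomial-size DBM encoding of individual state classes via the boundedness and linearity lemmas behind Proposition~\ref{prop-fin-domain}, a point the paper's proof leaves implicit in its remark that bounded-length walks are NLOGSPACE in the size of the explored graph.
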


\vspace{-\medskipamount}
\begin{proof}
For PSPACE membership, given a target marking $M_t.N_t$ it suffices to explore non-deterministically all runs starting from $(M_0.N_0,D_0)$ of length at most $|SCG(\wnet)|$ to find marking $M_t.N_t$ and prove reachability, or a marking that covers $M_t.N_t$ witnessing coverability. Random walks of bounded length are feasible in NLOGSPACE w.r.t. the size of the explored graph, whence the NPSPACE=PSPACE complexity.
For hardness, we already know that reachability for $1$-safe Petri nets is PSPACE-Complete~\cite{ChengEP95}, and a (bounded) Petri net is a (bounded) waiting net without control places and with $[0,\infty)$ constraints. Similarly, given $1$-safe Petri net and a place $p$, deciding if a marking with $M(p)=1$ (which is a coverability question) is reachable is PSPACE-complete~\cite{Esparza96}. This question can be recast as a coverability question for waiting nets, thus proving the hardness of coverability.\vspace*{-5mm}
\end{proof}
%%%%%%%%%%%%%%%%%%%%%%%%%%%%%%%%%%%%%%%%%%%%%%%%%%%

%\vspace{-\medskipamount}
Let us illustrate the advantages and expressive power of waiting nets with the example of Figure~\ref{fig_loop} containing a cyclic behavior. This net contains a control part (in Gray), composed of standard places $p_2,p_3$, of control places $c_3,c_4$, and of two transitions $tc_3$ and $tc_4$ with respective time intervals $[2,2]$ and $[1,1]$. The main idea in this control part is to move a token from place $c_3$ to place $c_4$ after 2 time units, and then move the token from $c_4$ to $c_3$ after 1 time unit. Control places $c_3$ and $c_4$ are used to allow/forbid firing of transitions $t_1$ and $t_2$ in the right part of the net. Transitions $t_1$ and $t_2$ are independent transitions with respective time intervals $[1,2]$ and $[2,3]$ and which firings are respectively allowed when place $c_3$ and $c_4$ are filled. Note that allowing or forbidding firings of transitions does not affect their enabledness: while transition $t_2$ is waiting for a token in $c_4$ to fire, one measures the time elapsed since it was last enabled (and similarly when transition $t_1$ is waiting for a token in $c_3$).

\begin{figure}[ht]
\begin{center}
\scalebox{0.9}{
\begin{tikzpicture}
%%%%%%%%%%%%%%%%%%%%%%%%%%%%%%%%%%%%%%%%%%%%%%%%%%%%%%%%%%%%%%%%%%%%%%%%
\begin{scope}

\node[rounded corners,draw,fill=gray!10,rectangle,dotted, minimum width=5.5cm, minimum height=4.5cm] (ctrl) at (0.6,-1.5) {};

\node [draw,circle,minimum size=0.7cm] (p2) at (-1,0) {$\bullet$};
\node [] (lp0) at (-1.5,0.5) {$p_2$};

\node [draw,rectangle,fill=black,minimum width=1cm] (tc3) at (-1,-1.5){};
\node [] (lt0) at (-1.8,-1.5) {\scriptsize$tc_3$};
\node [color=red] (it0) at (-0.1,-1.5) {\scriptsize$[2,2]$};

\node [draw,circle,minimum size=0.7cm] (p3) at (-1,-3) {};
\node [] (lp0) at (-0.5,-3.5) {$p_3$};

\node [draw,dashed, circle,minimum size=0.7cm] (c3) at (2,0) {$\bullet$};
\node [] (lp0) at (1.5,0.5) {$c_3$};

\node [draw,rectangle,fill=black,minimum width=1cm] (tc4) at (2,-1.5){};
\node [] (lt0) at (1.2,-1.5) {\scriptsize$tc_4$};
\node [color=red] (it0) at (2.9,-1.5) {\scriptsize$[1,1]$};

\node [draw,dashed, circle,minimum size=0.7cm] (c4) at (2,-3) {};
\node [] (lp0) at (1.5,-3.5) {$c_4$};

\draw [-latex] (c3) edge (tc3){};
\draw [-latex] (p2) edge (tc3){};
\draw [-latex] (tc3) edge (p3){};
\draw [-latex, bend right] (tc3) edge (c4){};

\draw [-latex, bend right] (tc4) edge (p2){};
\draw [-latex] (p3) edge (tc4){};
\draw [-latex] (tc4) edge (c3){};
\draw [-latex] (c4) edge (tc4){};

\end{scope}
\begin{scope}[xshift=0.7cm]
%%%%%%%%%%%%%%%%%%%%%%%%%%%%%%%%%%%%%%%%%%%%%%%%%%%
% Right part
%%%%%%%%%%%%%%%%%%%%%%%%%%%%%%%%%%%%%%%%%%%%%%%%%%%
\node [draw,circle,minimum size=0.7cm] (p0) at (3,1) {$\bullet$};
\node [] (lp0) at (2.3,1.1) {$p_0$};

\node [draw,rectangle,fill=black,minimum height=1cm] (t1) at (4,0){};
\node [] (lt1) at (4,-0.7) {\scriptsize$t_1$};
\node [color=red] (it1) at (4,0.7) {\scriptsize$[1,2]$};

\draw [-latex] (p0) edge (t1){};
\draw [-latex] (c3) edge (t1){};

\path [-latex, draw, bend right, rounded corners] (t1) ..controls (5,0.5) and (4.5,1.5) .. (p0){};
\path [-latex, draw, bend right, rounded corners] (t1) ..controls (5,-0.5) and (4.5,-1.6) .. (c3){};

\node [draw,circle,minimum size=0.7cm] (p1) at (3,-4) {$\bullet$};
\node [] (lp1) at (2.3,-4.1) {$p_1$};

\node [draw,rectangle,fill=black,minimum height=1cm] (t2) at (4,-3){};
\node [] (lt2) at (4,-3.7) {\scriptsize$t_2$};
\node [color=red] (it2) at (4,-2.3) {\scriptsize$[2,3]$};

\draw [-latex] (p1) edge (t2){};
\draw [-latex] (c4) edge (t2){};

\path [-latex, draw, bend right, rounded corners] (t2) ..controls (5,-3.5) and (4.5,-4.5) .. (p1){};
\path [-latex, draw, bend right, rounded corners] (t2) ..controls (5,-2) and (4,-1.5) .. (c4){};
\end{scope}
\end{tikzpicture}
}
\vspace{-\medskipamount}
\caption{A cyclic Waiting net}\vspace*{-1mm}
\label{fig_loop}
\end{center}
\end{figure}
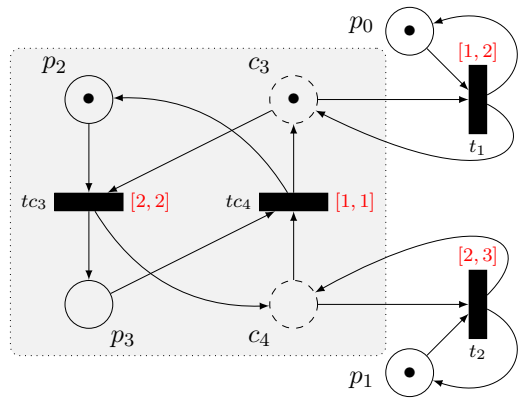

The main idea of this example is to give control alternatively to transition $t_1$ and $t_2$, but measure enabled time for $t_2$ when control allows firing of $t_1$ and enabled time for $t_1$ when control allows firing of $t_2$. The timed word $(t_1,1)(t_1,2)(tc_3,3)(t_2,3.2)$ is a possible behavior of this net. Notice that transition $t_2$ becomes fully enabled after firing of $t_3$, but was enabled from the beginning of the run.
In a symmetric way, the timed word $(t_1,1)(t_1,2)(tc_3,3)(tc_4,4)(t_1,4)(t_1,5)(t_1,6)$ is a legal behavior of the model. Indeed, after firing of $tc_3$ time still elapses for $t_1$, which can fire immediately after firing of $tc_4$. These two example behaviors show that one can design controllers that decouple control and time measurement with waiting nets.
The state class computed for this example contains 44 state classes, and show that one can fire up to 5 occurrences of $t_1$ before an occurrence of $t_2$ fires urgently. Notice that if one replaces control places by standard places in this example to obtain a TPN, then transition $t_2$ never fires, as it is disabled at most 1 time unit after its enabling by firing of $t_4$. This example shows that waiting nets can be a tool to design easily fair controllers.

\section{Expressiveness}
\label{sec:expressiveness}

The next natural question for waiting nets is their expressiveness w.r.t. other timed models. There are several ways to compare expressiveness of timed models: One can build on relations between models such as isomorphism of their underlying timed transition systems, timed similarity, or bisimilarity. In the rest of this section, we compare models w.r.t. the timed languages they generate. For two particular types of model $\mathcal M_1$ and $\mathcal M_2$, we will write $ \mathcal M_1 \lessexp \mathcal M_2$ when, for every model $X_1\in \mathcal M_1$, there exists a model $X_2$ in $\mathcal M_2$ such that $\lang(X_1) = \lang(X_2)$. Similarly, we will write $\mathcal M_1 \strictlessexp \mathcal M_2$ if $\mathcal M_1 \lessexp \mathcal M_2$ and there exists a model $X_2 \in \mathcal M_2$ such that for every model $X_1 \in \mathcal M_1$, $\lang(X_2) \neq \lang(X_1)$.
%(the timed language of $X_2$ cannot be encoded  by a model from $\mathcal M_1$).
Lastly, we will say that $\mathcal M_1 $ and $\mathcal M_2$ are equally expressive and write $\mathcal M_1 \equlang \mathcal M_2$ if $\mathcal M_1 \!\lessexp \!\mathcal M_2$ and $\mathcal M_1 \!\lessexp \!\mathcal M_2$.
In the rest of this section, we compare bounded and unbounded waiting nets with injective/non-injective labelling, with or without silent transitions labelled by $\epsilon$ to timed automata, TPNs, Stopwatch automata, and TPNs with stopwatches.

We first have obvious results. It is worth nothing that every model with non-injective labeling is more expressive than its injective counterpart. Similarly, every unbounded model is strictly more expressive than its bounded subclass.
Waiting nets can express any behavior specified with TPNs. Indeed, a WTPN without control place is a TPN. One can also remark that (unbounded) TPNs, and hence WTPNs are not regular. It is also well known that the timed language of a bounded TPN can be encoded by a time bisimilar timed automaton~\cite{CassezR06,LimeR06,BCHLR13}.
We show next that one can extend the results of~\cite{LimeR06} to waiting nets, i.e. reuse the state class construction of section~\ref{sec:reachability} to build a finite timed automaton $\ta_\wnet$ that  recognizes the same language as a waiting net $\wnet$.
%
%\subsection{A State class Timed automaton}
%
As shown by Proposition~\ref{prop-stateclass-wnet-Compl} and Proposition~\ref{prop_SCG_sound}, the state class graph $SCG(\wnet)$ is sound and complete. State class graphs abstract away the exact values of clocks and only remember constraints on remaining times to fire of fully enabled transitions, and times to upper bounds for enabled transitions. If we label moves of $SCG(\wnet)$ by the label of the transition used to move from a state class to the next one, the state class graph becomes an automaton that recognizes the untimed language of $\wnet$. Further, one can add guards on clock values to moves and clock invariants to states of  a state class graph to recover the timing information lost during abstraction.

%Now, we see a simple extension of the $SCG$ by adding clocks so that it can represent the actual timed language.

\begin{definition}[Extended State Class]
	An {\em extended state class} is a tuple $C_{ex} = (M.N,D, \chi, trans,XP)$, where $M. N$ is a marking, $D$ a domain,
	%as in the definition of state class graphs,
	$\chi$ is a set of real-valued clocks, $trans \in (2^T)^\chi$ maps clocks to sets of transitions and $XP\subseteq T$ is a set of transitions which upper bound have already been reached.
\end{definition}	

Extended state classes were already proposed in~\cite{LimeR06} as a building step for state class timed automata recognizing languages of bounded TPNs. Here, we add information on transitions that have been enabled for a duration that is at least their upper bound. This is needed to enforce urgency when such transitions become firable. In extended state classes, every clock $x\in \chi$ represents the time since enabling of several transitions in $trans(x)$, that were enabled at the same instant. So, for a given transition $t$, the clock representing the valuation $v(x_t)$ is $trans^{-1}(t)$\footnote{An alternative is to associate a clock to every enabled transition, but using a map $trans$ as proposed in~\cite{LimeR06} allows to build automata with a reduced number of clocks.}.
Let $\mathbb C^{ex}$denote the set of all state classes. We can now define the state class timed automaton $\scta(\mathcal{W})$ by adding guards and resets  to the transitions of the state class graph, and invariants to state classes.

\begin{definition}[State Class Timed Automaton]
\label{def-scta}
	The state class timed automaton of $\wnet$ is a tuple $\scta(\wnet)= (L,l_0, X, \Sigma,Inv, E,F)$ where:
	\begin{itemize}

		\item $L\subseteq \mathbb C^{ex} $ is a set of extended state classes. $l_0 = (M_0.N_0,D_0,\{ x_0\}, trans_0,XP_0)$, where $trans_0(x_0)=\mathsf{Enabled}(M_0.N_0)$ and $XP_0=\emptyset$.
		
		\item $F=L$, i.e. every run is accepting,

		\item $\Sigma=\lab(T)$, and $X= \underset{(M.N,D,\chi,trans,XP)\in  L} \bigcup  \chi \subseteq   \{ x_1, \dots x_{|T|}\}$ is a set of clocks
		%which size is at most the number of transitions.
		
		\item $E$ is a set of transitions of the form $(C_{ex},\lab(t),g,R,C'_{ex})$. In each transition, the locations \\
		$C_{ex}=(M.N,D,\chi,trans,XP)$ and
		$C'_{ex}=(M'.N',D',\chi',trans',XP')$ are two extended state classes such that $(M'.N',D') \longrightarrow (M.N,D)$ is a move of the STG with $D'=next_r(D,t)$ for some index $r$ chosen among the upper bounds ordering computed from domain $D$.

		We can compute the set of transitions disabled by the firing of $t$ from $M.N$, denoted\\  $\mathsf{Disabled}(M.N,t)$
		and from there, compute a new set of clocks $\chi'$. We have $\chi'= \chi \setminus \{ x\in \chi \mid trans(x) \subseteq \mathsf{Disabled}(M.N,t) \}$ if firing $t$ does not enable new transitions. If new transitions are enabled or become fully enabled after reaching their upper bound, we have $\chi'= \chi \setminus \{ x\in \chi \mid trans(x) \subseteq \mathsf{Disabled}(M.N,t) \} \cup \{ x_i\}$, where $i$ is the smallest index for a clock in $X$ that is not used.	Intuitively, $x_i$ will memorize the time elapsed since the firing of $t$, and hence the time elapsed since new enabling of all transitions enabled by this firing instead of memorizing the same value for every transition.
		
		Similarly, we can set
		
		$trans'(x_k)= \!\left \{\! \begin{array}{l}
		\!\newenab{M.N, t} \cup \{ t \in XP \cap \mathsf{FullyEnabled}(M'.N')\} \mbox{~if~} x_k=x_i\\
		\!trans(x_k)\! \setminus \mathsf{Disabled}(M.N,t)\mbox{~if~} x_k \neq x_i \mbox{~and~}  trans(x_k) \nsubseteq \mathsf{Disabled}(M.N,t) \\

		\!\mbox{Undefined otherwise}
		\end{array}\right.$
		
		The set $XP$ lists the waiting transitions that have reached their upper bounds and were not fully enabled.
		$$\begin{array}{ll}
		XP' = XP &\cap \:\:\mathsf{Enabled}(M - \preset{t}) \setminus \mathsf{FullyEnabled}(M'.N')\\
	%	&\cup \{ t_k \in Enabled(M'.N') \mid 0 \leq \theta_k \leq 0 \in D'\}\\
		&\cup \:\: \{ t_k \in \mathsf{Enabled}(M'.N') \mid \theta_k \not\in D'\}\\
	
		\end{array}$$
		In words, $XP'$ contains the transitions already in $XP$ that are still enabled, and did not receive a token in the control places of their preset, plus enabled waiting transitions that reached their upper bound, and for which no constraint is kept in domain $D'$.

		The set of clocks reset in a transition is $R=\{ x_i\}$, where $x_i$ is the first unused clock, if some transition is newly enabled, or was in $XP$ and becomes fully enabled. $R=\emptyset$ otherwise.
		
The guard $g$ is set to $\alpha(t) \leq trans^{-1}(t)$ if there is no constraint of the form $trans^{-1}(t)\leq 0$ in the invariant of $C_{ex}$, that is if $t$ is not a transition that became urgent after timing out, and $g$ is set to true otherwise.

		\item Let us now consider invariants attached to states. Let $\mathsf{Urgent}(C_{ex})$ denote the set of fully enabled transitions of $M.N$ that are urgent. One can compute this set by looking at fully enabled transitions with a constraint of the form $\theta_k=0$ in domain $D$. We distinguish two cases. The first one is when no transition is urgent, i.e., $\mathsf{Urgent}(C_{ex})=\emptyset$. In this situation, we can let some time elapse in location $C_{ex}$ until the upper bound for some clock attached to a transition is reached. The invariant $Inv(C_{ex})$ is hence:
				$$Inv(C_{ex}) = \underset{\scriptsize\begin{array}{l}
		x_j \in trans^{-1}(\mathsf{FullyEnabled}(M'.N'),\\
		t_k \in trans(x_j) \cap \mathsf{FullyEnabled}(M'.N')\end{array} }\bigwedge \hspace{-2cm}x_j \leq \beta(t_k)$$

If $\mathsf{Urgent}(C_{ex}') \neq \emptyset$ the invariant $Inv(C_{ex})$ is set to:
$$Inv(C_{ex}) = \underset{x_k\in trans^{-1}\big(\mathsf{Urgent}(C_{ex})\big )}\bigwedge \hspace{-1cm}x_k \leq 0$$

The invariant in location $C_{ex}$ then guarantees that if $t$ is urgent in $C_{ex}$, then it is fired immediately without elapsing time, or another urgent transition is fired immediately.
		\end{itemize}
\end{definition}

\begin{proposition}
\label{prop_scta_is_correct}
Let $\wnet$ be a waiting net. Then $\lang(\scta(\wnet)) = \lang(\wnet)$.
\end{proposition}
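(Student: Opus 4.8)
The plan is to prove that the two timed languages coincide by exhibiting a timed bisimulation between the timed transition system of $\wnet$ and that of $\scta(\wnet)$, reusing the soundness (Proposition~\ref{prop_SCG_sound}) and completeness (Proposition~\ref{prop-stateclass-wnet-Compl}) of the state class graph, which already settle the correspondence of markings and of the untimed path structure. Since $\scta(\wnet)$ is exactly $SCG(\wnet)$ decorated with clocks, guards, resets and invariants, it suffices to show that these timing decorations capture precisely the real-time constraints that the coincidence relation between runs and paths imposes.

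Concretely, I would define a relation $\mathcal{R}$ between a configuration $(M.N,v)$ of the net and a configuration $(C_{ex},\nu)$ of $\scta(\wnet)$, where $C_{ex}=(M.N,D,\chi,trans,XP)$, requiring: (i) the markings agree; (ii) for every enabled transition $t$ that has not yet reached its upper bound, the automaton clock reading $t$ agrees with the net clock, i.e.\ $\nu(trans^{-1}(t)) = v(x_t)$; (iii) $XP$ records exactly the timed-out waiting transitions, namely those $t \in \mathsf{Waiting}(M.N)$ with $v(x_t)=\beta(t)$; and (iv) the value of $v$, read through the remaining-time variables $\theta_i$, lies in $\llbracket D \rrbracket$. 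The crucial asymmetry to build into $\mathcal{R}$ is that for a timed-out waiting transition the net clock is frozen at $\beta(t)$ while the automaton clock keeps increasing; $\mathcal{R}$ must tolerate this discrepancy, which is sound because such a transition cannot fire while it is still waiting.

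Next I would check that $\mathcal{R}$ relates the initial configurations and is preserved by both kinds of moves. For a timed move of duration $d$, I would show that $d$ is legal in the net iff $\nu + d$ satisfies $Inv(C_{ex})$: the conjunct $x_j \le \beta(t_k)$ over fully enabled $t_k$ mirrors the net constraint $v(t)+d \le \beta(t)$ that forbids violating urgency, while the invariant $x_k\le 0$ under $\mathsf{Urgent}(C_{ex})\neq\emptyset$ mirrors the forced immediate firing of an urgent transition; timed-out waiting transitions, whose automaton clocks exceed their bounds, impose no invariant and correctly let time elapse. For a discrete move firing $t$, I would show the guard $\alpha(t) \le trans^{-1}(t)$ mirrors the firability condition $\alpha(t) \le v(x_t)$ (full enabledness being carried by the marking), and that the particular outgoing edge with $D' = \mathsf{next}_r(D,t)$ matches the bound interval $[b_r,b_{r+1}]$ into which the realised delay falls; here Lemma~\ref{lem_durations_match_abstract_runs} guarantees that delays along a run lie precisely in the computed bounds, so the right successor always exists. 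Finally, the clock set $\chi'$, the map $trans'$, the reset $R$ and the updated $XP'$ are designed so that the post-move configurations are again in $\mathcal{R}$; in particular, when a transition in $XP$ becomes fully enabled, the reset of a fresh clock together with the guard set to true and the invariant $x_k \le 0$ reproduce the net behaviour $\theta_k=0$, i.e.\ immediate urgent firing.

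Since $F=L$ makes every automaton run accepting and the language of a waiting net is prefix-closed (no acceptance condition), a timed bisimulation that matches labels, and matches $\epsilon$-moves to $\epsilon$-moves, yields $\lang(\scta(\wnet)) = \lang(\wnet)$ directly, by transporting the timed word of any run to a run of the other system. I expect the main obstacle to be precisely the clock mismatch for timed-out waiting transitions: one must verify, by a careful case split on whether each enabled transition is fully enabled, waiting below its upper bound, or waiting and timed out, that the $XP$-bookkeeping, the true guard, the reset upon becoming fully enabled, and the $x_k\le 0$ urgency invariant together make the observable timed behaviour insensitive to whether a frozen or a still-running clock is read. Everything else follows the pattern of the state-class timed automaton construction for TPNs in~\cite{LimeR06}.
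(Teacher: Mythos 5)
Your route is genuinely different from the paper's: the paper never constructs a timed bisimulation. It obtains the inclusion $\lang(\scta(\wnet))\subseteq\lang(\wnet)$ by a direct (and rather terse) analysis of guards, invariants and urgency on top of the soundness of the state class graph, and proves the converse inclusion $\lang(\wnet)\subseteq\lang(\scta(\wnet))$ by induction on the length of runs, re-running the completeness machinery of Proposition~\ref{prop-stateclass-wnet-Compl} at the level of the automaton. A bisimulation in the style of~\cite{LimeR06} would be a stronger and arguably cleaner statement, so the plan is reasonable in principle; however, as written it has two concrete gaps caused precisely by the feature that distinguishes waiting nets from TPNs, namely the $r$-indexed branching and the frozen clocks.

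First, clause~(iii) of your relation $\mathcal{R}$ (``$XP$ records \emph{exactly} the timed-out waiting transitions'') cannot be an invariant of a bisimulation: $XP$ is a static component of the location, fixed when the class is entered, whereas waiting transitions time out dynamically as time elapses \emph{inside} a location. A timed move can drive a waiting $t$ to $v(x_t)=\beta(t)$ with no change of location, falsifying~(iii). The repair is to drop the equality and encode the freeze in the clock correspondence itself, e.g.\ $v(x_t)=\min\big(\nu(trans^{-1}(t)),\beta(t)\big)$ for waiting transitions still carrying a variable in $D$, requiring $v(x_t)=\beta(t)$ only for $t\in XP$; the timed-out status is then reconciled only at the next discrete step through the choice of the $r$-successor. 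Second, and more seriously, the direction ``every automaton move is matched by a net move'' --- the one that yields $\lang(\scta(\wnet))\subseteq\lang(\wnet)$ --- is not established by the tools you invoke. Lemma~\ref{lem_durations_match_abstract_runs} points the wrong way: it constrains delays along net runs that already coincide with SCG paths, and gives no control over arbitrary automaton configurations. Moreover, in Definition~\ref{def-scta} the guard of an $r$-labeled edge is merely $\alpha(t)\le trans^{-1}(t)$ (or true): nothing syntactically confines the realized delay to the interval $[b_r,b_{r+1}]$ of the chosen edge, so from a pair in $\mathcal{R}$ the automaton may take an edge whose target $XP'$ and $D'$ misdescribe which net clocks are actually frozen. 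You would have to show either that such mismatched moves are blocked by downstream invariants, or that the timed words they generate remain realizable in the net --- this is exactly the burden the paper discharges (at path level) via Proposition~\ref{prop_SCG_sound} and its urgency analysis, and without it the bisimulation claim does not go through as written.
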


\begin{proof}
Obviously, every sequence of transitions in $\lang(\scta(\wnet))$ is a sequence of transitions of the STG, and hence there exists a timed word that corresponds to this sequence of transitions. Furthermore, in this sequence, every urgent transition is fired in priority before elapsing time, and the delay between enabling and firing of a transition $t$ lays between the upper and lower bound of the time interval $[\alpha_t,\beta_t]$ if some time elapses in a state before the firing of $t$, and at least $\beta_t$ time units if $t$ fires immediately after reaching some state in the sequence (it is an urgent transition, so the upper bound of its interval has been reached, possibly some time before full enabling). Hence, every timed word of $\scta(\wnet)$ is also a timed word of $\wnet$.
We can reuse the technique of Prop.~\ref{prop-stateclass-wnet-Compl} and prove by induction on the length of runs of $\wnet$ that for every run of $\wnet$, there exists a run of  $\scta(\wnet)$ with the same sequence of delays and transitions.
Let us first consider a run of size 1, that is of the form $(M_0.N_0,v_0)\overset{d_0,t_0}\longrightarrow (M_1.N_1,v_1)$.
The domain $D_0$ is a conjunction of constraints of the form $\alpha_i \leq \theta_i \leq b_i$ for every enabled transition, and  firability of $t_0$ after a delay $d_0$ is checked on the projection of $D_0$ on transitions that are not timed-out in the bound interval $[b_r, b_{r+1}]$ that contains $d_0$, i.e. on a subset of the original set of constraint $D_0$. We have associated to location $l_0$ the invariant $Inv(l_0)=x_0 \leq \min_{t_k \in \mathsf{FullyEnabled}(M_0.N_0)} (\beta_k)$ (there is a single clock $x_0$ as all enabled transitions are enabled at date $0$). All transition labeled by $t_0$ leaving $l_0$ are attached a guard of the form $g = \alpha(t_0) \leq x_0$.
The move of $\wnet$ via transition $t_0$ after $d_0$ time units is allowed, so we necessarily have $v(x_{t_0})=d_0 \geq b_r \geq \alpha(t_0)$. As we have $b_r\leq \beta(t_0)$, and as $t_0$ is firable from $D_0$, we can also claim that $v_0+d_0 \models Inv(l_0)$ and $v_0+d_0\models g$. Hence $(d_0,t_0)$ is a timed word of $\lang(\scta(\wnet))$.

Let us now consider a sequence of moves of the form $(M_0.N_0,v_0) \overset{d_0,t_0} \longrightarrow \dots \overset{d_{k-1},t_{k-1}}\longrightarrow (M_k.N_k, v_k)$.
As the state class graph is complete, there exists a sequence of abstract moves $(M_0.N_0,D_0) \overset{t_0} \longrightarrow \dots \overset{t_{k-1}}\longrightarrow (M_k.N_k, D_k)$.
By induction hypothesis, there exists a sequence of delays and transitions in $\scta(\wnet)$ that leads to a configuration $(l_k,\mu_k)$ where $l_k=(M_k.N_k,D_k, \xi_k, trans_k)$ is an extended state class with invariant  $inv_k$  and $\mu_k$ is a valuation of clocks. Assume that a fully enabled transition $t_k$ is firable from $(M_k.N_k, v_k)$, and let $h$ be the index of move at which $t_k$ was enabled. Still using the completeness argument, this means that $t_k$ is firable from $(M_k.N_k, D_k)$. Hence, there exists a transition from $l_k=(M_k.N_k,D_k, \xi_k, trans_k)$ to some $l_{k+1}=(M_{k+1}.N_{k+1},D_{k+1}, \xi_{k+1}, trans_{k+1})$ labeled by transition $t_k$. Assume that $t_k$ is not a transition that has timeout. Then the move is attached with guard $g ::= \alpha(t_k) \leq trans^{-1}(t_k)$. We know that in configuration $(M_k.N_k, v_k)$ of $\wnet$,  $v_k(x_{t_k}) = d_k+ \underset{i\in h..k-1}\sum d_i \geq \alpha(t_k)$. As $t_k$ did not timeout, the clock $x_h=trans_k^{-1}(t_k)$ was never reset since the $h^{th}$ transition. So, $\mu_k = v_k(x_{t_k})$, and $t_k$ can fire from $(l_k,\mu_k)$ after $d_k$ time units. Now assume that $t_k$ is an urgent transition that became urgent after a timeout. Then, we have $d_k=0$, and the invariant $Inv(l_k) = trans_k^{-1}(t_k) \leq 0$ attached to $l_k$ does not let time elapse, and $t_k$ can fire, and cannot let a duration of more than $d_k$ time units elapse.
%HERE
\end{proof}

%\subsection{Comparison of time nets, waiting nets, and automata}

We are now ready to compare expressiveness of waiting nets and their variants w.r.t. other types of time Petri nets, and with timed automata. For a given class $\mathcal N$ of net, we will denote by $B$-$\mathcal N$ the bounded subclass of $\mathcal N$, add the subscript $\epsilon$ if transitions with $\epsilon$ labels are allowed in the model, and a superscript $\noninj$ if the labeling of transitions is non-injective. For instance $B$-$WTPN_\epsilon^{\noninj}$ denotes the class of bounded waiting nets with non-injective labeling and $\epsilon$ transitions. It is well known that adding $\epsilon$ moves to timed automata increases the expressive power of the model~\cite{DiekertGP97}. Similarly, allowing non-injective labeling of transitions or adding $\epsilon$ transitions increases the expressive power of nets~\cite{BCHLR13}. Lastly, adding stopwatches to timed automata or bounded time Petri nets makes them Turing powerful~\cite{CassezL00}.

\begin{theorem}
\label{prop_BWPN_stric_less_TA}
	$BWTPN \strictlessexp TA(\leq,\geq)$.
\end{theorem}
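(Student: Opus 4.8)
The plan is to establish the two halves of $\strictlessexp$ separately: the inclusion $BWTPN \lessexp TA(\leq,\geq)$, and then strictness by a concrete witness.

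For the inclusion I would simply invoke the state class timed automaton construction. Given a bounded waiting net $\wnet$, Proposition~\ref{prop_scta_is_correct} gives $\lang(\scta(\wnet)) = \lang(\wnet)$, and Corollary~\ref{Cor-stateclass-finite} guarantees that $SCG(\wnet)$, hence $\scta(\wnet)$, is finite (finitely many locations, at most $|T|$ clocks, finitely many edges). It then remains only to check, by inspection of Definition~\ref{def-scta}, that $\scta(\wnet)$ lives in the subclass $TA(\leq,\geq)$: every edge guard is either $\top$ or a single lower-bound atom $\alpha(t) \leq trans^{-1}(t)$, i.e.\ of the form $x \geq c$, and every location invariant is a conjunction of atoms $x_j \leq \beta(t_k)$ or $x_k \leq 0$, i.e.\ of the form $x \leq c$. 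Hence $\scta(\wnet) \in TA(\leq,\geq)$ recognizes $\lang(\wnet)$, which yields $BWTPN \lessexp TA(\leq,\geq)$.

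For strictness I would exhibit one automaton in $TA(\leq,\geq)$ whose language no bounded waiting net can match. Over the alphabet $\{a,b,c\}$ with two clocks $x,y$, consider the automaton with edges $\ell_0 \xrightarrow{a,\top,\{x\}} \ell_1 \xrightarrow{b,\top,\{y\}} \ell_2 \xrightarrow{c,\ x\geq 3\,\wedge\,y\geq 2} \ell_3$ and trivial invariants; it lies in $TA(\leq,\geq)$ and recognizes $L^\ast$, the prefix closure of $\{(a,d_1)(b,d_2)(c,d_3) \mid d_1\leq d_2,\ d_3\geq d_1+3,\ d_3\geq d_2+2\}$. The untiming of $L^\ast$ is exactly $\{\varepsilon,a,ab,abc\}$, so in any $\epsilon$-free injectively labelled waiting net realizing $L^\ast$ there are transitions $t_a,t_b,t_c$ firing once, in this order, at dates $d_1,d_2,d_3$ (any other transition carries a distinct visible label and cannot fire). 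The key structural fact to prove is that the firing date $d_3$ of $t_c$ admits only lower bounds of the restricted form $d_3 \geq e + k$, where $e$ is a single \emph{enabling date} of $t_c$ — necessarily one of $0,d_1,d_2$, determined by which firing fills $\spreset{t_c}$ — and $k$ is either a constant drawn from the interval of $t_c$ or $0$ for the arrival date of a control token in $\cpreset{t_c}$. Since control tokens can be produced only by firings of $t_a,t_b,t_c$, every such date lies in $\{0,d_1,d_2,d_3\}$, so the overall lower bound on $d_3$ is a maximum of finitely many affine terms each depending on a \emph{single} variable among $d_1,d_2$.

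The main obstacle — and the heart of the strictness proof — is to argue that this family of achievable lower bounds can never equal $\max(d_1+3,\,d_2+2)$. Producing the summand $d_1+3$ forces $t_c$ to be enabled at $d_1$ with $\alpha(t_c)=3$, which leaves the summand $d_2+2$ to be supplied by a control token arriving at date $d_2+2$; but no transition fires at $d_2+2$ (the only firing dates are $d_1,d_2,d_3$, and $t_c$ cannot gate its own enabling earlier than $d_3\geq d_2+2$), and the symmetric choice fails identically. Realizing both offset bounds would require two auxiliary timer transitions firing at $d_1+3$ and $d_2+2$, which, being visible and distinctly labelled in the $\epsilon$-free injective setting, would corrupt the untiming $\{\varepsilon,a,ab,abc\}$. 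I would formalise this as a short case analysis over the possible enabling dates and control-token sources of $t_c$, ruling each out; some care is needed to cover the possibilities that $t_c$ is disabled and re-enabled, or that its clock reaches $\beta(t_c)$ and it waits, so the argument must be phrased directly in terms of the timed and discrete move semantics rather than of the state class abstraction.
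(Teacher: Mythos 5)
Your proposal is correct, and its inclusion half is exactly the paper's: invoke Proposition~\ref{prop_scta_is_correct}, finiteness from Corollary~\ref{Cor-stateclass-finite}, and inspect Definition~\ref{def-scta} to check that guards are lower-bound atoms and invariants upper-bound atoms. The strictness half, however, takes a genuinely different route. The paper's witness is the one-clock automaton $\ta_1$ of Figure~\ref{fig_example_ta1}-a), offering $a$ in $[2,3]$ and $b$ in $[4,5]$ from the same location, and its impossibility argument rests entirely on \emph{urgency}: in an injectively labelled, $\epsilon$-free waiting net, $t_a$ must be fully enabled with deadline $3$ for the $a$-words to be realizable, so at date $3$ it becomes urgent and time cannot advance toward $4$ without a discrete move, yet no other transition is firable --- whereas the automaton simply abandons the $a$-branch as its invariant expires. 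Your witness needs two clocks and three letters but never invokes urgency: you exploit the fact that each net transition carries a single clock started at its (symbolically fixed) enabling date, and that control tokens arrive only at firing dates, so every achievable lower bound on the firing date of $t_c$ is a maximum of terms $e+k$ with $e\in\{0,d_1,d_2\}$ and can never equal $\max(d_1+3,\,d_2+2)$; your enumeration of the failing cases (enabling-date source, control-token source, timeout and re-enabling) is the right one and, though deferred, is routine --- comparable in rigor to the paper's own rather informal two-sentence argument. The paper's proof is shorter; yours buys two things in exchange: it pinpoints a clock-expressiveness deficit rather than a consequence of the forced-firing rule, so it would survive even in a hypothetical lazy variant of waiting nets with urgency dropped, and it nicely mirrors the date-memorization argument the paper uses for $BTPN \strictlessexp BWTPN$ in Theorem~\ref{prop-BWPN-BTPN}: there a TPN cannot remember one firing date, here a waiting net can remember one enabling date but cannot combine two.
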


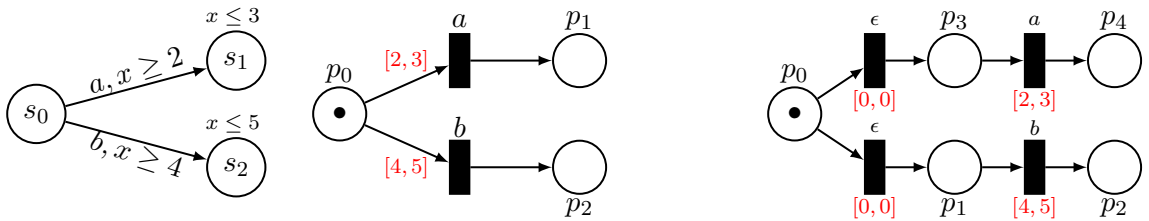
\begin{figure}[!b]
	\begin{center}
		\tikzset{every picture/.style={line width=0.75pt}} %set default line width to 0.75pt
		
		\begin{tikzpicture}[x=0.5pt,y=0.5pt,yscale=-1,xscale=1]
		\begin{scope}
		\node[circle, draw]  (s0) at (0,0) {$s_0$};

		\node[circle, draw]  (s1) at (150,-40) {$s_1$};
		\node[circle, draw]  (s2) at (150,40) {$s_2$};

		\draw[-latex] (s0) -- (s1) node [midway,yshift=0.2cm, rotate=18] {$a, x\geq 2$};
		\draw[-latex] (s0) -- (s2)node [midway,yshift=-0.2cm, rotate=-18] {$b,x\geq 4$};

			% Invariants
			\draw (125,-82) node [anchor=north west][inner sep=0.75pt]    {\scriptsize$x \leq 3$};
			\draw (125,-2) node [anchor=north west][inner sep=0.75pt]    {\scriptsize$x \leq 5$};
			\end{scope}

			\begin{scope}[xshift=4cm]
					\node[circle, draw, minimum width=0.7cm]  (p0) at (0,0) {$\bullet$};
					\node [] (lp0) at (0,-30) {$p_0$};
					
					\node[circle, draw, minimum width=0.7cm]  (p1) at (180,-40) {};
					\node [] (lp1) at (180,-70) {$p_1$};
					
					\node[circle, draw, minimum width=0.7cm]  (p2) at (180,40) {};
					\node [] (lp1) at (180,70) {$p_2$};

					\node[draw,rectangle, fill=black, minimum height=0.7cm] (t0) at (90,-40) {};
					\node [] (lt0) at (90,-70) {$a$};
	\node [color=red] (lit0) at (50,-40) {\scriptsize$[2,3]$};
	
					\node[draw,rectangle, fill=black, minimum height=0.7cm]  (t1) at (90,40) {};
					\node [] (lt1) at (90,10) {$b$};
	\node [color=red] (lit0) at (50,40) {\scriptsize$[4,5]$};
					
			\path[draw,-latex] (p0) -- (t0){};
			\path[draw,-latex] (p0) -- (t1){};
			\path[draw,-latex] (t0) -- (p1){};
			\path[draw,-latex] (t1) -- (p2){};
			\end{scope}

%%%%%%%%%%%%%%%%%%%%%%%%%%%%%%%%%%%%%%%%%%%%%
% Encoding the automaton wit epsilon transitions
%%%%%%%%%%%%%%%%%%%%%%%%%%%%%%%%%%%%%%%%%%%%%%%%			
			\begin{scope}[xshift=10cm]
\node [draw,circle,minimum size=0.7cm] (p0) at (0,0) {$\bullet$};
\node [] (lp0) at (0,-30) {$p_0$};

\node [draw,circle,minimum size=0.7cm] (p1) at (120,40) {};
\node [] (lp1) at (120,70) {$p_1$};

\node [draw,circle,minimum size=0.7cm] (p2) at (240,40) {};
\node [] (lp2) at (240,70) {$p_2$};

\node [draw,circle,minimum size=0.7cm] (p3) at (120,-40) {};
\node [] (lp3) at (120,-70) {$p_3$};

\node [draw,circle,minimum size=0.7cm] (p4) at (240,-40) {};
\node [] (lp4) at (240,-70) {$p_4$};

\node [draw,rectangle,fill=black,minimum height=0.7cm] (t0) at (60,40){};
\node [] (lt0) at (60,10) {\scriptsize$\epsilon$};
\node [color=red] (it0) at (60,70) {\scriptsize$[0,0]$};

\node [draw,rectangle,fill=black,minimum height=0.7cm] (t1) at (180,40){};
\node [] (lt1) at (180,10) {\scriptsize$b$};
\node [color=red] (it1) at (180,70) {\scriptsize$[4,5]$};

\node [draw,rectangle,fill=black,minimum height=0.7cm] (t2) at (60,-40){};
\node [] (lt2) at (60,-70) {\scriptsize$\epsilon$};
\node [color=red] (it1) at (60,-10) {\scriptsize$[0,0]$};

\node [draw,rectangle,fill=black,minimum height=0.7cm] (t3) at (180,-40){};
\node [] (lt2) at (180,-70) {\scriptsize$a$};
\node [color=red] (it1) at (180,-10) {\scriptsize$[2,3]$};

\draw [-latex] (p0) edge  (t0){};
\draw [-latex] (p0) edge  (t2){};
\draw [-latex] (t0) edge  (p1){};
\draw [-latex] (p1) edge  (t1){};
\draw [-latex] (t1) edge  (p2){};
\draw [-latex] (t2) edge  (p3){};
\draw [-latex] (p3) edge  (t3){};

\draw [-latex] (t3) edge  (p4){};
\end{scope}
		\end{tikzpicture}
		%\vspace{-2\medskipamount}
		\caption{a) A timed automaton $\ta_1$ b) An equivalent time\underline{d} Petri net c) An equivalent TPN in $TPN_\epsilon$ }
		\label{fig_example_ta1}
		%\vspace{-\medskipamount}
		\end{center}
		\vspace{-\medskipamount}
		\end{figure}
	
\begin{proof}
	From Proposition~\ref{prop_scta_is_correct}, we can translate every bounded waiting net $\wnet$ to a finite timed automaton $\scta(\wnet)$. Notice that $\scta(\wnet)$ uses only constraints of the form $x_i\geq a$ in guards and of the form $x_i \leq b$ in invariants. Thus, $BWTPN \subseteq TA(\leq,\geq)$. This inclusion is strict. Consider the timed automaton $\ta_1$ of Figure~\ref{fig_example_ta1}$-a)$. Action $a$ can occur between date~2 and~3 and  $b$ between date~$4$ and~$5$. The timed language of $\ta_1$ cannot be recognized by a BWTPN with only two transitions $t_a$ and $t_b$, because $t_a$ must be firable and then must fire between dates 2 (to satisfy the guard) and $3$ (to satisfy the invariant in $s_1$). However, in TPNs and WTPNs, transitions that become urgent do not let time elapse, and cannot be disabled without making a discrete move. As $t_b$ is the only other possible move, but is not yet allowed, no WTPN with injective labeling can encode the same behavior as $\ta_1$.
\end{proof}

\begin{remark}
\label{remark_epsilons}
It was proved in~\cite{BCHLR13} that timed automata (with $\epsilon-$transitions) have the same expressive power as bounded TPNs with $\epsilon-$transitions. These epsilon transitions can be used to "steal tokens" of an enabled transition, and prevent it from firing urgently after a delay. This cannot be done with TPNs nor waiting nets without $\epsilon$, because a transition that is fully enabled can be disabled only by a discrete move, and not by elapsing time. Hence, bounded TPN with $\epsilon-$transitions are strictly more expressive than waiting nets, and than waiting net with non-injective labeling.
\end{remark}

\vspace{-2\medskipamount}
\begin{remark}
Another easy result is that timed Petri nets and waiting nets are incomparable. Indeed, timed Petri nets cannot encode urgency of TPNs, and as a consequence some (W)TPNs have no timed Petri net counterpart, even in the bounded case. Similarly, one can design a timed Petri net in which a transition is firable only in a bounded time interval and is then disabled when time elapses. We have seen in Figure~\ref{fig_example_ta1}-a) that $\lang(\ta_1)$ cannot be recognized by a waiting net. However, it is easily recognized by the timed Petri net of figure~\ref{fig_example_ta1}-b).
\end{remark}

\begin{theorem}
\label{prop-BWPN-BTPN}
	$TPN \strictlessexp WTPN$ and 	$BTPN \strictlessexp BWTPN$.
\end{theorem}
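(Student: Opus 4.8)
The plan is to treat the two inclusions and the strictness claim separately, the strictness being carried by a single separating example that is bounded, so that it settles both $TPN \strictlessexp WTPN$ and $BTPN \strictlessexp BWTPN$ at once. The inclusions $TPN \lessexp WTPN$ and $BTPN \lessexp BWTPN$ are immediate from Remark~\ref{remark_inheritance}: a TPN is literally a waiting net with $C = \emptyset$. When there are no control places, every enabled transition is fully enabled, $\mathsf{Waiting}(M.N) = \emptyset$, and both move rules of Figure~\ref{fig_operational_semantics} degenerate to the ordinary TPN semantics; the net read as a WTPN therefore has exactly the same timed language, and it is bounded iff the original TPN was. This gives both inclusions with no extra work.

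For strictness I would take as witness the bounded waiting net $\wnet$ of Figure~\ref{ex-counter-TPN}-a) and show that no TPN recognizes $\lang(\wnet)$. First I would compute $\lang(\wnet)$. Since $\spreset{t_1} = \emptyset$, transition $t_1$ is enabled from date $0$; being a waiting transition, its clock runs and then freezes at $\beta(t_1) = 20$. It becomes fully enabled only when $t_0$ fires and fills the control place $c_0$, at some date $d_0 \in [0,20]$ ($t_0$ being urgent at $20$). At that instant $t_1$ is fully enabled with clock value $d_0$, and because $[\alpha(t_1),\beta(t_1)] = [20,20]$ it fires precisely when its clock reaches $20$, i.e. at absolute date $20$, \emph{independently} of $d_0$. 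Hence $\lang(\wnet)$ is the prefix closure of $\{(t_0,d_0)(t_1,20) \mid d_0 \in [0,20]\}$: the date of $t_1$ is pinned to $20$ while the delay $20 - d_0$ separating the two actions sweeps all of $[0,20]$.

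Then I would derive a contradiction from assuming an (injective, $\epsilon$-free) TPN $\mathcal{T}$ with $\lang(\mathcal{T}) = \lang(\wnet)$. Injectivity and the absence of $\epsilon$ force a unique transition $\tau_0$ labelled $t_0$ and a unique $\tau_1$ labelled $t_1$, and these are the only transitions that can fire along runs producing words of the language. Since no word starts with $t_1$, the transition $\tau_1$ is disabled in $M_0$ and can become newly enabled only through the firing of $\tau_0$; thus in $\mathcal{T}$ the clock of $\tau_1$ necessarily starts at the firing date $d_0$ of $\tau_0$. Matching the two boundary words $(t_0,0)(t_1,20)$ and $(t_0,20)(t_1,20)$ then forces $20 \in [\alpha(\tau_1),\beta(\tau_1)]$ and $0 \in [\alpha(\tau_1),\beta(\tau_1)]$, i.e. $\alpha(\tau_1) = 0$ and $\beta(\tau_1) \geq 20$. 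But with $\alpha(\tau_1) = 0$, firing $\tau_0$ at any $d_0 < 20$ and then firing $\tau_1$ after the admissible delay $0$ is a legal run of $\mathcal{T}$ producing $(t_0,d_0)(t_1,d_0) \in \lang(\mathcal{T}) \setminus \lang(\wnet)$, a contradiction. As $\wnet$ is bounded, the very same witness yields both $TPN \strictlessexp WTPN$ and $BTPN \strictlessexp BWTPN$.

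The step I expect to be the main obstacle is making this impossibility argument robust against \emph{arbitrary} TPN encodings rather than the naive candidate of Figure~\ref{ex-counter-TPN}-b): I must rule out that the clock of $\tau_1$ could somehow start before $\tau_0$ fires, or that auxiliary places and transitions could help reproduce the decoupling. The injective, $\epsilon$-free hypothesis is exactly what closes this gap, since any transition that ever fires emits an observable letter, which forces every firing transition to be $\tau_0$ or $\tau_1$ and forces $\tau_1$'s enabling to be caused by $\tau_0$. The other delicate point is the faithful reading of the waiting semantics --- the frozen clock together with urgency triggered at full enabling --- which is what fixes the date of $t_1$ at $20$ regardless of $d_0$; this dissociation of time measurement from control is precisely the feature that no TPN can imitate.
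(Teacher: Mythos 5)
Your proof is correct and follows essentially the same route as the paper: the same witness net of Figure~\ref{ex-counter-TPN}-a), the same computation of its timed language, and the same core impossibility argument that no single interval $[\alpha,\beta]$ attached to $t_1$ in a TPN (whose clock can only start when $t_0$ fires) can pin the firing of $t_1$ to absolute date $20$ for both extreme firing dates $0$ and $20$ of $t_0$ — your variant of deriving $\alpha(\tau_1)=0$, $\beta(\tau_1)\geq 20$ and exhibiting the illegal word $(t_0,d_0)(t_1,d_0)$ is just a rephrasing of the paper's inconsistent system of equalities. The one sound deviation is that you note the argument nowhere uses boundedness of the candidate TPN, so the single bounded witness settles $TPN \strictlessexp WTPN$ directly, whereas the paper instead appends an unbounded component (Figure~\ref{ex-counter-TPN}-c) to conclude the unbounded case; your shortcut is valid under the paper's definition of $\strictlessexp$.
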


\begin{proof}
	Clearly, from Definition~\ref{def:WTPN}, we have $TPN \lessexp WTPN$ and $BTPN \lessexp BWTPN$, as TPNs are WTPNs without control places. It remains to show that inclusions are strict. Consider the waiting net $\wnet$ in Figure~\ref{ex-counter-TPN}-a. We have $T=\{t_0,t_1\}$,  $P=\{p_0,p_1\}, C= \{ c_0\}$ and $M_0.N_0=\{p_0\}\{\}$.
	Hence, from the starting configuration, $t_0$ is fully enabled and firable (because $M_0 \geq \preset{t_0}$), but $t_1$ is enabled and not firable ($M_0 \geq \preset{t_1}$ and $N_0(c_0)=0$).
Every valid run of $\wnet$ is of the form $\{p_0\}\{\} \xrightarrow[]{(d_0,t_0)} \{\}\{c_0\}\xrightarrow[]{(d_1,t_1)} \{p_1\}\{\} \text{ where } 0\leq d_0 \leq 20 \text{ and } d_0+d_1=20$.

\medskip
Thus the timed language of $\wnet$ is $\lang(\wnet) = \{(t_0,d_0)(t_1,20) \,|\, 0 \leq d_0 \leq 20\}$. Let us show that there exists no $TPN$ that recognizes the language $\lang(\wnet)$. In TPN, one cannot memorize the time already elapsed using the clocks of newly enabled transitions. A TPN $\mathcal N$ recognizing $\lang(\wnet)$ should satisfy the following properties:
		\begin{enumerate}
			\item The $TPN$ should contain at least two different transitions namely $t_0$ and $t_1$
			\item $t_0$ and $t_1$ are the only transitions which fire in any run of $\mathcal N$. %starting from the initial marking.
			\item $t_0$ and $t_1$ are fired only once.
			\item $t_0$ must fire first and should be able to fire at any date in $[0,20]$ units.
			\item $t_1$ must fire second at time 20 units, regardless of the firing date of $t_0$
		\end{enumerate}

\tikzset{every picture/.style={line width=0.75pt}}
\begin{figure}[!t]
\begin {center}
\begin{tikzpicture}[x=0.5pt,y=0.5pt,yscale=-1,xscale=0.94]
	
	\begin{scope}
	
		\node [draw,circle, minimum width=0.7cm] (p0) at (10,0) {$\bullet$};
	\node [] (lp0) at (10,30) {$p_0$};
	
	\node [draw,rectangle, fill=black, minimum height=0.7cm] (t0) at (70,0) {};
	\node [] (lt0) at (70,-34) {$t_0$};
	\node [color=red] (lit0) at (70,30) {\scriptsize$[0,20]$};

	\node [draw,circle, dashed, minimum width=0.7cm] (pc0) at (130,0) {};
	\node [] (lpc0) at (130,30) {$c_0$};
	
	\node [draw,rectangle, fill=black, minimum height=0.7cm] (t1) at (190,0) {};
	\node [] (lt0) at (190,-34) {$t_1$};
	\node [color=red] (lit1) at (190,30) {\scriptsize$[20,20]$};
	
	\node [draw,circle, minimum width=0.7cm] (p1) at (260,0) {};
	\node [] (lp1) at (270,30) {$p_1$};

	\draw[-latex] (p0) -- (t0) {};
	\draw[-latex] (t0) -- (pc0) {};
	\draw[-latex] (pc0) -- (t1){};
	\draw[-latex] (t1) -- (p1){};
		
	%\node [] (figlaba) at (140,55) {$a)$};

	\end{scope}
	\begin{scope}[xshift=6cm]
	\node [draw,circle, minimum width=0.7cm] (p0) at (0,0) {$\bullet$};
	\node [] (lp0) at (0,30) {$p_0$};
	
	\node [draw,rectangle, fill=black, minimum height=0.7cm] (t0) at (60,0) {};
	\node [] (lt0) at (60,-34) {$t_0$};
	\node [color=red] (lit0) at (60,30) {\scriptsize$[0,20]$};
	
	\node [draw,circle, minimum width=0.7cm] (p) at (120,0) {};
	\node [] (lpp) at (120,30) {$p$};
	
	\node [draw,rectangle, fill=black, minimum height=0.7cm] (t1) at (180,0) {};
	\node [] (lt0) at (180,-34) {$t_1$};
	\node [color=red] (lit1) at (180,30) {\scriptsize$[\alpha,\beta]$};
	
	\draw[-latex] (p0) -- (t0) {};
	\draw[-latex] (t0) -- (p) {};
	\draw[-latex] (p) -- (t1){};	
	%\node [] (figlab) at (140,55) {$b)$};
	\end{scope}

	\begin{scope}[xshift=10.8cm]
	
		\node [draw,circle, minimum width=0.7cm] (p0) at (0,0) {$\bullet$};
	\node [] (lp0) at (0,30) {$p_0$};
	
	\node [draw,rectangle, fill=black, minimum height=0.7cm] (t0) at (60,0) {};
	\node [] (lt0) at (60,-34) {$t_0$};
	\node [color=red] (lit0) at (60,30) {\scriptsize$[0,20]$};

	\node [draw,circle, dashed, minimum width=0.7cm] (pc0) at (120,0) {};
	\node [] (lpc0) at (120,30) {$c_0$};
	
	\node [draw,rectangle, fill=black, minimum height=0.7cm] (t1) at (180,0) {};
	\node [] (lt0) at (180,-34) {$t_1$};
	\node [color=red] (lit1) at (180,30) {\scriptsize$[20,20]$};
	
	\node [draw,circle, minimum width=0.7cm] (p1) at (240,0) {};
	\node [] (lp1) at (240,-30) {$p_1$};
	
    \node [draw,rectangle, fill=black, minimum width=0.7cm] (t2) at (240,90) {};
	\node [] (lt0) at (204,90) {$t_2$};
	\node [color=red] (lit1) at (280,90) {\scriptsize$[0,1]$};
	
	\node [draw,circle, minimum width=0.7cm] (p2) at (240,150) {};
	\node [] (lp2) at (204,150) {$p_2$};
		
	\draw[-latex] (p0) -- (t0) {};
	\draw[-latex] (t0) -- (pc0) {};
	\draw[-latex] (pc0) -- (t1){};
	\draw[-latex] (t1) -- (p1){};
	\path[-latex] (p1) edge [bend left] (t2){};
	
	\path [-latex] (t2) edge [bend left] (p1){};
	\draw[-latex] (t2) -- (p2){};
		
	%\node [] (figlaba) at (140,55) {$a)$};
	\end{scope}
\end{tikzpicture}\vspace*{-3mm}
		\caption{$a)$ A waiting net $\wnet$ \hspace{0.6cm} $b)$ a part of TPN needed to encode $\lang(\wnet)$ \hspace{0.3cm}
$c)$ An unbounded waiting net}
\label{ex-counter-TPN}
\end{center}
\vspace{-2\medskipamount}
\end{figure}

	The above conditions are needed to ensure that $\lang(\mathcal N)= \lang(\wnet)$. We can now show that it is impossible to build a net $\mathcal N$ satisfying all these constraints. Since $t_1$ must fire {\em after} $t_0$ , $\mathcal N$ must contain a subnet of the form shown in figure~\ref{ex-counter-TPN}-b, where $p$ is an empty place preventing firing $t_1$ before $t_0$. Notice however that $p$ forbids enabling $t_1$ (and hence measuring time) from the beginning of a run.
We can force $t_0$ to fire between 0 and 20 units with the appropriate time interval $[0,20]$, but the $TPN$ of Figure~\ref{ex-counter-TPN}-$b)$ can not remember the firing date of $t_0$, nor associate to $t_1$ a clock which value increases  before $t_0$ fires.
	Let $[\alpha,\beta]$ be the time interval associated to  $t_1$,
	and consider the two extreme but legal firing dates of $t_0$, namely 0 and 20 time units. Allowing these two dates amounts to requiring that $0+\alpha(t_1) = 0 + \beta(t_1) = 20 + \alpha(t_1) = 20 + \beta(t_1) = 20$ which is impossible. Hence, there exists no $TPN$ recognizing $\lang(\wnet)$. This shows that $BTPN \strictlessexp BWTPN$. The proof easily extends to show that $TPN \strictlessexp WTPN$, simply by adding an unbounded part of net that becomes active immediately after firing $t_1$, as depicted in the waiting net of Figure~\ref{ex-counter-TPN}-c.	
\end{proof}

%STOPWATCHES
As mentioned earlier in the paper, one can consider that in a run of waiting net, a transition with an empty control
place in its preset can let its clock reach the upper bound of its interval, and then freeze it, waiting for the control place to be filled. A {\em stopwatch time Petri net} (SwTPN for short) is a TPN with an additional relation $sw:P\to T$, defining stopwatch arcs. An enabled transition $t$ such that $M\geq sw(t)$ is said {\em active}, and its clock evolves during timed moves. An enabled transition such that $M< sw(t)$ is said {\em suspended} and it clock does not change during timed move until it becomes active again.
A natural question is hence to compare waiting nets and SwTPNs. The first thing to notice is that bounded SwTPNs (BSwTPNs) are already Turing powerful~\cite{BerthomieuLRV07} while bounded waiting nets are not. So we immediately obtain the following proposition.

\begin{proposition}
\label{Prop_some_swnets_cannotWnets}
There exists bounded SwTPNs that cannot be simulated by bounded waiting nets.
\end{proposition}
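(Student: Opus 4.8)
The plan is to derive the statement from the decidability gap between the two models rather than by exhibiting an explicit separating timed language. First I would recall that, by Corollary~\ref{Cor-stateclass-finite}, every bounded waiting net $\wnet$ has a finite state class graph, and hence by Corollary~\ref{Cor_reachability_coverability} its reachability and coverability problems are decidable (indeed PSPACE-complete). In sharp contrast, bounded SwTPNs are Turing powerful, so reachability, coverability and boundedness are all undecidable even for bounded stopwatch nets~\cite{BerthomieuLRV07}. The whole argument rests on turning this computability gap into a non-simulability result.

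To make ``cannot be simulated'' precise and the contradiction clean, I would work at the level of timed languages, the notion of equivalence used throughout this section. The key observation is that for a bounded waiting net the timed-language membership problem is decidable: by Proposition~\ref{prop_scta_is_correct} we have $\lang(\wnet)=\lang(\scta(\wnet))$, and $\scta(\wnet)$ is a finite timed automaton, for which membership of a given timed word is decidable. Consequently the class of bounded waiting nets produces only timed languages with a recursive membership problem (and, after untiming, only regular untimed languages, since the finite SCG read as an automaton recognizes $\lang^U(\wnet)$). On the SwTPN side, Turing-powerfulness lets one build a single bounded SwTPN that simulates a universal machine, feeding the machine input through the sequence and timing of its initial moves and exposing a distinguished observable transition that fires exactly when the simulated machine halts. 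The resulting timed language has an undecidable, non-recursive membership problem.

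Putting these together gives the proposition by contradiction: if this particular bounded SwTPN were timed-language equivalent to some bounded waiting net $\wnet$, then membership in its timed language would coincide with membership in $\lang(\wnet)$, which is decidable, contradicting non-recursiveness. Hence no bounded waiting net can simulate it.

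The main obstacle, and the step I would spend the most care on, is precisely the reconciliation of ``simulation'' with ``Turing powerful'': the bare existence of an equivalent bounded waiting net does not by itself transfer a decision procedure, so the contradiction must be drawn at a level where bounded waiting nets are \emph{provably} recursive. I would therefore anchor the argument on the recursiveness of timed/untimed membership via the finite $\scta(\wnet)$, rather than on the decidability of reachability, and I would make sure the chosen encoding of a universal machine into a bounded SwTPN indeed yields a net that stays bounded while exposing halting as an observable timed behaviour, so that non-recursiveness lands squarely on the timed language. If instead one prefers a simulation notion that preserves markings (e.g.\ a reachability-preserving or timed-bisimulation simulation), the same dichotomy applies verbatim, now contrasting the decidable reachability of Corollary~\ref{Cor_reachability_coverability} with the undecidable reachability of bounded SwTPNs~\cite{BerthomieuLRV07}.
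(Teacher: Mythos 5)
Your proposal is correct and follows essentially the same route as the paper, which derives the proposition immediately from the dichotomy that bounded SwTPNs are Turing powerful~\cite{BerthomieuLRV07} while bounded waiting nets are not (their finite state class graphs make reachability, coverability, and language membership decidable). The paper leaves this as a one-line remark; your version merely makes the same argument explicit and rightly anchors the contradiction on the recursiveness of $\lang(\wnet)$ via the finite $\scta(\wnet)$ rather than on a non-effective equivalence.
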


One interesting point to remark is that expressiveness of waiting nets and SwTPNs have different origins. Waiting nets (and TPNs) can encode instructions of a Counter Machine only if they are unbounded. For stopwatch Petri nets, counter values in a Minsky Machine can be encoded by differences between upper and lower bounds on clock constraints in domains~\cite{BerthomieuLRV07}, and this encoding works even whith bounded SwTPNs.
We can go further and prove that bounded waiting nets and BSwTPNS are incomparable. Consider again the example of Figure~\ref{ex-counter-TPN}-a. We already know that the language of this bounded waiting net cannot be encoded by a bounded TPN with only two transitions and an injective labeling. Now, allowing stopwatch arcs still does not alllow for the encoding of a SwTPN recognizing the same timed language as the waiting net of Figure~\ref{ex-counter-TPN}-a.

\begin{proposition}
\label{Prop_some_wnets_cannotBswNets}
Bounded waiting nets and bounded stopwatch Petri nets are uncomparable.
\end{proposition}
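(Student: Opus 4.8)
The plan is to establish the two non-inclusions separately, since incomparability means neither $BWTPN \lessexp BSwTPN$ nor $BSwTPN \lessexp BWTPN$ holds. The direction $BSwTPN \not\lessexp BWTPN$ is immediate from Proposition~\ref{Prop_some_swnets_cannotWnets}: bounded SwTPNs are Turing powerful, whereas reachability is decidable for bounded waiting nets (Corollary~\ref{Cor_reachability_coverability}); were some bounded waiting net language-equivalent to a bounded SwTPN encoding a counter machine, one could decide reachability of the SwTPN through its (untimed) language, a contradiction. So I would dispatch this direction by citing Proposition~\ref{Prop_some_swnets_cannotWnets}.

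For the converse direction $BWTPN \not\lessexp BSwTPN$, I would reuse the bounded waiting net $\wnet$ of Figure~\ref{ex-counter-TPN}-a), with $\lang(\wnet) = \{(t_0,d_0)(t_1,20) \mid 0 \le d_0 \le 20\}$, and show that no bounded SwTPN $\mathcal{N}$ recognizes it. As in Theorem~\ref{prop-BWPN-BTPN}, injectivity of the labeling together with the fact that words of $\lang(\wnet)$ use only the labels $t_0,t_1$ forces $\mathcal{N}$ to have exactly these two transitions, each fired exactly once, in the order $t_0$ then $t_1$ in every run.

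The decisive observation is that stopwatches buy no extra power in this two-transition setting. In any run of $\mathcal{N}$ producing $(t_0,d_0)(t_1,20)$, no transition fires before $t_0$ nor strictly between the firing of $t_0$ at date $d_0$ and that of $t_1$ at date $20$, so the marking of $\mathcal{N}$ is constant on $[0,d_0)$ and on $(d_0,20)$. Since the stopwatch status of $t_1$ (active or suspended) is a function of the marking alone, the clock of $t_1$ changes rate at most once, at date $d_0$: on each of the two intervals it either advances at rate one or stays frozen, exactly as in a plain TPN. I would then do the resulting finite case analysis. If $t_1$ is enabled only after $t_0$ fires (its preset containing a place filled by $t_0$, which is how one enforces the ordering), then its clock is reset at date $d_0$ and measures $20-d_0$ with no memory of $d_0$; the net behaves as the two-transition TPN of Figure~\ref{ex-counter-TPN}-b), and the extreme dates $d_0=0$ and $d_0=20$ force $\alpha(t_1)=\beta(t_1)=20$ and simultaneously $\alpha(t_1)=\beta(t_1)=0$, impossible. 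If instead $t_1$ is enabled from date $0$ with an active clock, then keeping it unfirable until $t_0$ fires (which may be as late as date $20$, its clock ranging over all of $[0,20)$) forces $\alpha(t_1)\ge 20$, hence $\alpha(t_1)=\beta(t_1)=20$ to fire at date $20$; but then $t_0$ and $t_1$ are both firable at date $20$ with nothing ordering them, so the run $(t_1,20)(t_0,20)\notin\lang(\wnet)$ is legal, a contradiction. A clock suspended on $[0,d_0)$ merely reproduces the first case, and a clock frozen on $(d_0,20)$ makes $t_1$ firable at every date $\ge d_0$, again contradicting $\lang(\wnet)$. Every branch yields a contradiction, so no bounded SwTPN recognizes $\lang(\wnet)$, and the proof combines both directions.

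The step I expect to be the main obstacle is ruling out that a clever combination of stopwatch arcs and auxiliary places could at once enforce the ordering $t_0$ before $t_1$ \emph{and} let $t_1$ measure time before becoming fully enabled — precisely the decoupling that only control places provide. The constant-marking observation is the lever that neutralizes stopwatches; making it fully rigorous requires checking that, under the injective two-transition shape, no auxiliary structure can toggle the stopwatch status of $t_1$ except at the single firing of $t_0$, which holds because a transition firing is the only event that can change the marking.
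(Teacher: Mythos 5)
Your proof is correct and follows the same skeleton as the paper's: the direction $BSwTPN \not\lessexp BWTPN$ is dispatched by citing Proposition~\ref{Prop_some_swnets_cannotWnets} exactly as the paper does, and the direction $BWTPN \not\lessexp BSwTPN$ reuses the waiting net of Figure~\ref{ex-counter-TPN}-a) and the impossibility arguments of Theorem~\ref{prop-BWPN-BTPN}, ending in the same exhaustive contradiction. Where you differ is in how the case analysis is organized. The paper's proof is syntactic: it fixes a single stopwatch place $p_{sw}\in sw(t_1)$ and cases on whether $p_{sw}$ lies in $\postset{t_0}$, in $\preset{t_0}$, or in neither (combined with whether $\postset{t_0}\cap\preset{t_1}$ is empty). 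Your proof is semantic: since the marking is constant on $[0,d_0)$ and on $(d_0,20)$ and the stopwatch status of $t_1$ is a function of the marking alone, the clock of $t_1$ has a constant rate in $\{0,1\}$ on each of the two phases, toggling at most once, at $d_0$; you then case on the resulting behaviors. This buys a small but genuine strengthening: your lemma uniformly covers an \emph{arbitrary} set of stopwatch arcs $sw(t_1)$ (the status depends only on the marking, however many places control it), whereas the paper's enumeration is phrased for one controlling place $p_{sw}$. Each of the paper's syntactic cases maps onto one of your behavioral branches, so the two proofs cover the same ground. Two minor points: in your ``frozen on $(d_0,20)$'' branch you should also note the subcase where the frozen value lies outside $[\alpha(t_1),\beta(t_1)]$, in which case $t_1$ is never firable and the word $(t_0,d_0)(t_1,20)$ cannot be produced --- an immediate contradiction either way (and under the paper's reading, where suspension also blocks firing, this branch dies even faster); and your reduction to a two-transition net with each transition fired once carries the same implicit WLOG as the paper's own proofs (no auxiliary never-firing transitions shaping behavior through urgency), so you are at exactly the paper's level of rigor there.
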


\begin{proof}
We already know from Proposition~\ref{Prop_some_swnets_cannotWnets} that some BSwTPNs cannot be simulated by waiting nets. It remains to show the converse direction.
Assume that there exists a BSwTPN that can encode the same language as the waiting net of Figure~\ref{ex-counter-TPN}-a.
We know from the proof of Theorem~\ref{prop-BWPN-BTPN} that, to encode this example with a TPN that has only two transitions $t_0$ and $t_1$, if $\postset{t_0}\cap \preset{t_1}\neq \emptyset$, then, one has to enforce occurrence of $t_1$ at date $20-d_0$ for every possible date $d_0$ for the firing of $t_0$. We also proved that this is impossible for any time interval $[\alpha,\beta]$ attached to $t_1$.
Let us assume that $\postset{t_0}\cap \preset{t_1}\neq \emptyset$, and that there exists $p_{sw} \in sw(t_1)$, i.e. that a stopwatch arc from $p_{sw}$ to $t_1$ controls clock and firing of $t_1$. If $p_{sw} \in \postset{t_0}$ then $t_1$ is suspended until $t_0$ fires, and we are still in a situation where we cannot enforce firing of $t_1$  exactly $20-d_0$ time units after firing of $t_0$. If $p_{sw} \in \preset{t_0}$, then after firing of $t_0$, $t_1$ cannot fire.
Let us now assume that $\postset{t_0}\cap \preset{t_1} = \emptyset$. We already know from Theorem~\ref{prop-BWPN-BTPN} that without a stopwatch arc restricting firing of $t_1$, one cannot prevent $t_1$ from firing at date $20$ before $t_0$. As in the preceding case, if we attach interval $[\alpha,\beta]$ to transition $t_1$ and require $p_{sw} \in \postset{t_0}$, then $t_1$ is initially suspended, and we mush have $[\alpha,\beta]=[20-d_0,20-d_0]$ for all values of $d_0$, which is impossible. Now assume that $p_{sw} \in \preset{t_0}$. Then $t_1$ is suspended after firing of $t_0$ and hence we cannot fire it. We hence have $p_{sw} \not\in \preset{t_0} \cup \postset{t_0}$, and as the contents of $p_{sw}$ is unchanged by firing of $t_0$, then $t_1$ is either never firable if $M_0(p_{sw})=0$, or can fire before $t_0$ if $M_0(p_{sw})>0$ and we set $[\alpha,\beta]=[20,20]$ , or at an unwanted dates if we attach another arbitrary interval $[\alpha,\beta]$ to $t_1$.
\end{proof}

\begin{theorem}
\label{thm_generic_inj-nonInj_expressiveness}
Injective TPNs and WTPNS are strictly less expressive than their non-injective counterparts, i.e.
$$\begin{array}{ll} BTPN \strictlessexp BTPN^{\noninj}, &   TPN \strictlessexp TPN^{\noninj},\\
BWTPN \strictlessexp BWTPN^{\noninj}, &  WTPN \strictlessexp WTPN^{\noninj}\\
\end{array}$$
\end{theorem}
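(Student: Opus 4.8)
Both inclusions $\lessexp$ in each pair are immediate: an injective labeling is a syntactic special case of a non-injective one, so every injective (W)TPN is already a non-injective (W)TPN recognizing the same language. Hence only strictness needs work, and the plan is to settle all four pairs with a \emph{single} witness. I will exhibit one timed language $L$ that is recognized by a bounded non-injective TPN---so that $L$ lies in each of $BTPN^{\noninj}$, $TPN^{\noninj}$, $BWTPN^{\noninj}$ and $WTPN^{\noninj}$---and then prove that no injective (W)TPN whatsoever, bounded or not, recognizes $L$.

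For the witness I take the prefix-closed language $L$ whose only nonempty words are $(a,1)$ and $(a,1)(a,3)$. It is realized by the net $\mathcal N$ with a single initial token in $p_0$ and two $a$-labeled transitions placed in sequence: $t_1$ with $\preset{t_1}=p_0$, $\postset{t_1}=p_1$ and interval $[1,1]$, and $t_2$ with $\preset{t_2}=p_1$, $\postset{t_2}=p_2$ and interval $[2,2]$. A short trace through the operational rules shows that $t_1$ fires urgently at date $1$, enabling $t_2$, which then fires urgently at date $3$; taking prefixes gives exactly $L$. The net is $1$-safe, has no control place and no $\epsilon$, so it is a bounded non-injective TPN, placing $L$ in all four non-injective classes at once.

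The heart of the argument is the impossibility direction. Suppose an injective (W)TPN $\mathcal M$ satisfies $\lang(\mathcal M)=L$. Since $L$ contains only $a$'s and $\mathcal M$ has no $\epsilon$, the only transition that ever fires along a run of $\mathcal M$ is the \emph{unique} $a$-labeled transition $t_a$: firing any other transition would append its label, which is absent from $L$. The key structural consequence is that between two consecutive firings of $t_a$ the marking is static, so when $t_a$ fires the second time it was necessarily re-enabled (and its clock reset to $0$) at the instant of the first firing. Now the first firing starts from $M_0$ with a clock running from $0$, and since $(a,t)\in L$ forces $t=1$, the firable window must be exactly $\{1\}$; this pins $\alpha(t_a)=\beta(t_a)=1$, because any wider interval would already produce a word $(a,t)$ with $t\neq 1$. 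But then the re-enabled second firing must occur exactly one time unit later, at date $2$, yielding $(a,1)(a,2)\notin L$; and if $t_a$ is not re-enabled, the word $(a,1)(a,3)$ is never produced. Either way $\lang(\mathcal M)\neq L$, a contradiction, which yields $BTPN\strictlessexp BTPN^{\noninj}$ and $TPN\strictlessexp TPN^{\noninj}$.

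The same contradiction must then be pushed through the richer waiting-net semantics for the two WTPN pairs, and this is where I expect the main obstacle. One has to rule out that control places or the freezing of a waiting clock let the single transition $t_a$ realize two distinct inter-event delays. The argument is that $t_a$ can fire at date $1$ only if it is \emph{fully enabled} in $M_0$---its control preset cannot be filled later, since that would require another labeled firing---so its clock again runs from $0$ and the pinning $\alpha(t_a)=\beta(t_a)=1$ still applies; after the first firing the marking is again static, so $t_a$ is either fully enabled, and then fires urgently at date $2$ giving $(a,1)(a,2)\notin L$, or merely waiting, its frozen clock never receiving the missing control, so $(a,1)(a,3)$ is never produced. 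Hence no injective WTPN recognizes $L$ either, giving $BWTPN\strictlessexp BWTPN^{\noninj}$ and $WTPN\strictlessexp WTPN^{\noninj}$. The delicate bookkeeping is exactly the uniform justification that ``only $t_a$ ever fires'' and the waiting/freezing case split after the first firing; the remainder is a routine trace through the semantics.
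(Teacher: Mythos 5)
Your proposal is correct in substance but follows a genuinely different route from the paper. The paper's proof is structural: it observes that with injective labeling, each timed word is recognized by a unique transition sequence, so after any fixed prefix the possible dates of the next letter form a single interval; it then exhibits a non-injective TPN (two $a$-branches leading to $b$-transitions with intervals $[4,5]$ and $[7,8]$) whose next-$b$ dates form a \emph{disjoint union} of intervals, which no injective model can realize. You instead reuse the style of the paper's own proof of Theorem~\ref{prop-BWPN-BTPN}: a one-letter witness $L=\{\epsilon,(a,1),(a,1)(a,3)\}$ where the unique $a$-transition of any injective candidate would have its clock reset at the first firing and hence could not produce two distinct inter-firing delays. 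Your route has two advantages: a single witness disposes of all four pairs uniformly (the paper leaves the lift from the TPN example to WTPNs implicit), and your impossibility argument explicitly traverses the waiting-net semantics (static marking between firings, the newly-enabled self-reset rule, and the frozen-clock case), whereas the paper's interval-convexity claim is asserted rather than proved and is in fact subtler for WTPNs, whose state class graphs are non-deterministic. The paper's approach, in exchange, isolates a reusable structural invariant (convexity of next-date sets under injective labeling) rather than a one-off counting argument.

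One step of yours is under-justified: the pinning ``the firable window is $\{1\}$, hence $\alpha(t_a)=\beta(t_a)=1$.'' The window of first-firing dates is $[\alpha(t_a),\beta(t_a)]$ \emph{truncated by urgency of other fully enabled transitions that never fire}; e.g.\ $\alpha(t_a)=1$, $\beta(t_a)>1$ with some never-firing $t_b$ whose deadline blocks time at date $1$ also yields window $\{1\}$, so your sentence as stated does not pin $\beta(t_a)$. The hole is harmless and you can bypass it entirely: continuity gives $\alpha(t_a)=1$ (every date in $[\alpha(t_a),1]$ would otherwise yield a word $(a,t)$ with $t<1$), and then the run realizing $(a,1)(a,3)$ itself witnesses that time passes date $2$ with $t_a$ fully enabled (the marking is static after date $1$, so full enabledness at date $3$ forces it throughout $(1,3]$) and clock value $1\geq \alpha(t_a)$; hence $t_a$ is firable at date $2$ in that very run, so $(a,1)(a,2)\in\lang(\mathcal M)\setminus L$, a direct contradiction needing no case split on $\beta(t_a)$ or on blocking transitions. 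With that repair, your argument is complete, and it indeed covers the unbounded pairs since your impossibility claim makes no boundedness assumption.
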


\begin{proof}
With non-injective labeling, (W)TPNs can recognize unions of timed language, which is not the case for models with injective labeling.
In every timed word of the language of a model with injective labeling, a letter represents an occurrence of a transition. That is, every occurrence of some letter  $\sigma$ labeling a transition $t_\sigma$ is constrained by time in a similar way in a word: if $t_\sigma$ is enabled at some date $d$, then $t_\sigma$ must occur later at some date $d'$ with $d+\alpha(t_\sigma) \leq d' \leq \beta(t_\sigma)$ {\em at every position} in the considered word. This remark also holds for distinct words with the same prefix: let $w.(\sigma,d)$ be a timed word, with $w=(\sigma_1,d_1).(\sigma_2,d_2)\dots (\sigma_k,d_k)$. The possible values for $d$ lay in an interval that only depends on the unique sequence of transitions followed to recognize $w$.
%, which uniquely determines when $t_\sigma$ was last newly enabled.
With non-injective labeling, one can recognize a word $w$ via several sequences of transitions, and associate different constraints to the firing date of $\sigma$. The union of this set of constraints need not be a single interval. Consider for instance the TPN of Figure~\ref{fig_example_injective}, that defines the language $\lang(\tpn)= \{(a,d_1).(b,d_2)\mid d_1\in [0,1] \wedge d_2 \in [d_1+4,d_1+5] \} \cup \{(a,d_1).(b,d_2) \mid d_1\in [0,1] \wedge d_2 \in [d_1+7,d_1+8] \}$. Hence, TPNs and WTPNs variants are strictly less expressive than their non-injective counterparts.
\end{proof}

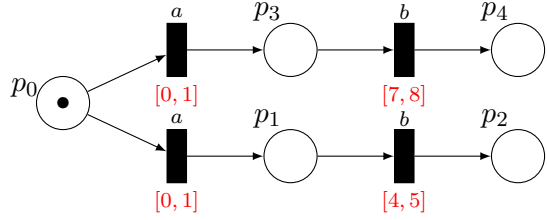
\begin{figure}[ht]
\vspace{-2\medskipamount}
\begin{center}
\begin{tikzpicture}
\begin{scope}
\node [draw,circle,minimum size=0.7cm] (p0) at (1,-0.7) {$\bullet$};
\node [] (lp0) at (0.5,-0.5) {$p_0$};

\node [draw,circle,minimum size=0.7cm] (p1) at (4,-1.4) {};
\node [] (lp1) at (3.7,-0.9) {$p_1$};

\node [draw,circle,minimum size=0.7cm] (p2) at (7,-1.4) {};
\node [] (lp2) at (6.7,-0.9) {$p_2$};

\node [draw,circle,minimum size=0.7cm] (p3) at (4,0) {};
\node [] (lp3) at (3.7,0.5) {$p_3$};

\node [draw,circle,minimum size=0.7cm] (p4) at (7,0) {};
\node [] (lp4) at (6.7,0.5) {$p_4$};

\node [draw,rectangle,fill=black,minimum height=0.7cm] (t0) at (2.5,-1.4){};
\node [] (lt0) at (2.5,-0.9) {\scriptsize$a$};
\node [color=red] (it0) at (2.5,-2) {\scriptsize$[0,1]$};

\node [draw,rectangle,fill=black,minimum height=0.7cm] (t1) at (5.5,-1.4){};
\node [] (lt1) at (5.5,-0.9) {\scriptsize$b$};
\node [color=red] (it1) at (5.5,-2) {\scriptsize$[4,5]$};

\node [draw,rectangle,fill=black,minimum height=0.7cm] (t2) at (2.5,0){};
\node [] (lt2) at (2.5,0.5) {\scriptsize$a$};
\node [color=red] (it1) at (2.5,-0.6) {\scriptsize$[0,1]$};

\node [draw,rectangle,fill=black,minimum height=0.7cm] (t3) at (5.5,0){};
\node [] (lt2) at (5.5,0.5) {\scriptsize$b$};
\node [color=red] (it1) at (5.5,-0.6) {\scriptsize$[7,8]$};

\draw [-latex] (p0) edge  (t0){};
\draw [-latex] (p0) edge  (t2){};
\draw [-latex] (t0) edge  (p1){};
\draw [-latex] (p1) edge  (t1){};
\draw [-latex] (t1) edge  (p2){};
\draw [-latex] (t2) edge  (p3){};
\draw [-latex] (p3) edge  (t3){};

\draw [-latex] (t3) edge  (p4){};
\end{scope}
\end{tikzpicture}
\vspace{-1\medskipamount}
\caption{A TPN $\tpn_2$ with non-injective labeling.}
\label{fig_example_injective}
\end{center}
\vspace{-1\medskipamount}
\end{figure}

\begin{corollary}
	$BTPN^{\noninj} \strictlessexp BWTPN^{\noninj}$
\end{corollary}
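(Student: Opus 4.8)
The plan is to follow the template of Theorem~\ref{prop-BWPN-BTPN}: first establish the inclusion $BTPN^{\noninj} \lessexp BWTPN^{\noninj}$, then exhibit a witness separating the two classes. The inclusion is immediate and needs no new argument: by Definition~\ref{def:WTPN} a TPN is exactly a waiting net with $C=\emptyset$, and this syntactic restriction is orthogonal to the labeling, so every bounded non-injectively labeled TPN is already a bounded non-injectively labeled waiting net recognizing the same timed language.

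For strictness I would reuse the waiting net $\wnet$ of Figure~\ref{ex-counter-TPN}-a, labelling $t_0$ by $a$ and $t_1$ by $b$, so that (as computed in Theorem~\ref{prop-BWPN-BTPN}) $\lang(\wnet)=\{(a,d_0)(b,20)\mid 0\le d_0\le 20\}$. Since injective labeling is a special case of non-injective labeling, $\wnet\in BWTPN^{\noninj}$, so it suffices to prove that no bounded TPN with non-injective labeling recognizes $\lang(\wnet)$. The two features of $\lang(\wnet)$ to exploit are: it is prefix-closed with untimed language $\{\varepsilon,a,ab\}$, so every occurrence of $b$ must be preceded by an occurrence of $a$ and no word begins with $b$; and the date of $b$ is the constant $20$ regardless of the date $d_0\in[0,20]$ of $a$.

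The hard part, and what distinguishes this from the injective argument of Theorem~\ref{prop-BWPN-BTPN}, is that non-injective labeling forbids the ``only two transitions'' shortcut: a candidate net $\mathcal N$ may recognize $\lang(\wnet)$ through a \emph{union} of many transition sequences carrying the labels $a$ and $b$. I would argue structurally instead. Suppose $\mathcal N$ is a bounded TPN with $\lang(\mathcal N)=\lang(\wnet)$, and consider any run producing $(a,d_0)(b,20)$ by firing a transition $t$ (labelled $a$) at date $d_0$ and a transition $t'$ (labelled $b$) at date $20$. I distinguish two cases according to whether $t'$ causally depends on $t$. If $t'$ is newly enabled by the firing of $t$ (more generally, if $t'$ is not enabled before $t$ fires), then in a TPN its clock is reset at date $d_0$, so the firing requires $d_0+\alpha(t')\le 20\le d_0+\beta(t')$; pinning $b$ to the \emph{exact} date $20$ for every admissible $d_0$ would then force $\alpha(t')=\beta(t')$ and $d_0+\alpha(t')=20$ simultaneously for all $d_0\in[0,20]$, which is impossible exactly as in Theorem~\ref{prop-BWPN-BTPN}. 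If instead $t'$ is already enabled independently of $t$, then, since TPNs have no control places and a fully enabled transition can be disabled only by a discrete move and never by elapsing time, $t'$ is firable without $t$ ever firing; its interval would have to be $[20,20]$ to fix the date, but then at date $20$ one may fire $t'$ before the (also urgent) $t$, producing the prefix $(b,20)$ or the word $(b,20)(a,20)$, neither of which lies in $\lang(\wnet)$.

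Both cases contradict $\lang(\mathcal N)=\lang(\wnet)$, so no bounded non-injective TPN recognizes $\lang(\wnet)$, giving $BTPN^{\noninj}\strictlessexp BWTPN^{\noninj}$. The step I expect to require the most care is ruling out the ``independently enabled $b$'' escape route cleanly for arbitrary nets: I must make sure that forbidding $b$ before $a$ (prefix-closure) genuinely forces a causal dependency of $t'$ on $t$ in \emph{every} recognizing transition sequence, handling in particular the boundary date $d_0=20$, where $a$ and $b$ become simultaneously urgent and their firing order is uncontrolled.
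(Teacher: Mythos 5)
Your proposal is correct and uses the same separating witness as the paper, namely the net of Figure~\ref{ex-counter-TPN}-a with distinct labels on $t_0$ and $t_1$, but it is substantially more explicit than the paper's own proof, which disposes of strictness in two lines by citing ``the reasons detailed in the proof of Theorem~\ref{prop-BWPN-BTPN}''. That theorem's argument is written for injective labeling and leans on the candidate net having exactly two transitions, each firing once; you correctly identify that non-injective labeling voids this shortcut, and you replace it with a per-run case analysis on whether the $b$-labeled transition $t'$ has its clock reset at date $d_0$ (newly enabled, including the disable/re-enable case of Definition~\ref{def_trans_firing}) or running from date $0$. Both branches rest on the right ideas: interval arithmetic kills the causally dependent case, and prefix-closure (no word of $\lang(\wnet)$ starts with $b$) kills the independently enabled case. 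For the latter you can even simplify: since $(a,20)(b,20)$ is in the language, time can elapse to $20$ with nothing fired, so every transition enabled in $M_0$ has $\beta\geq 20$; an independently enabled $t'$ with $\alpha(t')\leq 20$ is then firable before any $a$-transition, already producing a $b$-first word --- you need neither the interval to be exactly $[20,20]$ nor $t$ to be urgent at date $20$.

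One step must be made explicit for your Case 1 to close. Your impossibility argument quantifies ``for every admissible $d_0$'' with $t'$ fixed, but a priori each $d_0$ could be served by a different pair $(t,t')$, and a fixed pair handling a \emph{single} $d_0$ is perfectly consistent: take $[\alpha(t'),\beta(t')]=[20-d_0,20-d_0]$. The fix is a pigeonhole argument: because the class is $BTPN^{\noninj}$ without $\epsilon$-labels, every discrete move emits a visible letter, so the untimed language $\{\varepsilon,a,ab\}$ forces every run to fire at most one $a$-labeled transition followed by one $b$-labeled transition; the finitely many pairs $(t,t')$ must then cover the continuum of dates $d_0\in[0,20]$, so some fixed pair serves two distinct dates $d_0\neq d_0'$, and only then does exactness force $\alpha(t')=\beta(t')=20-d_0=20-d_0'$, the desired contradiction. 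With that line added your argument is complete, and it in fact supplies the generalization to non-injective labeling that the paper's proof leaves implicit.
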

\begin{proof}
Inclusion $BTPN^{\noninj} \lessexp BWTPN^{\noninj}$ is straightforward from Definition~\ref{def:WTPN}. Now this inclusion is strict. Take the example of Figure~\ref{ex-counter-TPN}-a), and attach distinct labels to $t_0$ and $t_1$. The language recognized cannot be encoded with a non-injective TPN, for the reasons detailed in the proof of Theorem.~\ref{prop-BWPN-BTPN}.
\end{proof}

To conclude on the effects of non-injective labeling, we can easily notice that 	$BWTPN^{\noninj} \lessexp TA(\leq,\geq)$ because the automaton construction of Definition~\ref{def-scta} still works. Indeed, the automaton construction depends on transitions fired in extended state classes, and not on their labels. Hence, starting from a non-injective BWTPN, the construction of Definition~\ref{def-scta} gives a timed automaton that recognizes the same language. As in the injective case, this relation is strict, because the example $\ta_1$ of Figure~\ref{fig_example_ta1}-a cannot be implemented by a waiting net in $BWTPN^{\noninj}$.

The last point to consider is whether allowing silent transitions increases the expressive power of the model.
Consider again the automaton $\ta_1$ of Figure~\ref{fig_example_ta1}-a. If transitions labeled by $\epsilon$ are allowed, then the timed language $\lang(\ta_1)$ can be encoded by the TPN of Figure~\ref{fig_example_ta1}-c.
It was shown in~\cite{DiekertGP97} that timed automata with epsilon transitions are strictly more expressive than timed automata without epsilon. We hence have $TA(\leq,\geq) \strictlessexp TA_\epsilon(\leq,\geq)$. We can also show that differences between WTPNs, TPN, and automata disappear when silent transitions are allowed.

\begin{theorem}
\label{thm_equality_with_epsilon}
$TA_\epsilon(\leq,\geq) \equlang 	BTPN_\epsilon \equlang  	BWTPN_\epsilon$
\end{theorem}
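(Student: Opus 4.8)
The plan is to derive both language equalities at once by closing a cycle of $\lessexp$ inclusions through the three classes,
$$BWTPN_\epsilon \lessexp TA_\epsilon(\leq,\geq) \lessexp BTPN_\epsilon \lessexp BWTPN_\epsilon.$$
Since $\equlang$ is defined as mutual $\lessexp$, a cycle that passes through all three classes forces $TA_\epsilon(\leq,\geq) \equlang BTPN_\epsilon \equlang BWTPN_\epsilon$ simultaneously. Each of the three arrows will reuse a result already available: the state class timed automaton construction of this paper, the external equivalence of~\cite{BCHLR13}, and the trivial inclusion of TPNs into waiting nets.

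For the last (easiest) arrow, $BTPN_\epsilon \lessexp BWTPN_\epsilon$, I would simply recall from Definition~\ref{def:WTPN} that a TPN is a waiting net with an empty set of control places; this correspondence is unaffected by allowing $\epsilon$ as a label, so every bounded TPN with silent transitions already is a bounded waiting net with silent transitions having the same timed language.

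For $BWTPN_\epsilon \lessexp TA_\epsilon(\leq,\geq)$ I would apply the construction of Definition~\ref{def-scta} and Proposition~\ref{prop_scta_is_correct}. This construction depends only on which transitions are fired between extended state classes and not on their labels, so it extends verbatim to silent transitions: a transition $t$ with $\lab(t)=\epsilon$ yields an $\epsilon$-labelled edge of $\scta(\wnet)$. Boundedness of $\wnet$ makes $SCG(\wnet)$, and hence $\scta(\wnet)$, finite by Corollary~\ref{Cor-stateclass-finite}, so $\scta(\wnet)$ is a genuine (finite) timed automaton; its guards are conjunctions of atoms $x_i \geq a$ and its invariants conjunctions of atoms $x_i \leq b$, placing it in $TA_\epsilon(\leq,\geq)$. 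Proposition~\ref{prop_scta_is_correct} then gives $\lang(\scta(\wnet))=\lang(\wnet)$, its proof being insensitive to $\epsilon$ since the timed word of a run is read off only after projecting away $\epsilon$-moves. Finally, $TA_\epsilon(\leq,\geq) \lessexp BTPN_\epsilon$ follows from~\cite{BCHLR13} (cited in Remark~\ref{remark_epsilons}), which establishes $TA_\epsilon \equlang BTPN_\epsilon$: one direction gives, for every timed automaton with $\epsilon$-transitions, a language-equivalent bounded TPN with $\epsilon$-transitions, and restricting it to the syntactic subclass $TA_\epsilon(\leq,\geq)\subseteq TA_\epsilon$ yields the inclusion.

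The main point to get right is not a computation but a matter of tracking the guard/invariant shape around the cycle: the theorem asserts equivalence with the restricted class $TA_\epsilon(\leq,\geq)$, not the full class $TA_\epsilon$, so the second arrow must genuinely produce automata in $TA_\epsilon(\leq,\geq)$. This is exactly what the $\scta$ construction delivers and is the only arrow truly new to this paper; the other two arrows are immediate, one by inspection of Definition~\ref{def:WTPN} and the other by quoting~\cite{BCHLR13}. The remaining care is to confirm that the soundness and completeness argument behind Proposition~\ref{prop_scta_is_correct} is unaffected by $\epsilon$-labels, which holds because the state class graph abstracts transitions and delays independently of labelling.
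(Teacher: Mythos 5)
Your proof is correct and follows essentially the same route as the paper: the trivial inclusion $BTPN_\epsilon \lessexp BWTPN_\epsilon$ from Definition~\ref{def:WTPN}, the inclusion $BWTPN_\epsilon \lessexp TA_\epsilon(\leq,\geq)$ via the construction of Definition~\ref{def-scta} and Proposition~\ref{prop_scta_is_correct} (which, as you correctly observe, is insensitive to labels and hence tolerates $\epsilon$-transitions), and~\cite{BCHLR13} to close the cycle. Your only, harmless, deviation is that you invoke just the single direction $TA_\epsilon(\leq,\geq)\subseteq TA_\epsilon \lessexp BTPN_\epsilon$ of~\cite{BCHLR13} and let the three-class cycle of $\lessexp$ inclusions do the rest, whereas the paper cites the equivalence $TA_\epsilon(\leq,\geq) \equlang BTPN_\epsilon$ directly.
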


\begin{proof}
The equality $TA_\epsilon(\leq,\geq) =  	BTPN_\epsilon$ was already proved in~\cite{BCHLR13}. From Definition~\ref{def:WTPN}, we have
$BTPN_\epsilon \lessexp WBTPN_\epsilon$.  Further, for any waiting net  $\wnet \in BWTPN_\epsilon$, one can apply the construction of Definition~\ref{def-scta}  to obtain a state class timed automaton $SCTA(\wnet)$ (with $\epsilon$ transitions) recognizing the same language as $\wnet$. We hence have  $BWTPN_\epsilon \lessexp TA_\epsilon(\leq,\geq) \equlang BTPN_\epsilon$, which allows to conclude that $TA_\epsilon(\leq,\geq),	BTPN_\epsilon$, and $BWTPN_\epsilon$ have the same expressive power.
\end{proof}

Figure~\ref{fig_timed_model_relations} summarizes the relations among different classes of nets and automata, including TPNs and automata with stopwatches. An arrow $\mathcal M_1 \longrightarrow \mathcal M_2$ means that $\mathcal M_1$ is strictly less expressive than $\mathcal M_2$, and this relation is transitively closed.
Extensions of timed automata with stopwatches allow the obtained model to simulate two-counters machines (and then Turing machines)~\cite{CassezL00}. It is well known that unbounded TPNs can simulate Turing Machines too. More surprisingly, adding stopwatches to bounded TPNs already gives them the power of Turing Machines~\cite{BerthomieuLRV07}.
Obviously, all stopwatch models can simulate one another.
%However, TPN simulate counters with place contents while stopwatch automata and stopwatch BTPN simulate them with clock differences.
Hence, these models are equally expressive in terms of timed languages as soon as
%one can hide the intermediate steps needed to simulate a counter operation, i.e. as soon as
they allow $\epsilon$ transitions. The thick dashed line in Figure~\ref{fig_timed_model_relations} is the frontier for Turing powerful models, and hence also for decidability of reachability or coverability.

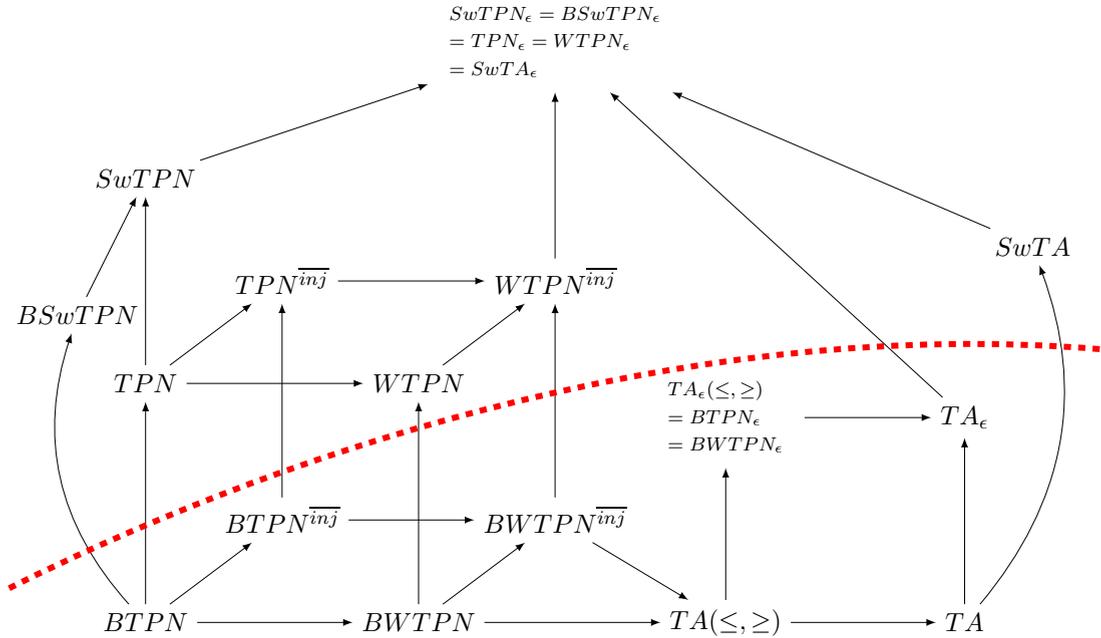
\begin{figure}[htbp]
\vspace{2mm}
\begin{center}
\scalebox{0.9}{
\begin{tikzpicture}

\node (btpn) at (0,-0.5){$BTPN$};
\node (tpn) at (0,3){$TPN$};
\node (swtpn) at (0,6){$SwTPN$};
\node (bswtpn) at (-1,4){$BSwTPN$};
\path (btpn) edge [-latex, bend left] (bswtpn){};
\draw [-latex] (btpn) -- (tpn){};
\draw [-latex] (tpn) -- (swtpn){};
\draw [-latex] (bswtpn) -- (swtpn){};

\node (bwtpn) at (4,-0.5){$BWTPN$};
\node (wtpn) at (4,3){$WTPN$};
\draw [-latex] (tpn) -- (wtpn){};
\draw [-latex] (bwtpn) -- (wtpn){};
\draw [-latex] (btpn) -- (bwtpn){};

\node (talg) at (8.5,-0.5){$TA(\leq,\geq)$};
\node (ta) at (12,-0.5){$TA$};
\node (swta) at (13,5){$SwTA$};
\path (ta) edge[-latex, bend right] (swta) {};

\draw [-latex] (bwtpn) -- (talg){};

\node (taelg) at (8.5,2.5){\scriptsize$\begin{array}{l}TA_\epsilon(\leq,\geq)\\= BTPN_\epsilon\\ = BWTPN_\epsilon \end{array}$};

\node (tae) at (12,2.5){$TA_\epsilon$};
\draw [-latex] (taelg) -- (tae){};
\draw [-latex] (talg) -- (ta){};
\draw [-latex] (talg) -- (taelg){};
\draw [-latex] (ta) -- (tae){};

\node (btpninj) at (2,1){$BTPN^{\noninj}$};
\node (bwtpninj) at (6,1){$BWTPN^{\noninj}$};
\node (tpninj) at (2,4.5){$TPN^{\noninj}$};
\node (wtpninj) at (6,4.5){$WTPN^{\noninj}$};

\draw [-latex] (btpninj) -- (bwtpninj){};
\draw [-latex] (wtpn) -- (wtpninj){};
\draw [-latex] (tpn) -- (tpninj){};
\draw [-latex] (bwtpninj) -- (wtpninj){};
\draw [-latex] (btpninj) -- (tpninj){};
\draw [-latex] (tpninj) -- (wtpninj){};
\draw [-latex] (btpn) -- (btpninj){};
\draw [-latex] (bwtpn) -- (bwtpninj){};
\draw [-latex] (bwtpninj) -- (talg){};

\node (allTuring) at (6, 8) {\scriptsize$\begin{array}{l}
SwTPN_\epsilon = BSwTPN_\epsilon\\
= TPN_\epsilon = WTPN_\epsilon\\
= SwTA_\epsilon
\end{array}$};

\draw [-latex] (tae) -- (allTuring){};
\draw [-latex] (swtpn) -- (allTuring){};
\draw [-latex] (swta) -- (allTuring){};

\draw [-latex] (wtpninj) -- (allTuring){};

%%% Decidability limit
\draw[thick, dashed,color=red,line width=2.5pt] (-2,0) .. controls (2,2) and (8,4).. (14,3.5) {};

\end{tikzpicture}
}
%\vspace{-\medskipamount}
\caption{Relation among net and automata classes, and frontier of decidability.}
\label{fig_timed_model_relations}
\vspace{-3\medskipamount}
\end{center}
\end{figure}

\section{Conclusion}
\label{sec:conclusion}

We have proposed waiting nets, a new variant of time Petri nets, that measure time elapsed since enabling
of a transition while waiting for additional control allowing its firing. This class obviously subsumes time Petri nets. More interestingly, expressiveness of bounded waiting nets lays between that of bounded TPNs and timed automata.
Waiting nets allow for a finite abstraction of the firing domains of transitions. A consequence is that one can compute a finite state class graph for bounded WTPNs, and decide reachability and coverability.

\medskip
As future work, we will investigate properties of classes of WTPN outside the bounded cases. In particular, we should investigate if being free-choice allows for the decidability of more properties in unbounded WTPNs~\cite{AkshayHP20}. A second interesting topic is control. Waiting nets are tailored to be guided by a timed controller, filling control places in due time to allow transitions firing. A challenge is to study in which conditions one can synthesize a controller to guide a waiting net in order to meet a given objective.

  %%%\nocite{*}\bibliographystyle{fundam}% the mandatory bibstyle\bibliography{biblio}

\appendix
\label{sec-appendix}

\section{Fourier-Motzkin elimination}
\label{appendix-Fourier}

Fourier-Motzkin Elimination~\cite{DantzigE73} is a method to  eliminate a set of variables $V\subseteq X$ from a system of linear inequalities over $X$. Elimination produces another system of linear inequalities over $X\setminus V$, such that both systems have the same solutions over the remaining variables. Elimination can be done by removing one variable from $V$ after another.

\medskip
Let $X = \{ x_1,\dots x_r \}$ be a set of variables, and w.l.o.g., let us assume that $x_r$ is the variable to eliminate in $m$ inequalities. All inequalities are of the form
$$c_1.x_1 + c_2.x_2 +\dots +c_r.x_r \leq d_i$$ where $c_j$'s and $d_i$ are rational values, or equivalently
$c_r.x_r \leq d_i - (c_1.x_1 + c_2.x_2 +\dots + c_{r-1}.x_{r-1})$

If  $c_r$ is a negative coefficient, the inequality can be rewritten as
$x_r \geq b_i - (a_{i,1}.x_1 + a_{i,2}.x_2 +\dots a_{i,r-1}.x_{r-1})$, and if $c_r$ is positive, the inequality rewrites as
$x_r \leq b_i - (a_{i,1}.x_1 + a_{i,2}.x_2 +\dots a_{i,r-1}.x_{r-1})$, where $b_i=\frac{d_i}{c_r}$ and $a_i=\frac{c_i}{c_r}$.

\medskip
We can partition our set of inequalities as follows.
\begin{itemize}
    \item inequalities of the form $ x_{r} \geq b_i-\sum_{k=1}^{r-1} a_{ik} x_k;$
    denote these by\\
    $x_{r} \geq A_j(x_1, \dots, x_{r-1})$ (or simply $x_{r} \geq A_j$ for short), for $j$ ranging from $1$ to $n_{A}$ where $n_{A}$ is the number of such inequalities;
    \item  inequalities of the form $ x_{r} \leq b_i-\sum_{k=1}^{r-1} a_{ik} x_k;$
    denote these by\\
    $x_{r} \leq B_j(x_1, \dots, x_{r-1})$ (or simply $x_{r} \leq B_j$ for short), for $j$ ranging from $1$ to $n_{B}$ where $n_{B}$ is the number of such inequalities;
    \item  inequalities $\phi_1, \dots \phi_{m-(n_A+n_B)}$ in which $x_r$ plays no role.
\end{itemize}

The original system is thus equivalent to:\\
$\max\left (A_1, \ldots, A_{n_A}\right) \leq  x_r \leq \min\left(B_1, \ldots, B_{n_B}\right) \land \underset{i\in 1..m-(n_A+n_B)}\bigwedge  \phi_i$.

%Elimination consists in producing a system equivalent to $\exists x_r, S$.
One can find a value for $x_r$ in a system of the form $a \leq x \leq b$ iff $a\leq b$. Hence, the above formula is equivalent to :\\
$\max(A_1, \ldots, A_{n_A}) \leq \min(B_1, \ldots, B_{n_B}) \land \underset{i\in 1..m-(n_A+n_B)}\bigwedge  \phi_i$

Now, this inequality can be rewritten as system of  $n_A \times n_B + m-(n_A+n_B)$ inequalities $\{ A_i \leq B_j \mid i\in 1..n_A,j\in 1..n_B\} \cup \{ \phi_i \mid i\in 1..m-(n_A+n_B)\}$, that does not contain $x_r$ and is satisfiable iff the original system is satisfiable.

\begin{remark}
The Fourier-Motzkin elimination preserves finiteness and satisfiability of a system of constraints. In general, the number of inequalities can grow in a quadratic way at each variable elimination. However, when systems describe firing domains of transitions and are in canonical form, they always contain less than $2\cdot|T|^2+2\cdot|T|$ inequalities, and then elimination produces a system of at most $2\cdot|T|^2+2\cdot|T|$ inequalities once useless inequalities have been removed.
\end{remark}

\section{Canonical forms : the Floyd-Warshall algorithm}
\label{appendix_Floyd_Warshall}

A way to compute the canonical form for a firing domain $D$ (i.e. the minimal set of constraints defining $\llbracket D \rrbracket$, the possible delays before firing of transitions) is to consider this domain as constraints on distances from variable's value to value $0$, or to the value of another variable. The minimal set of constraints can then be computed using a slightly modified version of the Floyd-Warshall algorithm. Given a marking $M.N$, a firing domain $D$ is defined with inequalities of the form:
\begin{itemize}
\item for every transition $t_i\in \mathsf{enabled}(M) $ we have an inequality of the form $a_i\leq \theta_i \leq b_i \,\forall t_i \in \mathsf{enabled}(M)$, with  $a_i,b_i \in \mathbb{Q}^+$,
\item for every pair of transitions $t_j \not= t_k \in \mathsf{enabled}(M)^2$ we have an inequality of the form $\theta_j -\theta_k \leq c_{jk} \, \forall t_j \not= t_k \in \mathsf{enabled}(M)$ with  $c_{jk} \in \mathbb{Q}^+$
\end{itemize}

We then apply the following algorithm:
\begin{algorithm}
\label{alg-flo-war}
\SetAlgoLined
 Input: $D$, $E= \mathsf{enabled}(M)$\;
 \For{$t_k\in E$}{
  \For{$t_j \in E$}{
   \For{$t_i \in E$}{
    $r := \min(r,b_k -  a_k)$\\
    $a_j := \max(a_j, a_k - c_{kj})$\\
    $b_i := \min(b_i, b_k + c_{ik})$\\
    $c_{ij} := \min(c_{ij}, c_{ik} + c_{kj})$\\
   }
  }
 }
 Output : $D^*$
 \caption{Floyd-Warshall}
\end{algorithm}

The system of inequalities input to the algorithm is satisfiable iff at every step of the algorithm $r\geq 0$. One can easily see that computing a canonical form, or checking satisfiability of a firing domain can be done in cubic time w.r.t. the number of enabled transitions.

\section{Difference Bound Matrices}
\label{DBM}

An interesting property of firing domains is that they can can be represented with Difference Bound Matrices (DBMs).
To have a unified form for constraints
%(here we will interpret $\theta_i$'s as clocks)
we introduce a reference value \textbf{0} with the constant value 0.
Let $\mathcal{X}_0 = \mathcal{X}\cup \{0\}$. Then any domain on variables in $\mathcal{X}$ can be rewritten as a
conjunction of constraints of the form $x - y \preceq n$ for $x, y \in \mathcal{X}_0$, $\preceq \in \{<, \leq\}$ and $n \in \mathbb{Q}$.\\
For instance, $1 \leq \theta_1 \leq 2$ can be rewritten as $\textbf{0}-\theta_1 \leq -1 \land \theta_1 - \textbf{0}\leq 2$.
Naturally, if the encoded domain has two constraints on the same pair of variables, we use their intersection: for a pair of constraints  $\textbf{0}-\theta_1 \leq -1 \land \textbf{0}-\theta_1 \leq 0$ we only keep $\textbf{0}-\theta_1 \leq -1$

\medskip
A firing domain $D$ can then be represented by a matrix $M_D$ with entries indexed by $\mathcal{X}_0$:
\begin{itemize}
\itemsep=0.85pt
    \item For each constraint $\theta_i - \theta_j \preceq n$ of $D$, $M_D[i,j]=(n, \preceq)$
		\item For each constraint $a_i \leq \theta_i \leq b_i$, that is equivalent to $\textbf{0} - \theta_i \leq - a_i $  and $ \theta_i - \textbf{0} \leq b_i $, we set $M_D[0,i]=(-a_i, \preceq)$ and $M_D[i,0]=(b_i, \preceq)$
    \item For each clock difference $\theta_i - \theta_j$ that is unbounded in $D$, let $D_{ij} = \infty$
    \item Implicit constraints $\textbf{0}- \theta_i \leq 0$ and $\theta_i - \theta_i \leq 0$ are added for all clocks.
\end{itemize}

\paragraph*{Canonical DBM}
There can be infinite number of DBMs to represent a single set of solutions for a domain $D$, but each domain has a unique canonical representation, that can be computed as a closure of $M_D$ or equivalently as a closure on a constraint graph. From matrix $M_D$, we create a directed graph $\mathcal{G}(D)$: Nodes of the graph are variables $\textbf{0}$ and variables $\theta_1, \theta_n$. For each entry of the form $M_D[ij] = (n,\preceq)$ we have an edge from node $\theta_j$ to $\theta_i$ labeled with $n$. The tightest constraint between two variables $\theta_i$ and $\theta_j$, is the value of the shortest path between the respective nodes in above graph. This can be done using the Floyd-Warshall algorithm shown above.

\section{Proofs for Section~\ref{sec:reachability}}
\label{Appendix_proof_reach}

The following lemma shows that if a transition is firable, then one can always find an appropriate timing allowing for the computation of a successor.

\begin{lemma}
\label{lemma_postset_not_empty}
Let $(M.N,D)$ be a state class, let $t_f$ be a transition firable from that class.
Then there exists a bound $r \in B^{t_f}_{M.N,D}$ such that $\mathsf{next}_r(D, t_f)$ is satisfiable.
\end{lemma}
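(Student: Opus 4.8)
The plan is to read a concrete firing delay out of the firability hypothesis, to locate it inside the interval partition of the time axis induced by $B_{M.N,D}$, and to show that the index $r$ of the interval containing it produces a nonempty domain $\mathsf{next}_r(D, t_f)$.

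First I would unfold firability. Since $t_f$ is firable from $(M.N,D)$, the system $D_{|full} \cup \{\theta_f \leq \theta_j \mid t_j \in \mathsf{FullyEnabled}(M.N)\}$ is satisfiable; fix a solution and let $\delta$ be the value it assigns to $\theta_f$. The lower bound $a_f \leq \theta_f$ survives in $D_{|full}$, so $\delta \geq a_f$, and since the solution also forces $\delta \leq \theta_j \leq b_j$ for every fully enabled $t_j$, we get $\delta \leq b_{full} := \min\{b_j \mid t_j \in \mathsf{FullyEnabled}(M.N)\}$. Thus $\delta$ lies in the nonempty feasible window $[a_f, b_{full}]$. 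Let $r$ be the index with $\delta \in [\bound{r}, \bound{r+1}]$. Because this interval contains $\delta$ it meets $[a_f, b_{full}]$, so $\bound{r}$ is not discarded by the pruning defining $B^{t_f}_{M.N,D}$; hence $\bound{r} \in B^{t_f}_{M.N,D} \cup \{0\}$, which is exactly the index range used in the definition of $\mathsf{Post}(C, t_f)$. This is the $r$ I would use.

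It then remains to show $\mathsf{next}_r(D, t_f)$ is satisfiable, and I would do this by walking through the five construction steps and noting that all of them except the constraint addition of step~2 visibly preserve nonemptiness: the shift of time progress (step~1) and the substitution $\theta_j := \theta_f + \theta_j'$ (step~3) are bijective affine changes of variables; the variable eliminations of steps~1 and~4 are projections, which send a nonempty solution set to a nonempty one; and step~5 only adjoins fresh variables carrying the satisfiable constraints $\alpha(t_i) \leq \theta_i \leq \beta(t_i)$ (using $\alpha(t_i) \leq \beta(t_i)$) or $\theta_k = 0$. The whole claim therefore reduces to the nonemptiness of the domain $D^u$ obtained after step~2 for this particular $r$.

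The main obstacle is exactly this reduction, and the delicate point is that the firability witness comes from $D_{|full}$, where the upper bounds and the diagonal constraints of waiting transitions have been erased, whereas $D^u$ is built from the full progressed domain $D^r$. The observation I would exploit is that progressing time to $\bound{r}$ eliminates precisely the waiting transitions whose upper bound is at most $\bound{r}$ — the ones that could force $\theta_f$ below $\bound{r}$ — while every surviving waiting transition has upper bound strictly above $\bound{r}$ and therefore neither times out, becomes urgent, nor freezes anywhere in $(\bound{r}, \bound{r+1})$. Consequently, in $D^u$ the only constraints binding $\theta_f$ are $a_f \leq \theta_f$, the interval bound $\bound{r} \leq \theta_f \leq \bound{r+1}$, and the inequalities $\theta_f \leq \theta_j$ for fully enabled $t_j$ — exactly the constraints the witness value $\delta$ already satisfies. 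Shifting the witness by $\bound{r}$ is legal, since no fully enabled transition has upper bound below $b_{full} \geq \bound{r}$, so elapsing $\bound{r}$ violates no urgency; extending it to the surviving waiting variables compatibly with the (nonempty) $D^r$ then yields a point of $\llbracket D^u \rrbracket$. Hence $D^u$, and with it $\mathsf{next}_r(D, t_f)$, is satisfiable, which is the statement of the lemma.
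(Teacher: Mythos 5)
Your proposal has the same skeleton as the paper's proof: exhibit a concrete value for $\theta_f$ (the paper uses the canonical lower bound $a^*_f$ and the interval of $B_{M.N,D}$ containing it, you use an arbitrary witness $\delta$ of the firability system -- both choices are workable), then reduce everything to satisfiability of the step-2 domain $D^u$, and observe that substitution, elimination and the addition of fresh, individually satisfiable constraints preserve nonemptiness, which is exactly the paper's ``elimination preserves satisfiability'' argument. The reduction and the bookkeeping on pruning ($b_r \leq \min\{b_j \mid t_j \in \mathsf{FullyEnabled}(M.N)\}$ and $b_{r+1}\geq a_f$, so the interval survives in $B^{t_f}_{M.N,D}$) are fine.

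However, your justification of the crux -- satisfiability of $D^u$ -- rests on a false claim. It is not true that in $D^u$ ``the only constraints binding $\theta_f$'' are $a_f \leq \theta_f$, the interval bound, and $\theta_f \leq \theta_j$. First, diagonal constraints $\theta_f - \theta_k \leq c_{fk}$ and $\theta_k - \theta_f \leq c_{kf}$ between $\theta_f$ and \emph{surviving} waiting variables $\theta_k$ are unaffected by the uniform shift of time progress and remain in $D^r$, hence in $D^u$. Second, eliminating a timed-out variable $\theta_k$ does not erase its influence: Fourier--Motzkin leaves residual constraints, e.g.\ combining $\theta_f - \theta_k \leq c_{fk}$ with the shifted bound $\theta_k \leq b_k - b_r \leq 0$ yields $\theta_f \leq c_{fk} + b_k - b_r$ in $D^r$. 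Consequently, your final step -- ``extending it to the surviving waiting variables compatibly with the (nonempty) $D^r$'' -- is a non sequitur: nonemptiness of $D^r$ does not imply that your \emph{particular} values for $\theta_f$ and the fully enabled $\theta_j$'s extend to a point of $\llbracket D^r \rrbracket$, and your witness, taken from the relaxed system $D_{|full}$, may well violate the erased waiting constraints. What actually saves the statement is the canonicity of $D$ (maintained by the normalization in step 5 of the successor construction): in a closed DBM the bounds of fully enabled variables and their mutual diagonals are the \emph{exact} projection of $\llbracket D \rrbracket$ onto the fully enabled variables (closure subsumes the paths through waiting variables, e.g.\ $b_f \leq c_{fk} + b_k$ and $a_f \geq a_k - c_{kf}$, which is precisely why the residual constraints above are harmless). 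Hence the restriction of your witness to fully enabled variables extends to a full solution $u \in \llbracket D \rrbracket$ with $u(\theta_f)=\delta$, and it is the shifted projection of this $u$, not the witness itself, that lies in $\llbracket D^u \rrbracket$. With that repair -- invoke canonicity and route the extension through $D$ rather than through $D^r$ -- your argument closes and coincides in substance with the paper's proof.
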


\begin{proof}
Let $t_f$ be such a firable transition from the state class $(M\cdot N, D)$. This means, $\theta_f$ satisfies $D \land \bigwedge_{t_j\in \mathsf{FullyEnabled}(M\cdot N)} \theta_f \leq \theta_j$.
Now, since $t_f$ is firable, this implies $t_f \in E$ and $\forall t_j\in \mathsf{FullyEnabled}(M\cdot N)$ $a^*_f \leq b_j$ (i.e. $t_f$ can be fired before any other transition becomes urgent) $\ldots (1)$.

\medskip
Let $[\ bnd_k, bnd_{k+1}]\ $ be the smallest interval between two consecutive bounds of $B_{M.N,D}$ containing $a^*_f$. Such an interval exists as $ bnd_0 \leq a^*_f \leq b^*_f < \infty $.
Now, we know that elimination preserves satisfiability. So the successor domain $\mathsf{next}_k(D, t_f)$ is satisfiable, because it is obtained by elimination from $D \land \bigwedge_{t_j\in \mathsf{FullyEnabled}(M\cdot N)} \theta_f \leq \theta_j \wedge bnd_k \leq \theta_f \leq b_{k+1}$ which contains at least one solution with $\theta_f = a^*_f$, and by adding new individually satisfiable constraints associated with newly enabled transitions. These constraints do not change satisfiability of domains obtained after elimination, because each one constrains only a single variable that was not used in the domain $D'^{E,b_k}$.  Hence, choosing $r=k$ witnesses truth of the lemma.
\end{proof}

\noindent{\bf Proposition~\ref{prop-stateclass-wnet-Compl}}
%The language of state class graph defined is same as the language of underlying Waiting net.
For every run $\rho=(M_0.N_0,v_0) \dots (M_k.N_k,v_k)$ of $\mathcal{W}$ there exists a path $\pi$ of $SCG(\mathcal{W})$
such that $\rho$ and $\pi$ coincide.

\begin{proof}
Markings in a run $\rho$ and a path $\pi$ that coincide with $\rho$ must be the same. To guarantee existence of $\pi$, it hence remains to show
that there exists a consistent sequence of domains $D_1 \dots D_k$ such that path\\ $\pi=(M_0.N_0,D_0).(M_1.N_1,D_1)\dots(M_k.N_k,D_k)$ is a path of $SCG(\mathcal{W})$ and coincides with $\rho$. We show it by induction on the size of runs.

\begin{itemize}
    \item \textbf{Base Case : }\\
    Let $\rho=(M_0.N_0,v_0) \xrightarrow[]{e_1= (d_1,t_1)} (M_1.N_1,v_1)$. The corresponding path in $SCG$ is of the form
		$\pi=(M_0.N_0,D_0) \xrightarrow[]{t_1} (M_1.N_1,D_1)$, where $D_0$ is already known.
    By definition we have $v_0(t) = 0$ for every enabled transition in $M_0.N_0$ and $v_0(t) = \bot$ for other transitions.
		After letting $d_1$ time units elapse, the net reaches configuration $(M_0.N_0,v'_0)$ where $v'_0(t)= v_0(t) + d_1$ if $t \in \mathsf{FullyEnabled(M_0.N_0)}$, $v'_0(t)=\min(v(t)+d_1, \beta(t))$ if  $t \in \mathsf{enabled(M_0)}\setminus\mathsf{FullyEnabled(M_0)}$ and $v'_0(t)=\bot$ otherwise.
		
  The valuation $v_1$ is computed according to timed moves rules, which means that
	$$v_1(t) =\left \{ \begin{array}{l}
	0 \mbox{~if~} t \in \uparrow \mathsf{enabled}(M_1,t_1)\\,
	v'_0(t) \mbox{~if~}  t \in Enabled(M_0) \cap Enabled(M_1)\\
	\bot \mbox{~otherwise}
	\end{array}\right.$$
\hspace*{6mm}	Now, since $\rho=(M_0.N_0,v_0) \xrightarrow[]{(d_1,t_1)} (M_1.N_1,v_1)$ is a valid move in $\mathcal{W}$, $t_1$ as a valid discrete move after the timed move $d_1$, which gives us the following condition :
  $$
  \begin{cases}
    M_0.N_0 \geq {^\bullet}\!t_1 \\
    M_1.N_1 = M_0.N_0 - {^\bullet}\!t_1 + t_1^\bullet\\
    \alpha(t_1) \leq v_0'(t_1) \leq \beta(t_1)\\
    \forall t, v_1(t) = 0 \textbf{ if }  \uparrow \mathsf{enabled}(t, M_1, t_1) \textbf{ else } v_0'(t)
  \end{cases}
	$$

\hspace*{6mm}	Further, as $t_1$ can be the first transition to fire, we have $v_0'(t_i) \leq \beta_i$ for every $t_i$ in $\mathsf{FullyEnabled(M_0)}$. As all clocks attached to fully enabled transitions have the same value $d_1$, it means that for every $t_i$ we have $d_1 \leq \beta_i$.
Let us now show that $t_1$ is firable from class $C_0$. Transition $t_1$ is firable from $M_0.N_0,D_0$ iff there exists a value $\theta_1$ such that $\alpha(t_1) \leq v_0'(t_1) \leq \beta(t_1)$, and iff adding to $D_0$ the constraint that for every transition fully enabled $\theta_1 \leq \theta_j$ yields a satisfiable constraints. We can show that setting $\theta_1=d_1$ allows to find a witness for satisfiability. Clearly, from the semantics, we have $\alpha(t_1) \leq \theta_1=d_1 \leq \beta(t_1)$, and as no $t_i$ fully enabled is more urgent than $t_1$, we can find $\theta_i$, $d_1 \leq \theta_i \leq \beta_i$, and hence $\theta_i$ satisfies $\theta_1\leq \theta_i$. As firability from $D_0$ holds, and $M_0 . N_0 \geq \preset{t_1}$, there exists $D_1$ that is a successor of $D_0$ such that $(M_0.N_0,D_0).(M_1.N_1,D_1)$ is a path of $SCG(\wnet)$.

    \item \textbf{Induction step: }\\
    Suppose that for every run $\rho$ of size at most $n$ of the waiting net, there exists a path $\pi$ of size $n$ in $SCG(\wnet)$ that coincides with $\rho$. We have to prove that it implies that a similar property holds for every run of size $n+1$.
Consider a run $\rho$ from $(M_0.N_0,v_0)$ to $(M_n.N_n,v_n)$ of size $n$ and the coinciding run $\pi$, of size $n$ too, from $(M_0.N_0,D_0)$ to $(M_n.N_n,D_n)$. Assume that some transition $t_f$ is firable after a delay $d$, i.e.,
 $(M_n.N_n,v_n) \overset{(d,t_f)}\longrightarrow(M_{n+1}.N_{n+1},v_{n+1})$. We just need to show that $t_f$ is firable from $C_n=(M_n.N_n,D_n)$.

As $t_f$ is firable from $(M_n.N_n,v_n)$ after $d$ time units, we necessarily have $\alpha(t_f) \leq v_n(t_f) + d \leq \beta(t_f)$, and for every $t_j$ fully enabled  $v_n(t_f) + d \leq \beta(t_j)$ (otherwise $t_j$ would become urgent).
For every transition $t_k$ enabled in $(M_n.N_n,v_n)$, let us denote by $r_k$ the index in the run where $t_k$ is last enabled in $\rho$. We hence have that $v_n(t_k) = min(\beta_i,\sum_{i=r_k+1..n} d_i)$.
In the abstract run $\pi$ we have $b_i = \beta_i -\underset{k\in r_i..n}\Sigma a_{k,f})$ where $a_{k,f}$ is the lower bound of domain $D_k$ for the time to fire of the transition used at step $k$ and symmetrically $a_i = \max (0, \alpha_i -\underset{k\in r_i..n}\Sigma b_{k,f})$.

At every step of path $\pi$, at step $k$ of the path, domain $D_k$ sets constraints on the possible time to fore of enabled transitions.
In every $D_k$, for every enabled transition $t_i$, we have $a_{k,i} \leq \theta_i \leq b_{k,i}$ for some values $a_{k,i} \leq b_{k,i}\leq \beta_i$, and additional constraints on the difference between waiting times. In particular, the set of transitions enabled in $M_n.N_n$ is not empty, and domain $D_n$ imposes that $a_{n,f} \leq \theta_f \leq b_{n,f}$.
Transition $t_f$ is firable from $D_n$ iff adding one constraint of the form $\theta_f \leq \theta_i$ per fully enabled transitions still yields a satisfiable domain. Now, we can show that the time spent in every configuration of $\wnet$ satisfies the constraints on  value for the time to fire allowed for the fired transition at every step.
\end{itemize}

\vspace*{-7mm}
\end{proof}

\begin{lemma}
\label{lem_durations_match_abstract_runs}
Let $\rho$ be a run of size $n$ an $\pi$ be an abstract run (of size $n$ too) that coincide. Then, for every transition $(M_i.N_i,v_i) \overset{d_i,t_f^i}\longrightarrow (M_{i+1}.N_{i+1},v_{i+1})$ we have $d_i \in [a_{i,f},b_{i,f}]$.
\end{lemma}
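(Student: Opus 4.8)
The plan is to prove the lemma by induction on the length $n$ of the run $\rho$, strengthening it into a correspondence invariant relating the concrete clock valuation $v_n$ reached by $\rho$ to the canonical bounds of the coinciding domain $D_n$. Writing $a_{n,t}$ and $b_{n,t}$ for the canonical lower and upper bounds of $\theta_t$ in $D_n$, the invariant $(\dagger)$ I would maintain is that for every transition $t$ enabled in $M_n.N_n$ that has not timed out, $a_{n,t}\le \max(0,\alpha(t)-v_n(t))$ and $b_{n,t}\ge \beta(t)-v_n(t)$, together with the degenerate record $a_{n,t}=b_{n,t}=0$ for transitions that reached their upper bound while still waiting. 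Granting $(\dagger)$, the lemma is immediate: a legal move $(M_i.N_i,v_i)\overset{d_i,t_f}\longrightarrow (M_{i+1}.N_{i+1},v_{i+1})$ forces $\max(0,\alpha(t_f)-v_i(t_f))\le d_i\le \beta(t_f)-v_i(t_f)$ in the case where $t_f$ is a freshly fully enabled transition whose clock is not frozen, and $d_i=0$ with $a_{i,f}=b_{i,f}=0$ in the case where $t_f$ became fully enabled after timing out (hence urgent). In both cases $(\dagger)$ at index $i$ sandwiches the concrete firability interval inside $[a_{i,f},b_{i,f}]$, so $d_i\in[a_{i,f},b_{i,f}]$.

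For the base case $n=0$ the invariant holds with equality, since $v_0(t)=0$ for every enabled transition and $D_0=\{\alpha(t)\le \theta_t\le \beta(t)\}$, giving $a_{0,t}=\alpha(t)=\max(0,\alpha(t)-0)$ and $b_{0,t}=\beta(t)-0$. The delay $d_0$ before the first firing then satisfies $\alpha(t_f)\le d_0\le \beta(t_f)$ directly by the firing rule.

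For the inductive step I would track the canonical bounds through the five operations defining $\mathsf{next}_r(D_n,t_f)$, aiming to establish the closed forms $b_{n,t}=\beta(t)-\sum_k a_{k,f}$ and $a_{n,t}=\max\bigl(0,\alpha(t)-\sum_k b_{k,f}\bigr)$, where the sums range over the steps $k$ taken after the last (re)enabling of $t$ and $a_{k,f},b_{k,f}$ are the canonical bounds on the transition fired at step $k$. The time-progress step clamps the clocks of waiting transitions reaching $\beta$ and eliminates the corresponding variables, mirroring the clamping in $v_n(t)=\min\bigl(\beta(t),\sum_k d_k\bigr)$; the substitution $\theta_t:=\theta_f+\theta_t'$ shifts each surviving bound by $\theta_f$, which in $D_n$ ranges over $[a_{n,f},b_{n,f}]\ni d_n$, so the extremal admissible shifts are exactly $a_{n,f}$ and $b_{n,f}$; variable elimination and renormalization preserve the projection onto each remaining variable; and newly enabled transitions receive fresh bounds $[\alpha,\beta]$ on clocks reset to $0$, reducing to the base case. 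Feeding in the induction hypothesis of the lemma itself for the earlier steps, namely $a_{k,f}\le d_k\le b_{k,f}$, and summing yields $\sum_k a_{k,f}\le \sum_k d_k\le \sum_k b_{k,f}$, whence both inequalities of $(\dagger)$ at index $n$ follow, closing the induction.

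The main obstacle I anticipate is the upper-bound direction of $(\dagger)$ in the presence of the diagonal constraints $\theta_t-\theta_{t'}\le c_{tt'}$ that variable elimination introduces for pairs of transitions that remain enabled. These constraints can tighten the canonical bound $b_{n,t}$ below the naive value $\beta(t)-v_n(t)$ arising from $t$'s own interval, so I must verify that the concrete point dictated by $v_n$ remains feasible, i.e. that each accumulated $c_{tt'}$ faithfully records the realized clock difference $v_n(t)-v_n(t')$ along $\rho$ and therefore cannot exclude the delay actually taken. Pinning the bounds to the summation closed forms above, and checking that the Fourier-Motzkin step never strengthens a surviving individual bound beyond what those formulas predict, is the technical crux; once this is done, the remaining comparison between the concrete delays and the domain bounds is routine arithmetic.
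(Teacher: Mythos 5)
Your proposal is correct and takes essentially the same route as the paper's proof: your invariant $(\dagger)$ instantiated by the closed forms $a_{n,t}=\max\bigl(0,\alpha(t)-\sum_k b_{k,f}\bigr)$ and $b_{n,t}=\beta(t)-\sum_k a_{k,f}$ is exactly the paper's characterization of the canonical bounds, and your step of feeding the lemma's own statement $a_{k,f}\leq d_k\leq b_{k,f}$ back in at earlier steps to sandwich the concrete firing condition $\alpha(t_f)\leq v_i(t_f)+d_i\leq \beta(t_f)$ is the paper's induction almost verbatim (including the clamping of frozen clocks and the urgent case $a_{i,f}=b_{i,f}=0$). The Fourier--Motzkin/diagonal-constraint verification you flag as the technical crux is handled at the same level of detail in the paper itself, which simply asserts the recurrences $a_{i+1,j}=\max(0,a_{i,j}-b_f^i)$ and $b_{i+1,j}=\max(0,b_{i,j}-a_f^i)$ without tracking them through normalization, so your proposal matches the published argument both in structure and in rigor.
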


\begin{proof}
A transition $t_f^i$ can fire at date $\underset{j\in 1..i}\sum d_j$ iff $v_i(t_f) \in [\alpha_i,\beta_i]$. At every step $i$ of a run, we have $a_{i,j} = \max(0,\alpha_j - \underset{q\in Rj+1..i-1} \sum b_f^q)$ and $b_{i,j} = \max(0,\beta_j - \underset{q\in Rj+1..i-1} \sum a_f^q)$ where $b_f^q$ (resp. $a_f^q$) is the upper bound (resp. the lower bound) of the interval constraining the value of firing time for the transition fired at step $q$. At step $i$ if transition $t_j$ is newly enabled, then $a_{i+1,j} = \alpha_j$ and $b_{i+1,j} = \beta_j$. Otherwise, $a_{i+1,j} = max (0, a_{i,j}-b_f^i)$ and $b_{i+1,j} = max (0, b_{i,j}-a_f^i)$. At step $i$, if $t_f^i$ was newly enabled at step $i-1$ then we necessarily have $a_{i+1,j} = \alpha_j$, $b_{i+1,j} = \beta_j$ $v_i(t_f)=0$, and hence $d_{i+1} \in [a_{i+1,j},b_{i+1,j}]$. Now, let us assume that the property is met up to step $i$. if transition $t_f$ fires at step $i+1$, we necessarily have
$\alpha_f \leq v_i(t_f) +d_{i+1} \leq \beta_f$. This can be rewritten as $\alpha_f \leq \underset{q\in R_f+1..i}\sum d_q +d_{i+1} \leq \beta_f$. Considering step $i$, we have $a_{i,f^i} \leq d_i \leq b_{i,f^i}$,and hence $\alpha_f - b_{i,f^i} \leq \underset{q\in R_f+1..i-1}\sum d_q +d_{i+1} \leq \beta_f - a_{i,f^i}$. we can continue until we get $\alpha_f - \sum{q\in R_f+1..i}  b_{q,f^q} \leq d_{i+1} \leq \beta_f - \sum{q\in R_f+1..i}  a_{q,f^q}$, that is $d_{i+1} \in [a_{i+1,f},b_{i+1,f}]$.
\end{proof}

As one can wait $d$ time units in configuration $(M_n.N_n,v_n)$, it means that for every fully enabled transition $t_j$, $v_n(t_j)+d \leq \beta_j$. It now remains to show that setting $\theta_f=d$ still allows for values for remaining variables in $D_n$.
Setting $\theta_f =d$ and $\theta_f \leq \theta_i$ for every fully enabled transition amount to adding constraint $ d \leq \theta_i$ to $D_n$.
Further, we have $a_{n,f} \leq d \leq b_{n,f}$.
We can design a constraint graph for $D_n$, where nodes are of the form $\{\theta_i \mid t_i \in FullyEnabled(M_n.N_n)\} \cup\{ x_0\}$ where $x_0$ represents value $0$, and an edge from $\theta_i$ to $\theta_j$ has weight $w>0$ iff $\theta_i-\theta_j \leq w$. Conversely, a weight $w \leq 0$ represents the fact that  $w \leq \theta_i-\theta_j$. Similarly, and edge of positive weight $w$ from $\theta_i$ to $x_0$ represents constraint $\theta_i \leq w$ and an edge of negative weight $-w$ from $x_0$ to $\theta_i$ represents the fact that $w \leq \theta_i$. It is well known that a system of inequalities such as the constraints defining our firing domains are satisfiable iff there exists no negative cycle in its constraint graph. Let us assume that $D_n$ is satisfiable, but $D'_n = D_n \uplus \theta_f=0 \wedge \underset{t_i FullyEnabled}\bigwedge d \leq \theta_i$ is not. It means that $CG(D_n)$ has no cycle of negative weight, but $D'_n$ has one. Now, the major difference between $D'_n$ is that there exists an edge $\theta_f \overset{d}\longrightarrow x_0$, another one $x_0 \overset{-d} \longrightarrow \theta_f$, and an edge $x_0 \overset{-d} \longrightarrow \theta_i$ for every $t_i$ that is fully enabled.
Hence, new edges are only edges from/to $x_0$. If a negative cycle exists in $CG(D'_n)$, as $D_n$ is in normal form, this cycle is of size two or three. If it is of size two, it involves a pair of edges $\theta_j \overset{b_{n,j}}\longrightarrow x_0$ and $x_0 \overset{-d}\longrightarrow \theta_j$. However, following lemma~\ref{lem_durations_match_abstract_runs}, $d \leq b_{n,i}$ for every fully enabled transition $t_i$, so the weight of the cycle cannot be negative.
%% We may need a lemma to prove this
Let us now assume that we have a negative cycle of size three, i.e. a cycle involving $\theta_i, \theta_f$ and $x_0$, with edges $\theta_i \overset{c}\longrightarrow  \theta_f \overset{d}\longrightarrow x_0 \overset{-d}\longrightarrow \theta_i$. This cycle has a negative weight iff $c<0$. However, we know that $\theta_i \geq \theta_f$, this is hence a contradiction. Considering a cycle with a value $\theta_k$ instead of $\theta_f$ leads to a similar contradiction, and we need not consider cycles of size more than 3 because $D_n$ is in normal form, and hence the constraint graph labels each edge with the weight of the minimal path from a variable to the next one.

\medskip
Last, using lemma~\ref{lemma_postset_not_empty}, as $t_f$ is firable from $(M_n.N_n,D_n)$ there exists $D_{n+1}\in Post(C_n,t_f)$, and hence $\pi.(M_{n+1}.N_{n+1},D_{n+1})$ is a path of the state class graph of $\wnet$ that coincides with $\rho.(M_n.N_n,
\linebreak v_n) \overset{(d,t_f)}\longrightarrow(M_{n+1}.N_{n+1},v_{n+1})$\hfill$\Box$

\medskip\smallskip
\noindent{\bf Proposition~\ref{prop_SCG_sound}}
Let $\pi$ be a path of $SCG(\mathcal{W})$. Then there exists a run $\rho$ of $\wnet$ such that $\rho$ and $\pi$ coincide.

%\eject
\begin{proof}
Since $\rho$ and $\pi$ must coincide, if $\pi=(M_0.N_0,D_0) \overset{t_1}\longrightarrow (M_1.N_1,D_1) \dots (M_k.N_k,D_k)$, then $\rho=(M_0.N_0,v_0) \overset{(d_1,t_1)}\longrightarrow (M_1. N_1,v_1) \ldots (M_k. N_k,v_k)$. Since successive markings in both $\pi$ and $\rho$ are computed in the same way from presets and postsets of fired transitions (i.e. $M_i.N_i = M_{i-1}.N_{i-1} - ^\bullet(t_i) + (t_i)^\bullet $), we just have to show that for every abstract run $\pi$ of $SCG(\wnet)$, one can find a sequence of valuations $v_0,v_1,\dots v_k$ such that
$\rho=(M_0.N_0,v_0) \overset{(d_1,t_1)}\longrightarrow (M_1.N_1,v_1) \ldots (M_k . N_k,v_k)$ is a run of $\wnet$ and
such that firing $t_i$ after waiting $d_i$ time units is compatible with constraint $D_i$. We proceed by induction on the length of runs.

\medskip
\noindent\textbf{Base Case : }
    Let $\pi=(M_0.N_0,D_0) \xrightarrow[]{t_1} (M_1.N_1,D_1)$, where $D_0$ represents the firing domain of transitions from $M_0 . N_0$. We have $D_0 = \{ \alpha_i \leq \theta_i \leq \beta_i\mid t_i \in \mathsf{enabled}(M_0) \}$.
    Now, for $t_1$ to be a valid discrete move, there must be a timed move $d_1$ s.t. $(M_0 . N_0, v_0) \xrightarrow[]{(d_1,t_1)} (M_1 . N_1, v_1)$, which follows the condition :
  $$
  \begin{cases}
    M_0 . N_0 \geq {^\bullet}\!t_1 \\
    M_1 . N_1 = M_0 . N_0 - {^\bullet}\!t_1 + t_1^\bullet\\
    \alpha(t_1) \leq v_0(t_1) \oplus d_1 \leq \beta(t_1)\\
    \forall t, v_1(t) = 0 \textbf{ if }  \uparrow \mathsf{enabled}(t, M_1, t_1) \textbf{ else } v_0(t) \oplus d_1
  \end{cases}
  $$
  The conditions on markings are met, since $t_1$ is a transition from $M_0 . N_0$ to $M_1 . N_1$ in the $SCG$.\\
  Existence of a duration  $d_1$ is also guaranteed, since, adding $\theta_1 \leq \theta_i$ for every fully enabled transition $t_i$ to domain $D_0$ still allows firing $t_1$, i.e. finding a firing delay $\theta_1$. Hence choosing as value $d_1$ any witness for the existence of a value $\theta_1$ guarantees that no urgency is violated (valuation $v_0(t_i) +d_1$ is still smaller than $\beta_i$ for every fully enabled transition $t_i$). Let $\rho=(M_0.N_0,D_0) \xrightarrow[]{(d_1,t_1)} (M_1.N_1,v_1)$, where $v_1$ is obtained from $v_0$ by elapsing $d_1$ time units and then resetting clocks of transitions newly enabled by $t_1$. Then $\rho$ is compatible with $\pi$.
	
 \medskip\smallskip
\noindent\textbf{General Case : } Assume that for every path $\pi_n$ of the state class graph $SCG(\wnet)$ of size up to $n\in \mathbb N$,  there exists a run $\rho_n$ of $\wnet$ such that $\rho_n$ and $\pi_n$ coincide. We can now show that, given a path
$\pi_n=(M_0 . N_0,D_0) \overset{t_1}\longrightarrow (M_1 . N_1,D_1) \dots (M_n . N_n,D_n)$ and a run
$\rho_n = (M_0 . N_0,v_0) \overset{(d_1,t_1)}\longrightarrow (M_1 . N_1,v_1) \ldots (M_n . N_n,v_n)$
 such that $\rho_n$ and $\pi_n$ coincide,
and an additional move  $(M_n.N_n,D_n)\overset{t_f}\longrightarrow (M_{n+1}.N_{n+1},D_{n+1})$ via transition $t_f$,
we can build a run $\rho_{n+1}$ such that $\pi_{n+1}=\pi_n.(M_n.N_n,D_n)\overset{t_f}\longrightarrow (M_{n+1}.N_{n+1},D_{n+1})$ and $\rho_{n+1}$ coincide. We have $D_{n+1} = \mathsf{next}_r(D_n, t_f) \in \mathsf{Post}(D_n)$ for some bound $r$. As we want $\rho_{n+1}$ and $\pi_{n+1}$ to coincide, we necessarily have $\rho_{n+1}=\rho_n.(M_n.N_n,v_n)\overset{(d,t_f)}\longrightarrow (M_{n+1}.N_{n+1},v_{n+1})$, i.e. $M_{n+1}.N_{n+1}, t_f$ are fixed for both runs, $v_{n+1}$ is unique once $d$ and $t_f$ are set, so we just need to show that there exists a value for $d$ such that $\rho_{n+1}$ and $\pi_{n+1}$ coincide.

\medskip
From $(M_n.N_n,v_n)$, $d$ time units can elapse iff $v_n(t_i)+d \leq \beta_i$, for every transition $t_i$ fully enabled in $M_n.N_n$, and
$t_f$ can fire from 	$(M_n.N_n,v_n)$ iff $v_n(t_f)+d \in [\alpha_f,\beta_f]$. Let us denote by $r_i$ the index in $\pi_n,\rho_n$ where
transition $t_i$ is last newly enabled, and by $a_{i,t_j}$ (resp $b_{i,t_j}$) the lower bound (resp. upper bound) on value $\theta_j$ in domain $D_i$. Last let $t_f^i$ be the transition fired at step i

\medskip
We have that $v_n(t_i) =\min (\beta_i,\underset{j\in  r_i+1 ..n }\sum d_j)$,
$a_{n,j} = \alpha_j - \underset{q\in  r_j+1 ..n }\sum b_{q,t_f^q}$, and
$b_{n,j} = \beta_j - \underset{q\in  r_j+1 ..n }\sum a_{q,t_f^q}$

\medskip
If $(M_n.N_n,D_n)\overset{t_f}\longrightarrow (M_{n+1}.N_{n+1},D_{n+1})$ is enabled, then we necessarily have that
$D_n \cup \theta_f \leq \bigvee \theta_i$ is satisfiable. Let us choose $d=\theta_f=a_{n,j}$ and show that this fulfills the  constraints to fire $t_f$ in $\rho_{n+1}$

\medskip
$v_n(t_i)+d \leq \beta_i$ iff
$v_n(t_i)+ \alpha_j - \sum\underset{q\in  r_j+1 ..n } b_{q,t_f^q} \leq \beta_i$, for every fully enabled transition $t_i$.
iff $\underset{q\in  r_j+1 ..n }\sum d_q + \alpha_j - \underset{q\in  r_j+1 ..n }\sum b_{q,t_f^q} \leq \beta_i$
As we are looking for a run that coincides with $\pi_{n+1}$, we can assume wlog that we always choose the smallest value for $d_q$, i.e. we choose $d_q = a_{q,t_f^q}$. Hence, the inequality rewrites as $\underset{q\in  r_j+1 ..n }\sum a_{q,t_f^q}  + \alpha_j - \underset{q\in  r_j+1 ..n }\sum b_{q,t_f^q} \leq \beta_i$, or equivalently $\alpha_j - \underset{q\in  r_j+1 ..n }\sum b_{q,t_f^q} \leq \beta_i - \underset{q\in  r_j+1 ..n }\sum a_{q,t_f^q}$. This amount to proving $a_{n,j} \leq b_{n,i}$. We can do a similar transformation to transform $v_n(t_f)+d \in [\alpha_f,\beta_f]$ into an inequality $v_n(t_f)+ \alpha_j - \underset{q\in  r_j+1 ..n }\sum b_{q,t_f^q} \leq \beta_f$, then transformed into $a_{n,f} \leq b_{n,f}$, and   $\alpha_f \leq v_n(t_f)+d$ into  $\alpha_f \leq v_n(t_f)+ \alpha_j - \underset{q\in  r_j+1 ..n }\sum b_{q,t_f^q}$, and then $\alpha_f \leq \underset{q\in  r_j+1 ..n }\sum a_{q,t_f^q} + \alpha_f - \underset{q\in  r_j+1 ..n }\sum b_{q,t_f^q}$, which can be rewritten as $a_{q,t_f^q} \leq b_{q,t_f^q}$. As $D_n \cup \theta_f \leq \bigvee \theta_i$ is satisfiable, the conjunction of these inequalities holds too.
 \end{proof}

\begin{lemma}(Boundedness)
\label{lem-boundedness}
For all $i,j,k$ the constants $a_i, b_i$ and $c_{jk}$, of a domain of any state class graph have the following bounds :
\begin{align*}
    0 \leq \,& a_i \leq \alpha(t_i)\\
    0 \leq \,& b_i \leq \beta(t_i)\\
    -\alpha(t_k) \leq \,& c_{jk} \leq \beta(t_j)
\end{align*}
\end{lemma}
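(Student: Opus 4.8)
The plan is to prove the three bound inequalities simultaneously by induction on the length of the path in $SCG(\mathcal{W})$ producing the state class $(M.N,D)$, adapting the coefficient-bound argument of~\cite{BerthomieuD91}. Throughout I maintain as invariant that $D$ is in canonical form (Definition~\ref{def-canonical-form}), so that its coefficients are $a_i=\mathit{Inf}(\theta_i)$, $b_i=\mathit{Sup}(\theta_i)$ and $c_{jk}=\mathit{Sup}(\theta_j-\theta_k)$, and that these satisfy the three displayed bounds. For the base case $D_0=\{\alpha(t_i)\le\theta_i\le\beta(t_i)\}$, so $a_i=\alpha(t_i)$ and $b_i=\beta(t_i)$ meet the first two bounds with equality; since the $\theta_i$ are independent in $D_0$ we get $c_{jk}=\beta(t_j)-\alpha(t_k)$, and $\alpha(t_k)\ge 0$, $\beta(t_j)\ge 0$ give $-\alpha(t_k)\le\beta(t_j)-\alpha(t_k)\le\beta(t_j)$, as required.

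For the inductive step I would follow the five operations building $\mathsf{next}_r(D,t_f)$ (Definition~\ref{def-successors}) and check each preserves the invariant. The decisive observations are: (i) the substitution $\theta_j:=\theta_f+\theta_j'$ of the substitution step leaves every difference unchanged, $\theta_j'-\theta_k'=\theta_j-\theta_k$, so diagonal coefficients between surviving transitions are transported verbatim and keep their bounds by induction; (ii) eliminating $\theta_f$ (and variables of transitions disabled by the firing) by Fourier--Motzkin (Appendix~\ref{appendix-Fourier}) creates new diagonal coefficients only of the form $c_{kj}'=b_k-a_j$, obtained by pairing a lower bound $a_j-\theta_j'\le\theta_f$ with an upper bound $\theta_f\le b_k-\theta_k'$; and (iii) the fresh constraints added for newly enabled transitions are exactly $\alpha(t_i)\le\theta_i\le\beta(t_i)$, or $\theta_k=0$ for a timed-out transition that becomes fully enabled, both trivially obeying the bounds.

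The individual bounds $0\le a_i\le\alpha(t_i)$ and $0\le b_i\le\beta(t_i)$ then follow because elapsing time only decreases a remaining time to fire: substitution and elimination move $a_i,b_i$ downward, the nonnegativity of the $\theta_i$ keeps them at least $0$, and a newly enabled transition resets them to the interval endpoints. For the diagonal bounds, the upper estimate is immediate, $c_{kj}'=b_k-a_j\le b_k\le\beta(t_k)$ since $a_j\ge0$. The lower estimate is where the coupling between the three inequalities matters: $c_{kj}'=b_k-a_j\ge -a_j\ge-\alpha(t_j)$, using \emph{precisely} the induction hypothesis $a_j\le\alpha(t_j)$. Thus the claimed bound $-\alpha(t_k)\le c_{jk}$ is not free-standing but is inherited from the lower control on the $a$'s; a symmetric remark handles the lower bound $a_j'\ge 0$ clamped against candidates such as $-c_{fj}$, which stay $\le\alpha(t_j)$ by the inductive diagonal bound $c_{fj}\ge-\alpha(t_j)$.

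I expect the main obstacle to be the bookkeeping of the time-progress step together with the final normalization. Progressing time to successive bounds $b_1<\dots<b_r$ shifts every variable and eliminates the timed-out transitions, and I must argue that after these shifts and eliminations the surviving coefficients already obey the bounds \emph{before} the firing condition is imposed; this requires checking that the shift amount never exceeds the relevant $\beta(t_i)$ and that eliminating a timed-out variable cannot enlarge a diagonal past $\beta(t_j)$. The canonical closure (Floyd--Warshall) must then be shown bound-preserving: a tightened coefficient $c_{jk}=\min(c_{jk},c_{j\ell}+c_{\ell k})$ can only \emph{decrease} $c_{jk}$, so upper bounds are safe, while lower bounds survive because closure cannot push a coefficient of a satisfiable domain below the infimum already controlled by the $a_i$. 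Assembling these monotonicity facts yields the invariant for $\mathsf{next}_r(D,t_f)$ and closes the induction.
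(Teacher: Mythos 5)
Your proposal is correct and takes essentially the same route as the paper's proof: an induction along the five-step successor construction, checking that substitution (which leaves differences $\theta_j-\theta_k$ unchanged), Fourier--Motzkin elimination (which creates constants of the form $b_k-a_j$, whose lower bound is controlled exactly by the inductive hypothesis $a_j\leq\alpha(t_j)$), insertion of fresh interval constraints for newly enabled transitions, and canonical closure each preserve the three bounds. The only packaging difference is that the paper factors the elimination and normalization arguments into separate lemmas (Lemma~\ref{lem-li-FME} through Lemma~\ref{lem-red-can-bound-li}) and explicitly flags that eliminating variables of disabled transitions can transiently produce diagonal constants that are \emph{sums} of two old diagonals (up to twice the maximal constant), repaired only by the closure step---a wrinkle your clause (ii) slightly understates but which your final paragraph on bound-preservation of the Floyd--Warshall closure correctly absorbs.
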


\begin{proof}
First of all, every variable $\theta_i$ represents minimal and maximal times to upper bounds of interval, so by definition it can only be a positive value. We hence have $0 \leq a_i$, $0 \leq b_i$.
Now to prove $a_i \leq \alpha(t_i)$ and $b_i \leq \beta(t_i)$ always hold, we will study the effect of every step to compute $\mathsf{next}_r(D,t_f)$.

\medskip
Let us recall how $\mathsf{next}_r(D,t_f)$ is built. We first add $\theta_f \leq \theta_i$ to $D$ for every fully enabled transition $t_i$, and the inequality $b_r \leq \theta_f \leq b_{r+1}$.
We then do a variable substitution as follows. We write:
          \[
          \theta_j :=
          \begin{cases}
          b_j + \theta'_j \textbf{ if } b_j\leq b_r \textbf{ and } \mathsf{enabled}(M)\setminus \mathsf{FullyEnabled}(M.N)                           \\
          \theta_f + \theta'_j \textbf{ if } b_j >  b_r \textbf{ and } \mathsf{FullyEnabled}(M)\\
					0 \textbf{ otherwise}
          \end{cases}
          \]
					
After variable substitution we have inequalities of the form
$a_i \leq \theta_i' + b_i \leq b_i$, $a_i \leq \theta_i' + \theta_f \leq b_i$,
$b_r \leq \theta_f \leq b_{r+1}$, $\theta_f \leq \theta_i'$,
$\theta_i'  - \theta_j'\leq c_{ij}$ if $t_i, t_j$ are both fully enabled,
$a_i \leq \theta_i' + \theta_f \leq b_i$, and $\theta_i=0$ for every enabled transition $t_i$ reaching its upper bound $b_i$

\medskip
We use Fourier-Motzkin elimination to remove variable $\theta_f$. This elimination makes new positive values of the form $a'_j= \max(0,a_j -b_i)$  or $b'_j=\max(0, b_j-a_i)$ appear (See also Lemma~\ref{lem-li-FME}). Yet, we still have $a'_j\leq \alpha_j$ and $b'_j \leq \beta_j$.

%\medskip
Then addition new constraints for newly enabled transitions do not change existing constraints, and for every newly enabled transition $t_i$, we have $\alpha_i \leq \theta_i \leq \beta_i$. The last step consist in computing a canonical form. Remember that canonical forms consist in computing a shortest path in a graph. Hence $D^*$ also preserves boundedness. (See also Lemma~\ref{lem-li-FME}) Now, in the canonical form, we can consider bounds for $\theta_j - \theta_k$, knowing that both values are positive.  $-a'_k \leq \theta_j - \theta_k \leq b'_j$, and hence $-\alpha(t_k) \leq c_{jk} \leq \beta(t_j)$.
\end{proof}

\begin{definition}(linearity)
\label{def-bounded-linear}
Let $\mathcal{W}$n be a waiting net, and let $K_\wnet = \max_{i,j} \lfloor \frac{\beta_i}{\alpha_j} \rfloor$
A domain $D$ is \textit{linear} (w.r.t. waiting net $\mathcal{W}$) if, for every constraint in $D$,
lower and upper bounds $a_i,b_i$ of constraints of the form  $a_i \leq \theta_i \leq b_i$ and upper bounds $c_{i,j}$ of difference constraints of the form $\theta_i -\theta_j \leq c_{i,j}$ are linear combination of $\alpha_i$'s and $\beta_i$'s with integral coefficients in $[-K_\wnet, K_\wnet]$.
\end{definition}

Obviously, the starting domain $D_0$ of a waiting net $\wnet$ is bounded and linear. We can now show that the successor domains reached
when firing a particular transition from any bounded and linear domain are also bounded and linear.

\begin{lemma}
\label{lem-li-FME}
Elimination of a variable $\theta_i$ from a firing domain of a waiting net preserves boundedness and linearity.
\end{lemma}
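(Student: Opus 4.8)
The plan is to run the Fourier--Motzkin step of Appendix~\ref{appendix-Fourier} on the very restricted class of inequalities that occur in a firing domain, track the shape of every constraint it produces, and then check the two invariants (boundedness and linearity) against these shapes. First I would isolate the inequalities that mention the variable $\theta_i$ to be eliminated. In a canonical firing domain these are exactly the single-variable bound $a_i \leq \theta_i \leq b_i$ and, for the other enabled transitions, the diagonal constraints $\theta_i - \theta_k \leq c_{ik}$ and $\theta_j - \theta_i \leq c_{ji}$. Rewriting them as explicit bounds on $\theta_i$, the lower bounds are $a_i$ and the terms $\theta_j - c_{ji}$, while the upper bounds are $b_i$ and the terms $\theta_k + c_{ik}$.

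Next I would enumerate the inequalities generated by pairing each lower bound with each upper bound, as prescribed by Fourier--Motzkin. Up to symmetry there are only three shapes: (i) the purely constant inequality $a_i \leq b_i$, which contributes nothing but a satisfiability condition; (ii) single-variable bounds $a_i - c_{ik} \leq \theta_k$ and $\theta_j \leq b_i + c_{ji}$; and (iii) diagonal constraints $\theta_j - \theta_k \leq c_{ik} + c_{ji}$. Every new constant is therefore the sum or difference of at most two constants already present in $D$. After elimination one intersects each new bound with the pre-existing bound on the same variable or pair (keeping the max of lower bounds, the min of upper bounds and diagonal weights) and recomputes the canonical form, which can only tighten the constants.

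For boundedness I would argue that each new constant is built from old constants that, by Lemma~\ref{lem-boundedness}, already lie in the prescribed ranges $0 \leq a \leq \alpha(\cdot)$, $0 \leq b \leq \beta(\cdot)$ and $-\alpha(\cdot) \leq c \leq \beta(\cdot)$; since the subsequent intersection with the existing constraints and the canonical-form closure only tighten a bound, the resulting canonical constants again satisfy these ranges. This is exactly the computation already carried out in the proof of Lemma~\ref{lem-boundedness}, restricted to the elimination step, so it can be reused here almost verbatim.

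For linearity the shape is immediate: each new constant is an integer linear combination of the $\alpha_j$'s and $\beta_j$'s, being a sum or difference of such combinations. The delicate point --- and the step I expect to be the main obstacle --- is keeping the integer coefficients within $[-K_\wnet, K_\wnet]$, since naive addition of two combinations could double the coefficient magnitudes. The plan is to control this by combining the algebraic form with boundedness: a canonical diagonal weight $c_{jk}$ equals $\sup(\theta_j - \theta_k)$, i.e. a shortest-path weight in the constraint graph, and because the domain is consistent (no negative cycle, by boundedness) the optimal path is simple and visits each of the at most $|T|$ variables once; together with the fact that every bound has absolute value at most $\beta_{\max}$ while each atomic summand is a multiple of some $\alpha_j \geq \alpha_{\min}$, this forces the number of atomic summands, hence the coefficients, to be bounded by $K_\wnet = \max_{i,j}\lfloor \beta_i/\alpha_j\rfloor$. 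Establishing this coefficient bound cleanly --- reconciling the simple-path count with the value bound $\beta_{\max}/\alpha_{\min}$ --- is the real content of the lemma; once it is in place, linearity is preserved and, combined with boundedness, yields the finite set of admissible domains used in Proposition~\ref{prop-fin-domain}.
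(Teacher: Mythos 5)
Your Fourier--Motzkin bookkeeping and the boundedness argument follow the paper's proof quite closely: the paper performs the same $E^-/E^+$ pairing (your three shapes (i)--(iii) are exactly its $exp^-\leq exp^+$ combinations), observes that every new constant is a sum or difference of old ones, and likewise concedes that diagonal constants may temporarily leave the range (up to doubling) and are brought back only by the normalization step --- a point the paper makes precise in Lemma~\ref{lem-red-can-bound-li}. One imprecision in your version: ``the canonical-form closure only tightens'' is not, by itself, the reason boundedness is restored. Tightening an out-of-range value need not reach the range, and for a pair $(\theta_j,\theta_k)$ whose previous diagonal was $\infty$ there is no finite pre-existing bound to intersect with. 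The operative fact is that the canonical weight is at most the path through the zero node, $c'_{jk} \leq b'_j - a'_k \leq \beta_j$, which is exactly what Lemma~\ref{lem-red-can-bound-li} uses; you should say this rather than appeal to monotonicity of closure.

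The genuine gap is in your linearity paragraph, where you declare the coefficient bound $[-K_\wnet,K_\wnet]$ of Definition~\ref{def-bounded-linear} to be ``the real content of the lemma.'' It is not what the paper's proof establishes here, and your sketch for it would fail. The paper's Lemma~\ref{lem-li-FME} proves only that new constants are sums/differences of old ones, hence integer linear combinations of the $\alpha$'s and $\beta$'s; the finiteness of the set of achievable combinations is deferred to Proposition~\ref{prop-fin-domain}, which cites~\cite{BerthomieuD91} for it. Your proposed derivation of the coefficient bound breaks in three places: (i) the count via $\beta_{\max}/\alpha_{\min}$ collapses when some $\alpha_j=0$, which the model explicitly permits and which occurs in essentially every example in the paper (intervals $[0,\infty]$, $[0,3]$, $[0,8]$ in Figures~\ref{Example_wn} and~\ref{FigureInfluenceBounds}); (ii) a bound on the \emph{value} of a constant does not bound its \emph{coefficients}, since summands of mixed sign can cancel --- a constant of absolute value below $\beta_{\max}$ may still be a combination with arbitrarily large coefficients; and (iii) the simple-path observation only bounds one closure step by a factor of at most $|T|$, so it yields coefficient growth per successor computation rather than an invariant preserved along arbitrarily long sequences of eliminations, which is what finiteness of the domain set requires. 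So either prove the lemma in the paper's weaker sense (combinations, with finiteness handled separately as in Proposition~\ref{prop-fin-domain}), or be prepared to reconstruct the genuinely harder counting argument of~\cite{BerthomieuD91}, which your sketch does not yet supply.
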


\begin{proof}
Fourier-Motzkin elimination proceeds by reorganization of a domain $D$, followed by an elimination, and then pairwise combination of expressions (see complete definition of Fourier Motzkin elimination in appendix~\ref{appendix-Fourier}). We can prove that each of these steps produces inequalities that are both linear and bounded.

\medskip
Let $D$ be a firing domain, and $D'$ the domain obtained after choosing the fired transition $t_f$ and the corresponding variable substitution. An expression in $D'$ of the form $\theta_f - \theta_i \leq c_{f,i}$ can be rewritten as $\theta_f  \leq c_{f,i} + \theta_i$. An expression of the form $\theta_i - \theta_f \leq c_{i,f}$ can be rewritten as $\theta_i - c_{i,f} \leq \theta_f$.
We can rewrite all inequalities containing $\theta_f$ in such a way that they are always of the form $exp \leq \theta_f$ or $\theta_f\leq exp$. Then, we can separate inequalities in three sets :
\begin{itemize}
\itemsep=0.9pt
\item $D^+$, that contains inequalities of the form $exp^- \leq \theta_f$, where $exp^-$ is either constant $a_f$ or an expression of the form $\theta_i - c_{i,f}$. Let $E^-$ denote expression appearing in inequalities of this form.
\item $D^-$, that contains inequalities of the form $\theta_f \leq exp^+$, where $exp^+$ is either constant $b_f$ or an expression of the form $\theta_i + c_{f,i}$. Let $E^+$ denote expression appearing in inequalities of this form.
\item $D^{\bar{\theta_f}}$ that contains all other inequalities.
\end{itemize}

The next step is to rewrite $D$ into an equivalent system of the form $D^{\bar{\theta_f}} \cup \max(E^-) \leq \theta_f \leq \min(E^+)$,and then eliminate $\theta_f$ to obtain a system of the form $D^{\bar{\theta_f}} \cup \max(E^-) \leq \min(E^+)$. This system can then be rewritten as
$D^{\bar{\theta_f}} \cup \{ exp^- \leq exp^+ \mid exp^- \in E^- \wedge exp^+ \in E^+\}$. One can easily see that in this new system, new constants appearing are obtained by addition or substraction of constants in $D$, and hence the obtained domain is still linear.

%\medskip
At this point, nothing guarantees that the obtained domain is bounded by larger $\alpha$'s and $\beta's$. Let us assume that in $D$, we have $0 \leq a_i \leq \alpha_i$, $0 \leq b_i \leq \beta_i$ and $-\alpha_k \leq c_{j,k} \leq \beta_j$. Then the last step of FME can double the maximal constants appearing in $D$ (for instance when obtaining $\theta_j - c_{j,f} \leq \theta_i + c_{f,i}$ or its equivalent $\theta_j - \theta_i \leq  + c_{f,i} +c_{j,f}$. However, values of $a_i$'s and $b_i's$ can only decrease, which, after normalization, guarantees boundedness of $\theta_j - \theta_i$.
\end{proof}

\begin{lemma}
\label{lem-red-can-li}
Reduction to canonical form preserves linearity.
\end{lemma}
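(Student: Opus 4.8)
The plan is to exploit the characterization of the canonical form as a shortest-path closure of the constraint graph, and then to combine this with the boundedness already established in Lemma~\ref{lem-boundedness}. Recall that the canonical form $D^*$ of a domain $D$ is obtained by replacing each constant $c_{ij}^*$ (and each $a_i^*, b_i^*$) by the weight of a shortest path between the corresponding nodes of the constraint graph, where the reference node carries the bounds $a_i \le \theta_i \le b_i$ as edges to and from $\theta_i$ of weights $-a_i$ and $b_i$. Because the domains we manipulate are bounded (Lemma~\ref{lem-boundedness}), the set $\llbracket D \rrbracket$ is nonempty and the constraint graph contains no negative cycle; hence shortest paths exist and may be chosen \emph{simple}, i.e.\ visiting each of the at most $|\mathsf{enabled}(M)|+1$ nodes at most once.

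First I would establish the ``linear'' part of the claim. By the linearity hypothesis on $D$, every edge weight of the constraint graph is an integer linear combination of the constants $\alpha(t_i)$ and $\beta(t_i)$. A simple path is a finite sum of such edge weights, and a finite sum of integer linear combinations is again an integer linear combination. Consequently each canonical constant $a_i^*, b_i^*, c_{ij}^*$ is an integer linear combination of the $\alpha$'s and $\beta$'s, which is exactly the form required by Definition~\ref{def-bounded-linear}, except that the coefficients are not yet shown to lie in the prescribed range.

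It then remains to bound these integer coefficients inside $[-K_\wnet, K_\wnet]$. Here I would invoke Lemma~\ref{lem-boundedness}, which gives $0 \le a_i^* \le \alpha(t_i)$, $0 \le b_i^* \le \beta(t_i)$ and $-\alpha(t_k) \le c_{jk}^* \le \beta(t_j)$: every canonical constant is bounded in magnitude by one of the $\beta$'s. Since the smallest granularity available among the generators is $\min_j \alpha(t_j)$, any integer coefficient $n$ on a generator for which the whole combination stays within such a magnitude bound must satisfy $|n| \le \lfloor \beta_i / \alpha_j \rfloor \le K_\wnet$ for suitable $i,j$; this is precisely the quantity defining $K_\wnet = \max_{i,j} \lfloor \beta_i / \alpha_j \rfloor$. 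After collecting like terms in the integer combination produced by the shortest path and applying this magnitude argument, no surviving coefficient can exceed $K_\wnet$ in absolute value, so $D^*$ is linear.

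The hard part will be this last step: controlling the growth of the coefficients. A shortest-path sum over up to $|T|+1$ nodes could, read naively, accumulate coefficients proportional to the path length, so one cannot simply read the bound off the number of edges. The delicate point is to argue that, once like terms are collected, the magnitude bound of Lemma~\ref{lem-boundedness} together with the strict positivity of the $\alpha$'s forces each coefficient back into $[-K_\wnet, K_\wnet]$. This coefficient-bounding argument parallels the corresponding step for TPNs in~\cite{BerthomieuD91} (Lemma~3), and adapting their reasoning to our constraint graph is what makes the conclusion go through.
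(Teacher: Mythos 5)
The first half of your proposal is exactly the paper's proof: the paper also reduces the claim to the shortest-path characterization of the canonical form, running it through the Floyd--Warshall update $Z(i,j):=\min\big(Z(i,j),\,Z(i,k)+Z(k,j)\big)$ and observing that each update replaces an entry by a sum of two entries, so the property ``is an integer linear combination of the $\alpha$'s and $\beta$'s'' is invariant. Crucially, the paper stops there: it makes no attempt, inside this lemma, to confine the coefficients to $[-K_\wnet,K_\wnet]$; the fact that only finitely many such combinations can arise is delegated to the adaptation of Lemma~3 of~\cite{BerthomieuD91}, invoked in the proof of Proposition~\ref{prop-fin-domain}. So the part of your argument that goes beyond the paper is precisely your third paragraph, and that is where there is a genuine gap.

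Your coefficient-bounding step is invalid as stated. From the magnitude bounds of Lemma~\ref{lem-boundedness} (e.g.\ $|c^*_{jk}|$ bounded by some $\beta_j$ or $\alpha_k$) one cannot conclude that each integer coefficient satisfies $|n|\leq\lfloor\beta_i/\alpha_j\rfloor$: collecting like terms only merges coefficients of the \emph{same} generator, while cancellation across \emph{distinct} generators remains possible. For instance with $\alpha_1=1$, $\alpha_2=2$, the combination $100\,\alpha_2-200\,\alpha_1=0$ satisfies every magnitude bound yet has coefficients far outside $[-K_\wnet,K_\wnet]$; since the $\alpha$'s and $\beta$'s are arbitrary rationals they are in general rationally dependent, so a bound on the \emph{value} of a combination constrains nothing about its \emph{representation}. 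Two further points undermine the granularity argument: the paper's nets explicitly allow $\alpha(t)=0$ (intervals $[0,\infty]$, $[0,3]$ occur throughout), so $\min_j\alpha(t_j)$ may be $0$ and $\lfloor\beta_i/\alpha_j\rfloor$ undefined; and what your simple-path observation actually yields is a coefficient bound proportional to the path length, i.e.\ $O(|T|)$ times the per-edge bound, not $K_\wnet$. (A small side remark: boundedness of the constants does not by itself give nonemptiness of $\llbracket D\rrbracket$, which you use to rule out negative cycles; nonemptiness comes from the successor construction only producing satisfiable domains.) The repair is the paper's own route: prove only closure of the linear-combination form here, and obtain finiteness of the achievable combinations from the separate counting argument of~\cite{BerthomieuD91} rather than from Lemma~\ref{lem-boundedness}.
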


\begin{proof}
It is well known that computing a canonical form from a domain $D$ represented by a DBM $Z_D$ amounts to computing the shortest path in a graph representing the constraints. Indeed, a DBM is in canonical form iff, for every pair of indexes $0\leq i,j \leq |T|$, and for every index $0 \leq k \leq |T|$ we have $Z(i,j) \leq Z(i,k) + Z(k,j)$. The Floyd Warshall algorithm computes iteratively updates of shortest distances by executing instructions of the form $Z(i,j) := \min(Z(i,j), Z(i,k) + Z(k,j))$. Hence, after each update, if $Z(i,j)$ is a linear combination of $\alpha's$ and $\beta's$, it remains a linear combination.
\end{proof}

\begin{lemma}
\label{lem-red-can-bound-li}
Fourier Motzkin elimination followed by reduction to canonical form preserves boundedness and linearity.
\end{lemma}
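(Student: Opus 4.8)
The plan is to read the composite operation as two transformations applied in sequence---first Fourier--Motzkin elimination, then reduction to a canonical form---and to verify that each one separately preserves both invariants, namely boundedness in the sense of Lemma~\ref{lem-boundedness} and linearity in the sense of Definition~\ref{def-bounded-linear}. Since both properties are stable under composition as soon as they are stable under each factor, it suffices to treat the two steps independently, feeding the output of the first as input to the second.

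Linearity is the easy half. By Lemma~\ref{lem-li-FME}, Fourier--Motzkin elimination maps a linear domain to a linear domain, so after the first step every constant is still an integral combination of the $\alpha_i$'s and $\beta_i$'s with coefficients in $[-K_\wnet,K_\wnet]$; by Lemma~\ref{lem-red-can-li} the subsequent canonicalization preserves linearity, so the final domain is linear. For boundedness, Lemma~\ref{lem-li-FME} again disposes of the elimination step, so the only genuinely new point to check is that reduction to canonical form does not push any constant outside the envelope $0\le a_i\le \alpha(t_i)$, $0\le b_i\le \beta(t_i)$, $-\alpha(t_k)\le c_{jk}\le \beta(t_j)$ of Lemma~\ref{lem-boundedness}.

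To establish this last point I would reuse the shortest-path description of canonicalization already invoked in the proof of Lemma~\ref{lem-red-can-li}: each Floyd--Warshall update replaces an entry by $\min\bigl(Z(i,j),Z(i,k)+Z(k,j)\bigr)$, so every constant can only \emph{decrease} during the closure. Consequently the upper bounds $b_i$ and the diagonal constants $c_{jk}$ only tighten and therefore stay below $\beta(t_i)$ and $\beta(t_j)$ respectively, while the lower bounds $a_i$, being the negated entries $-Z(0,i)$, can only increase yet remain nonnegative and, by feasibility of the domain (absence of a negative cycle, equivalently $a_i^\ast\le b_i^\ast$), stay finite. The main obstacle I anticipate is precisely the lower end of the envelope---checking that the tightened lower bounds do not exceed $\alpha(t_i)$ and that the diagonal constants do not fall below $-\alpha(t_k)$---which I would settle exactly as in the final paragraph of the proof of Lemma~\ref{lem-boundedness}, reading each diagonal constant of the canonical form as the supremum $Sup(\theta_j-\theta_k)$ over the bounded, nonnegative ranges permitted for the individual variables: since each $\theta_i$ ranges in $[a_i,b_i]$ with $0\le a_i\le\alpha(t_i)$ and $0\le b_i\le\beta(t_i)$ already guaranteed before closure, this supremum lies within $[-\alpha(t_k),\beta(t_j)]$, so no constant escapes the fixed envelope. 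Combining the two halves then shows that the composite operation preserves boundedness and linearity.
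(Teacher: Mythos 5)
Your proposal is correct and follows essentially the same route as the paper: linearity is delegated to Lemmas~\ref{lem-li-FME} and~\ref{lem-red-can-li}, and boundedness is settled by reading each canonical diagonal constant as $Sup(\theta_j-\theta_k)$ over variables whose unary bounds already satisfy $0\leq a_i\leq \alpha(t_i)$ and $0\leq b_i\leq \beta(t_i)$ after elimination, which is exactly the paper's closure argument yielding $-\alpha(t_k)\leq c^*_{jk}\leq \beta(t_j)$. One caveat: your intermediate claim that Lemma~\ref{lem-li-FME} ``again disposes of the elimination step'' for boundedness is over-strong, since that lemma's own proof concedes that elimination can temporarily double the diagonal constants (up to $c_{f,i}+c_{j,f}$) and itself defers their boundedness to normalization --- but your closing supremum argument does not depend on that claim, so your proof stands on the same footing as the paper's.
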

\begin{proof}
From Lemma~\ref{lem-li-FME} and Lemma~\ref{lem-red-can-li}, we know that domains generated by FME + canonical reduction are linear.
However, after FME, the domain can contain inequalities of the form $a'_i \leq \theta_i \leq b'_i$ with $a'_i \leq a_i \leq \alpha_i$ and $b'_i \leq b_i \leq \beta_i$. However, it may also contain inequalities of the form $ x \leq \theta_i -\theta_j \leq y$ where $-2\cdot \max(\alpha_i) \leq x$ and $y \leq \cdot \max(\beta_i)$. Now, using the bounds on values of $\theta_i's$, the canonical form calculus will infer $ a'_i - b_j \leq \theta_i -\theta_j \leq b'_i - a'_j$, and we will have $ -\alpha_i \leq \theta_i -\theta_j \leq \beta_j$.
\end{proof}

\begin{lemma}(Bounded Linearity)
\label{lem-linearity}
For all $i,j,k$ the constants $a_i, b_i$ and $c_{jk}$, of a domain of any state class graph are linear in $\alpha$'s and $\beta$'s
\end{lemma}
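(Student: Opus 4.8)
The plan is to prove the claim by induction on the construction of the state class graph, reusing the preservation results established above (Lemmas~\ref{lem-li-FME}, \ref{lem-red-can-li} and~\ref{lem-red-can-bound-li}) and combining them with the boundedness of Lemma~\ref{lem-boundedness} to keep the integer coefficients under control. For the base case, I would observe that the initial domain $D_0 = \{ \alpha(t_i) \leq \theta_i \leq \beta(t_i) \mid t_i \in \mathsf{enabled}(M_0)\}$ is linear in the sense of Definition~\ref{def-bounded-linear}: every bound is exactly some $\alpha(t_i)$ or $\beta(t_i)$, i.e. a trivial linear combination of the generators with a single coefficient equal to $1 \in [-K_\wnet,K_\wnet]$, and $D_0$ carries no diagonal constraint.

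For the inductive step I would assume that a reachable domain $D$ is linear and trace linearity through each operation that builds a successor $\mathsf{next}_r(D,t_f)$, following the steps of Definition~\ref{def-successors}. Time progress substitutes $\theta_i \mapsto \theta_i' - b$ for an existing bound $b$, which is a linear expression, and then eliminates the timed-out variables, a Fourier--Motzkin step covered by Lemma~\ref{lem-li-FME}. Adding the firing condition $b_r \leq \theta_f \leq b_{r+1}$ together with the inequalities $\theta_f \leq \theta_j$ introduces only constants already occurring in $D$ (and $0$), hence linear ones; the substitution $\theta_j := \theta_f + \theta_j'$ is again linear; the elimination of $\theta_f$ and of the variables of transitions disabled by $t_f$ is Fourier--Motzkin, handled by Lemma~\ref{lem-li-FME}; and the fresh constraints $\alpha(t_i) \leq \theta_i \leq \beta(t_i)$ for newly enabled transitions, as well as $\theta_k = 0$ for timed-out transitions becoming fully enabled, only re-introduce the generators themselves or the constant $0$. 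The concluding normalization to canonical form preserves linearity by Lemma~\ref{lem-red-can-li}, and the combined elimination-then-canonicalization is exactly the situation treated by Lemma~\ref{lem-red-can-bound-li}. Each operation therefore maps a linear domain to a linear one, closing the induction on the shape of the constants.

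The step I expect to be the real obstacle is not preserving the linear \emph{form} --- that is essentially syntactic and already packaged in the preservation lemmas --- but bounding the \emph{magnitude} of the integer coefficients, since Definition~\ref{def-bounded-linear} insists they stay in $[-K_\wnet,K_\wnet]$. Fourier--Motzkin combinations and shortest-path relaxations can in principle let coefficients accumulate across many firings. The way I would resolve this is to invoke Lemma~\ref{lem-boundedness} in tandem with linearity: every constant satisfies $0 \leq a_i \leq \alpha(t_i)$, $0 \leq b_i \leq \beta(t_i)$ and $-\alpha(t_k) \leq c_{jk} \leq \beta(t_j)$, so each constant is bounded in absolute value by $\max_i \beta_i$ while being, as a linear combination, a multiple of the resolution $\min_j \alpha_j$ of the system. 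A bounded value of this kind admits a representation as an integer combination of the $\alpha$'s and $\beta$'s in which no coefficient exceeds $\max_{i,j}\lfloor \beta_i/\alpha_j\rfloor = K_\wnet$; whenever an elimination step threatens to inflate a coefficient, boundedness forces it back into range after normalization. Thus boundedness and linearity must be carried \emph{jointly} through the induction --- which is precisely what Lemma~\ref{lem-red-can-bound-li} does --- and their conjunction yields the stated coefficient bound, completing the proof.
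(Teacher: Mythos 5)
Your induction is structurally the same as the paper's proof: base case on $D_0$, then preservation of the linear form of constants through the successor construction, using Lemmas~\ref{lem-li-FME}, \ref{lem-red-can-li} and~\ref{lem-red-can-bound-li}, plus the observation that the fresh constraints $\alpha(t_i)\leq\theta_i\leq\beta(t_i)$ and $\theta_k=0$ only reintroduce generators. Your step-by-step tracing of Definition~\ref{def-successors} (time-progress substitution $\theta_i\mapsto\theta_i'-b$, firing condition, substitution $\theta_j:=\theta_f+\theta_j'$, Fourier--Motzkin elimination, normalization) is in fact more explicit than the paper's own proof, which compresses all of this into an appeal to Lemma~\ref{lem-red-can-bound-li} together with the remark that the new constraints are disjoint from the old ones, so canonicalization commutes with their insertion. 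Up to that point your argument is sound.

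The genuine problem is your final paragraph, where you try to derive the coefficient bound $[-K_\wnet,K_\wnet]$ of Definition~\ref{def-bounded-linear} from boundedness (Lemma~\ref{lem-boundedness}). Two steps there are incorrect. First, an integer combination of the $\alpha$'s and $\beta$'s is \emph{not} in general a multiple of $\min_j\alpha_j$: with $\alpha_1=2$ and $\alpha_2=3$, the combination $\alpha_2-\alpha_1=1$ is not a multiple of $2$. Second, boundedness of a value does not imply it admits a representation with coefficients in $[-K_\wnet,K_\wnet]$: take a single transition with $\alpha=0.9$ and $\beta=1$, so that $K_\wnet=\lfloor 1/0.9\rfloor=1$; the value $2\beta-2\alpha=0.2$ lies in $[0,\beta]$, yet no combination $c_1\alpha+c_2\beta$ with $c_1,c_2\in\{-1,0,1\}$ equals $0.2$ (the reachable values are $0,\pm 0.1,\pm 0.9,\pm 1,\pm 1.9$). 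So the claim that an inflated coefficient is ``forced back into range after normalization'' does not follow from the two lemmas you cite. Note that the paper does not attempt this step: its proof of the present lemma establishes only closure of the constants under the domain operations, and the quantitative fact that only finitely many such linear combinations can arise is delegated, in the proof of Proposition~\ref{prop-fin-domain}, to~\cite{BerthomieuD91}. To repair your argument you would have to track coefficient growth operation by operation (each elimination or relaxation adds or subtracts at most two existing constants, so growth per step is additive, and the argument of~\cite{BerthomieuD91} shows that the interval bounds of Lemma~\ref{lem-boundedness} cap the combinations that can actually occur), or simply invoke that reference as the paper does, rather than deducing representability with small coefficients from boundedness alone.
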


\begin{proof}
Clearly, the constraints in $D_0$ are linear in $\alpha$'s and $\beta$'s see Definition (\ref{def-stateclassgraph}), it remains to prove that, if $D$ is bounded and linear, then for every fired transition $t$ and chosen time bound $r$,  $\mathsf{next}_k(D_r,t)$ is still bounded and linear.
We already know that Fourier Motzkin Elimination, followed by canonical form reduction preserves boundedness and linearity (Lemma \ref{lem-red-can-bound-li}). Addition of new constraints do not change constants of existing constraints, and the constants of new constraints (of the form $\alpha_i \leq \theta_i \leq  \beta_i$ are already linear. Further, these constraints are completely disjoint from the rest of the domain (there is no constraint of the form $\theta_k - \theta_i \leq c_{k,i}$ for a newly enabled transition $t_i$). Hence computing a canonical form before or after inserting these variables does not change the canonical domain. So, computing $D^*$ after new constraints insertion preserves linearity, and
thus, the constants appearing in the constraints in domain $\mathsf{next}_k(D_r,t)$ are bounded and linear w.r.t. $\alpha$'s and $\beta$'s.
\end{proof}

\noindent{\bf Proposition~\ref{prop-fin-domain}}
The set of firing domains in $SCG(\mathcal{W})$ is finite.

\begin{proof}
We know that a domain in a $SCG$ is of form :
$$
\begin{cases}
a_i^* \leq \theta_i \leq b_{i}^*\\
\theta_j - \theta_k \leq c_{jk}^*.
\end{cases}
$$
By boundedness (Lemma \ref{lem-boundedness}) we have proved that the constants $a^{*'}s , b^{*'}s$ and $c^{*'}s $ are bounded above and below by $\alpha's$ and $\beta's$ up to sign, we have also proved that they are linear combinations of $\alpha's$ and $\beta's$ (Lemma~\ref{lem-linearity}). Now, it remains to show that there can  only be finitely many such linear combinations, which was shown in~\cite{BerthomieuD91}.
Hence, the set of domains appearing in $SCG(\wnet)$ is finite.
\end{proof}

\end{document}